\newtheorem{theorem}{Theorem}
\newtheorem{proposition}{Proposition}
\newtheorem{corollary}{Corollary}
\newtheorem{lemma}{Lemma}
\newtheorem{remark}{Remark}
\newcommand{\be}{\begin{equation}}
\newcommand{\ee}{\end{equation}}
\newcommand{\bea}{\begin{eqnarray}}
\newcommand{\eea}{\end{eqnarray}}
\newcommand{\ba}{\begin{array}}
\newcommand{\ea}{\end{array}}
\newcommand{\bean}{\begin{eqnarray*}}
\newcommand{\eean}{\end{eqnarray*}}
\newcommand{\pa}{\partial}
\begin{document}

\title{Bosonic construction of CKP tau function}
\author{Shen Wang$^{1}$, Wenchuang Guan$^{1}$, Jipeng Cheng$^{1,2*}$}
\dedicatory {$^{1}$ School of Mathematics, China University of
Mining and Technology, \ Xuzhou, Jiangsu 221116, China,\\
$^{2}$ Jiangsu Center for Applied Mathematics (CUMT), \ Xuzhou, Jiangsu 221116, China.
}

\thanks{*Corresponding author. Email: chengjp@cumt.edu.cn.}
\begin{abstract}
The CKP tau function has been an important topic in mathematical physics. In this paper, the inverse of vacuum expectation value of exponential of certain bosonic fields, is showed to be the CKP tau function given by Chang and Wu, in the language of CKP Darboux transformation. In fact, computation of the above vacuum expectation value is usually quite difficult, since the square of bosonic fields is usually not zero. Here the corresponding vacuum expectation value is understood as successive application of CKP Darboux transformations, so that we can compute it by using the methods of integrable systems, where a useful formula is given. For applications, we construct solutions of KdV hierarchy by vacuum expectation value of bosonic fields, by the fact that KdV hierarchy is the 2--reduction of CKP hierarchy.\\
{\bf Keywords}:  CKP hierarchy, tau function, Darboux transformation, free boson, KdV hierarchy.\\
{\bf MSC 2020}: 35Q51, 35Q53, 37K10\\
{\bf PACS}: 02.30.Ik
\end{abstract}
\maketitle

\tableofcontents

\section{Introduction}
\subsection{Backgrounds}
\subsubsection{KP hierarchy}
KP hierarchy \cite{Date1983WorldSci} has been playing an important role in mathematical and theoretical physics, which is defined by
\begin{align}\label{KPLax-equ}
L_{t_n}=[(L^n)_{\geq 0},L],\  n=1,2,\cdots,
\end{align}
with Lax operator
$$L=\pa+\sum_{j=1}^{\infty}u_{j+1}(t)\pa^{-j},$$
and $\pa=\pa_x$, $t=(t_1=x,t_2,t_3,\cdots)$, $[A,B]=AB-BA$ and $(\sum_i a_i \pa^i)_{\geq 0}=\sum_{i\geq 0} a_i \pa^i$.
KP hierarchy has many different formulations. Among them, the famous one is given by the tau function, that is
\begin{align}\label{KP-bilinintro-tau}
{\rm Res}_\lambda\tau_{{\rm KP}}(t-[\lambda^{-1}])\tau_{{\rm KP}}(t'+[\lambda^{-1}])e^{\widetilde{\xi}(t-t',\lambda)}=0,
\end{align}
where $[\lambda^{-1}]=(\lambda^{-1},\frac{\lambda^{-2}}{2},\cdots)$ and $\widetilde{\xi}(t,\lambda)=\sum^{\infty}_{n=1} t_n \lambda^n$.
If introduce wave function $\psi_{{\rm KP}}(t,\lambda)$ and adjoint wave function $\psi_{{\rm KP}}^*(t,\lambda)$ by the way below
\begin{align}\label{KP-psi-tau}
\psi_{{\rm KP}}(t,\lambda)=\frac{\tau_{{\rm KP}}(t-[\lambda^{-1}])}{\tau_{{\rm KP}}(t)} e^{\widetilde{\xi}(t,\lambda)},\ \
\psi^*_{{\rm KP}}(t,\lambda)=\frac{\tau_{{\rm KP}}(t+[\lambda^{-1}])}{\tau_{{\rm KP}}(t)} e^{-\widetilde{\xi}(t,\lambda)},
\end{align}
then one can find
\begin{align}\label{KP-bili-psi}
{\rm Res}_\lambda\psi_{{\rm KP}}(t,\lambda)\psi^*_{{\rm KP}}(t',\lambda)=0.
\end{align}
Next if introduce a dressing operator $$W=1+w_1\pa^{-1}+w_2\pa^{-2}+w_3\pa^{-3}+\cdots,$$
such that $\psi_{{\rm KP}}(t,\lambda)=W(e^{\widetilde{\xi}(t,\lambda)})$, then
\begin{align*}
W_{t_n}=-(W\partial^nW^{-1})_{<0}W,
\end{align*}
and $W\pa W^{-1}$ is the required KP Lax operator $L$ in \eqref{KPLax-equ}, that is $L=W\pa W^{-1}$. Further by \eqref{KP-psi-tau}, the coefficients of $L$ can be expressed by KP tau function in the way below
\begin{align*}
u_2&=\pa^2_x\log \tau_{{\rm KP}},\\ u_3&=\frac{1}{2}(\pa_x\pa_{t_2}-\pa_x^3)\log \tau_{{\rm KP}},\\ u_4&=\frac{1}{6}\Big(2\pa_x\pa_{t_3}+\pa_x^4-3\pa_x^2\pa_{t_3}-\pa_x^2\Big)\log \tau_{{\rm KP}}.
\end{align*}

KP tau function has many different forms (see \cite{Harnad2021} and its references). For example, Schur polynomials $p_\lambda(t)$ \cite{Harnad2021,satorims,willox2004} are always KP tau functions. In Kyoto school's work, tau function is viewed as one point of infinite dimensional Grassmannian \cite{Harnad2021,satorims,willox2004}. KP tau function can be written into the linear combination of Schur polynomials, that is
\begin{align*}
\tau_{{\rm KP}}(t)=\sum_{\lambda}c_{\lambda}p_{\lambda}(t).
\end{align*}
Here $c_{\lambda}$ can be viewed as the Pl$\ddot{u}$cker coordinates of $\tau_{{\rm KP}}(t)$ satisfying the Pl$\ddot{u}$cker relations \cite{Harnad2021,satorims,willox2004}
\begin{align*}
\sum_{k\geq0}(-1)^{k+l}c_{(\alpha_1-1,\cdots\alpha_l-1,\beta_{k+1}-k+l+1,\alpha_{l+1},\cdots)}\cdot c_{(\beta_1+1,\cdots,\beta_k+1,\beta_{k+2},\cdots)}=0,
\end{align*}
where $\alpha=(\alpha_1,\alpha_2,\cdots),\  \beta=(\beta_1,\beta_2,\cdots)$ are partitions,
which is another equivalent formulation for bilinear equation \eqref{KP-bilinintro-tau}.
Other famous KP tau functions (see \cite{Harnad2021} and its references),  include elliptic functions, multi--soliton solution, rational solution, quasi--periodic solution related to algebraic curves, matrix model integrals, Hurwitz numbers, Kontsevich--Witten tau function and so on.
Every different manifestation of KP tau functions stands for the corresponding application of KP theory in mathematical and theoretical physics.

Due to the great success for KP theory, one expects that other KP theories, such as KP hierarchy of B and C types (BKP and CKP for short), also have many applications. BKP hierarchy is given by imposing BKP constraints on KP Lax operator in the way below \cite{Date1981JPAJ,Zabrodin2021,Date1983WorldSci}
\begin{align*}
L^*=-\pa L\pa^{-1},
\end{align*}
where $(\sum_i a_i \pa^i)^*=\sum_{i} (-\pa)^i a_i$.
BKP hierarchy has almost the same structure as the usual KP hierarchy, since it has its own tau functions $\tau_{{\rm BKP}}$, which is related with wave function $\psi_{{\rm BKP}}$ by
\begin{align*}
\psi_{{\rm BKP}}(t_o,\lambda)=\frac{\tau_{{\rm BKP}}(t_o-2[\lambda^{-1}]_o)}{\tau_{{\rm BKP}}(t_o)} e^{\xi(t_o,\lambda)},
\end{align*}
where $t_o=(t_1,t_3,t_5,\cdots),\ [\lambda^{-1}]_o=(\lambda^{-1},\frac{\lambda^{-3}}{3},\cdots)$,  $\xi(t_o,\lambda)=\sum_{n=0}^{\infty}t_{2n+1}\lambda^{2n+1}$.
The famous example of BKP tau functions is Schur--$Q$ functions \cite{youbkp,nimmojpa}, by which $\tau_{{\rm BKP}}$ can be expanded in terms of Cartan coordinates  satisfying Cartan relations \cite{Harnad2021} instead of the Pl$\ddot{u}$cker coordinates and Pl$\ddot{u}$cker relations in KP theory.
BKP theory also has many applications in mathematical physics and theoretical physics, such as the Brezin--Gross--Witten theory \cite{Liuxiaobo2022,Alexandrov2023}. The reason why BKP theory can be so successful as KP theory is the BKP's own tau function $\tau_{\rm BKP}$.

\subsubsection{CKP hierarchy}\label{subsection:CKP}
Different from KP and BKP theories,
the study of CKP hierarchy is quite difficult and challenging, since the definition of CKP own tau function is a big barrier. CKP hierarchy is also a KP sub--hierarchy \cite{Date1981JPAJ,Zabrodin2021} defined by
\begin{align*}
L^*=-L,
\end{align*}
and the corresponding bilinear equation is given by the wave function $\psi(t_o,\lambda)$ as follows,
\begin{align}\label{CKP-bilinintro-psi}
{\rm Res}_\lambda\psi(t_o,\lambda)\psi(t'_o,-\lambda)=0.
\end{align}
As a KP sub--hierarchy, CKP has a tau function inherited from the KP hierarchy, that is
\begin{align}\label{CKP-wave-tau-KP}
\psi(t_o,\lambda)=\frac{\tau_{{\rm KP}}(\dot{ t}-[\lambda^{-1}])}{\tau_{{\rm KP}}(\dot{t})} e^{\xi(t_o,\lambda)},
\end{align}
where $\dot{t}=(t_1,0,t_3,0,\cdots)$.
However, this tau function is not quite applicable, due to the shift of even time part $t_e=(t_2,t_4,\cdots)$ in \eqref{CKP-wave-tau-KP}. In a long time, peoples hope that CKP also has its own tau function just like BKP.

In \cite{Date1981JPAJ}, Date, Jimbo, Kashiwara and Miwa (DJKM) proposed that CKP tau function can be constructed by free bosons
\begin{eqnarray}\label{tau-vac-ht}
\tau_{{\rm CKP}}(t_o)=\langle 0| e^{H(t_o)}g |0\rangle^{-2},
\end{eqnarray}
where
$H(t_o)=\frac{1}{2}\sum_{n\geq0} t_{2n+1}\sum_{j\in\mathbb{Z}+1/2}(-1)^{j+\frac{1}{2}}\phi_j\phi_{-j-2n-1}$ and
$g\in Sp_\infty$ with
\begin{align}\label{intro:sp}
Sp_{\infty}=\Big\{e^{A_1}e^{A_2}\cdots e^{A_k}\biggr|A_{l}=\sum_{i,j\in \mathbb{Z}+\frac{1}{2}}a_{l,ij}:\phi_{i}\phi_{j}:,
\ \text{$a_{l,ij}=0$ for $|i+j|\gg0$}\Big\}.
\end{align}
Here $\phi_i$ ($i\in\mathbb{Z}+1/2$) is the free boson satisfying \cite{vandeleur2012}
\begin{eqnarray*}
\phi_i\phi_j-\phi_j\phi_i=(-1)^{j-\frac{1}{2}}\delta_{i,-j}.
\end{eqnarray*}
Vacuum vectors $|0\rangle$ and $\langle 0|$ are defined by
\begin{align*}
\phi_{-j}|0\rangle=0,\quad\langle 0|\phi_{j}=0,\quad j>0,
\end{align*}
and normal ordering is given by
\begin{align*}
:\phi_i\phi_j:=\left\{
\begin{array}{ll}
\phi_i\phi_j, & \hbox{${\rm if}\ i\geq j$,} \\
\phi_j\phi_i, & \hbox{${\rm if}\ j>i$.}
  \end{array}
\right.
\end{align*}

Following DJKM's thought, CKP hierarchy is constructed by van de Leur, Orlov and Shiota in \cite{vandeleur2012} as a sub--hierarchy of super BKP hierarchy \cite{kac1989}. In this case, there is a unique tau function $\tau(t_o,t_s)$ depending on super variables $t_s=(t_{\frac{1}{2}},t_{\frac{3}{2}},\cdots)$, which can be expanded by the way below
\begin{align*}
\tau(t_o,t_s)=\tau_0(t_o)+\sum_{\nu\in\mathbb{Z}^{>1}_{odd}}\tau_{(\nu,1)}(t_o)t_{\frac{\nu}{2}}t_{\frac{1}{2}}+{\rm terms\ of\ at\ least\ two }\ t_s.
\end{align*}
Then the CKP wave function $\psi(t_o,\lambda)$ can be written as
\begin{align}\label{vandeleur-tau}
\psi(t_o,\lambda)=\frac{\tau_0(t_o+2[\lambda^{-1}]_o)+ \sum\limits_{\nu\in\mathbb{Z}^{>1}_{odd}}\tau_{(\nu,1)}(t_o+2[\lambda^{-1}]_o)\lambda^{-\frac{\nu+1}{2}}}
{\tau_0(t_o)}e^{\xi(t_o,\lambda)},
\end{align}
where $\tau_0(t_o)$ and $\tau_{(\nu,1)}(t_o)$ can be expressed by using free bosons
\begin{align*}
\tau_0(t_o)=\langle 0 | e^{H(t_o)}g|0\rangle,\ \tau_{(\nu,1)}(t_o)=4\langle \nu,1|e^{H(t_o)}g|0\rangle,\ \cdots,
\end{align*}
with $\langle \nu,1|=\langle 0|H_{\frac{\nu}{2}}H_{\frac{1}{2}}$ and $H_{\frac{\nu}{2}}$ in \eqref{hexpression}.
Therefore by \eqref{tau-vac-ht}, one can find that
\begin{align}\label{tau-CKP-0}
\tau_{\rm CKP}(t_o)=\tau_0(t_o)^{-2}.
\end{align}
Further for CKP Lax operator $L=\partial+\sum_{j=1}u_{j+1}\partial^{-j}$, we can find by \eqref{vandeleur-tau} that
\begin{align*}
u_2(t_o)=2\partial_x^2\log(\tau_0(t_o))=-\partial_x^2\log(\tau_{\rm KP}(\dot{t})),
\end{align*}
which reminds us that $\tau_0(t_o)^{-2}$ should be the tau function
$\tau_{\rm KP}(\dot{t})$ of CKP hierarchy inherited from KP tau function, that is,
\begin{align}\label{tau-KP-0}
\tau_{\rm KP}(\dot{t})=\tau_0(t_o)^{-2}.
\end{align}
Thus by (\ref{tau-CKP-0}), we can believe
$$\tau_{{\rm CKP}}(t_{o})=\tau_{\rm KP}(\dot{t}).$$
So $\tau_{{\rm CKP}}(t_{o})$ is still not the real CKP's own tau function.

Fortunately in 2013, Chang and Wu (CW) succeeded in  introducing CKP's own tau function $\tau_{{\rm CW}}(t_o)$ by the way below \cite{chang2013}
\begin{align}\label{wuchaozhong-tau}
\psi(t_o,\lambda)&=\left(1+\frac{1}{\lambda}\pa_x{\rm log}\frac{\tau_{{\rm CW}}(t_o-2[\lambda^{-1}]_o)}{\tau_{\rm CW}(t_o)}\right)^{\frac{1}{2}}
\frac{\tau_{\rm CW}(t_o-2[\lambda^{-1}]_o)}{\tau_{\rm CW}(t_o)}e^{\xi(t_o,\lambda)}\nonumber\\
&=(2\lambda)^{-\frac{1}{2}}\sqrt{\pa_x\left(\frac{\tau_{{\rm CW}}(t_o-2[\lambda^{-1}]_o)}{\tau_{\rm CW}(t_o)}e^{\xi(t_o,\lambda)}\right)},
\end{align}
which makes it possible to study CKP as the BKP case. Later in \cite{cheng2014}, it is proved that
CKP tau function $\tau_{{\rm CW}}(t_o)$ related with $\tau_{{\rm KP}}(t)$ in the way below
\begin{align}
\tau_{{\rm CW}}(t_o)=\sqrt{\tau_{{\rm KP}}(\dot{t})}.\label{tau-CKP-KP}
\end{align}
Recently in \cite{Krichever-CMP-2021}, it is showed that if
\begin{align*}
\pa_{t_2}\log\tau_{{\rm KP}}(t)|_{t_{e}=0}=0,
\end{align*}
then $\tau_{{\rm KP}}(\dot{t})$ can be viewed as the CKP tau function inherited from KP hierarchy. Then in \cite{Arthamonov2023}, CKP hierarchy is investigated in the framework of Lagrangian Grassmannian and the corresponding hyper--determinant relations are considered. In addition, C--Toda, multi--component CKP and their solutions have been studied in \cite{chang-hu-li2018,lili2019,Krichever-Zabrodin-LMP-2021,
Zabrodin2023,vandeleur-CKP-2023}.

From above history of CKP tau function, there is still a quite basic question unsolved completely in the bosonic construction of CKP tau function. In fact by \eqref{tau-KP-0} and \eqref{tau-CKP-KP}, one may want to know:\\

\begin{tcolorbox}{\textbf{Question:}}

\textit{Whether $\tau_0(t_o)^{-1}=\langle 0| e^{H(t_o)}g |0\rangle^{-1}$ can be viewed as the CKP tau function $\tau_{{\rm CW}}(t_o)$?}

\end{tcolorbox}
\noindent Note that by \eqref{vandeleur-tau} and \eqref{wuchaozhong-tau}, we can find $u_2$ in CKP Lax operator can be expressed in the way below
\begin{align*}
u_2=2\pa^2_x\log\tau_{{\rm CW}}(t_o)=-2\pa_x^2\log \tau_{0}(t_o),
\end{align*}
which preliminarily hints that $\tau_{{\rm CW}}(t_o)=\tau_{0}(t_o)^{-1}$.  However, the complete solution of the question above needs us to prove\\

\fbox{\textit{
{\it $\langle 0| e^{H(t_o)}g |0\rangle^{-1}$
satisfies CKP bilinear equation \eqref{CKP-bilinintro-psi} by setting $\tau_{{\rm CW}}(t_o)=\langle 0| e^{H(t_o)}g |0\rangle^{-1}$ in \eqref{wuchaozhong-tau}.}
}}\\

\noindent Due to the square root term in \eqref{wuchaozhong-tau}, it is really quite non--trivial to do this. Here in this paper, we will try to solve this question by using CKP Darboux transformation.

\subsection{Main results}
The key of this paper is squared eigenfunction potential (SEP) $\Omega(q,r)$ defined by \cite{Oevelpa1993}
\begin{align}\label{SEP-q-r}
\Omega(q,r)_{t_n}={\rm Res}_{\pa}(\pa^{-1}r(L^n)_{\geq 0}q\pa^{-1}),\
\end{align}
where $q$ and $r$ are the eigenfunction and adjoint eigenfunction respectively, corresponding to the KP Lax operator $L$
\begin{align*}
q_{t_n}=(L^n)_{\geq 0}(q),\ r_{t_n}=-(L^n)^*_{\geq 0}(r).
\end{align*}
Note that $\Omega(q,r)$ can be up to some integral constant and when $n=1$, one can find
$$\Omega(q,r)_x=qr.$$
For CKP case, any adjoint eigenfunction $r$ can also be viewed as eigenfunction $q$ by
$-(L^n)^*_{\geq 0}=(L^n)_{\geq 0}, \ n\in\mathbb{Z}^{>0}_{\rm odd}$. Therefore
CKP SEP \cite{cheng2014} can be defined by two CKP eigenfunction $q_1$
and $q_2$ satisfying $\Omega(q_1,q_2)=\Omega(q_2,q_1)$.
Our first main result is about the changes of CKP tau function by using CKP SEP, which is given by the theorem below.
\begin{theorem}\label{Th:darboux-bili}
Given CKP tau function $\tau_{{\rm CW}}(t_o)$
\begin{align*}
\tau_{{\rm CW}}^{\{1\}}(t_o)=\Omega\big(q(t_o),q(t_o)\big)^{\frac{1}{2}}\tau_{{\rm CW}}(t_o),
\end{align*}
is still a CKP tau function satisfying CKP bilinear equation \eqref{CKP-bilinintro-psi} by \eqref{wuchaozhong-tau}, where $q(t_o)$ is a CKP eigenfunction. That is if set
\begin{align*}
\psi^{\{1\}}(t_o,\lambda)=\sqrt{\frac{1}{2\lambda}\pa_x\left(\frac{\tau_{{\rm CW}}^{\{1\}}(t_o-2[\lambda^{-1}]_o)}{\tau_{{\rm CW}}^{\{1\}}(t_o)}e^{\xi(t_o,\lambda)}\right)^2},
\end{align*}
then
\begin{align*}
{\rm Res}_\lambda\psi^{\{1\}}(t_o,\lambda)\psi^{\{1\}}(t'_o,-\lambda)=0.
\end{align*}
\end{theorem}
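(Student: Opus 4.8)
The plan is to realize the assignment $\tau_{\rm CW}\mapsto\tau_{\rm CW}^{\{1\}}=\Omega(q,q)^{1/2}\tau_{\rm CW}$ as an honest Darboux transformation at the level of Lax operators and then to transport the bilinear identity through it. Since $\tau_{\rm CW}=\sqrt{\tau_{\rm KP}(\dot{t})}$ by \eqref{tau-CKP-KP}, squaring shows the asserted map is equivalent, on the odd--time locus, to $\tau_{\rm KP}\mapsto\Omega(q,q)\,\tau_{\rm KP}$, which is precisely the effect of the binary (squared--eigenfunction) Darboux transformation
\[
T=1-q\,\Omega(q,q)^{-1}\,\pa_x^{-1}\,q
\]
of Oevel \cite{Oevelpa1993}, built from the CKP eigenfunction $q$ in its double role as eigenfunction and adjoint eigenfunction. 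First I would recall the standard facts that $T$ sends the KP Lax operator $L$ to $\tilde L=TLT^{-1}$, multiplies the KP tau function by the potential $\Omega(q,q)$, and carries the KP wave and adjoint wave functions to new ones that again satisfy the KP bilinear identity \eqref{KP-bili-psi}.

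The second step is to check that this transformation respects the $C$--type reduction. Using $\pa_x\Omega(q,q)=q^{2}$ together with the operator identity $\pa_x^{-1}f_x\pa_x^{-1}=f\pa_x^{-1}-\pa_x^{-1}f$, a short computation with $T^{*}=1+q\,\pa_x^{-1}\Omega(q,q)^{-1}q$ gives the orthogonality relation $T^{*}T=1$; hence from $L^{*}=-L$ one obtains $\tilde L^{*}=(T^{*})^{-1}L^{*}T^{*}=-TLT^{-1}=-\tilde L$, so $\tilde L$ is once more a CKP Lax operator. This orthogonality is the conceptual reason that the CKP tau function picks up only the square root $\Omega(q,q)^{1/2}$ rather than $\Omega(q,q)$ itself: $T^{*}T=1$ is exactly the operator shadow of the half--power relation $\tau_{\rm CW}=\tau_{\rm KP}^{1/2}$.

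The last and hardest step is to reconcile this operator picture with the square--root formula \eqref{wuchaozhong-tau} and thereby read off the bilinear equation for $\psi^{\{1\}}$. The clean input at the KP level is the useful formula
\[
\Omega(q,q)(t-[\lambda^{-1}])\,\psi_{\rm KP}(t,\lambda)=\Omega(q,q)(t)\,\psi_{\rm KP}(t,\lambda)-q(t)\,\Omega\big(q,\psi_{\rm KP}(\cdot,\lambda)\big)(t),
\]
which identifies the tau--function shift of $\Omega(q,q)$ with the normalized action of $T$ on the wave function. I would establish its CKP counterpart, adapted to the half--shift $t_o\mapsto t_o-2[\lambda^{-1}]_o$, so as to show that the $\psi^{\{1\}}$ assembled from $\tau_{\rm CW}^{\{1\}}$ coincides with the CKP wave function of the CKP Lax operator $\tilde L$; by the Chang--Wu characterization the latter then satisfies ${\rm Res}_\lambda\,\psi^{\{1\}}(t_o,\lambda)\psi^{\{1\}}(t_o',-\lambda)=0$ automatically, which is \eqref{CKP-bilinintro-psi}. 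I expect the main obstacle to be precisely this square--root bookkeeping: unlike in the KP case, $\psi$ is a square root of a tau--function ratio, so the useful formula and the choice of branch in $\tau_{\rm CW}=\tau_{\rm KP}^{1/2}$ must be controlled simultaneously before the residue cancellation goes through, and it is here that the nontrivial content of the theorem resides.
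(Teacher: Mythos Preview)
Your outline is sound and converges on the same crux as the paper: the heart of the matter is to verify that the $\psi^{\{1\}}$ built from $\tau_{\rm CW}^{\{1\}}=\Omega(q,q)^{1/2}\tau_{\rm CW}$ via the square--root formula \eqref{wuchaozhong-tau} coincides with $T[q](\psi)=\psi-q\,\Omega(q,q)^{-1}\Omega(q,\psi)$. The paper secures this through Lemma~\ref{lemma:SEP-t-2z}, the CKP--native shift identity
\[
\Omega(q,q)(t_o-2[\lambda^{-1}]_o)-\Omega(q,q)(t_o)=-\frac{2\lambda}{\varphi^2(t_o,\lambda)}\,\Omega\big(q,\psi(t_o,\lambda)\big)^2,
\]
which makes the square inside \eqref{wuchaozhong-tau} collapse to a perfect square. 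This is precisely the ``CKP counterpart'' you say you would establish; you have correctly located the nontrivial step.

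Where your plan diverges is in the wrap--up. Once $\psi^{\{1\}}=T[q](\psi)$ is known, the paper does not appeal to any structural characterization; it simply expands ${\rm Res}_\lambda\psi^{\{1\}}(t_o,\lambda)\psi^{\{1\}}(t'_o,-\lambda)$ into four terms and kills them using \eqref{CKP-bilinintro-psi}, \eqref{CKP-SEP-q} and the SEP residue identity \eqref{CKP-SEP-res}. Your route instead proves $T^*T=1$ (correct, and a nice observation) to conclude that $\tilde L=TLT^{-1}$ remains CKP, whence $T(\psi)$ is a CKP wave function and the bilinear identity follows. That works, though the phrase ``by the Chang--Wu characterization'' is a slight misnomer: what you actually use is only that the (adjoint) wave function of a CKP Lax operator satisfies \eqref{CKP-bilinintro-psi}, together with $(T^{-1})^*=T$ so that $\psi^{\{1\}*}(\lambda)=\psi^{\{1\}}(-\lambda)$.

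One further comment: your detour through $\tau_{\rm CW}=\sqrt{\tau_{\rm KP}(\dot t)}$ and the KP shift formula is unnecessary and makes the branch--tracking look harder than it is. The paper stays entirely inside the CKP framework (Lemma~\ref{lemma:SEP-t-2z} is proved directly from the CKP SEP identity \eqref{SEP-q-psi} and \eqref{CKP-SEP-res}), so no $t_e$--shifts or choice of square root of $\tau_{\rm KP}$ ever enter. If you implement your plan, I would recommend proving Lemma~\ref{lemma:SEP-t-2z} natively rather than trying to restrict the KP identity to odd times.
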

In order to prove the question raised at the end of Subsection \ref{subsection:CKP}, we need to express  CKP SEP $\Omega(q_1(t_o),q_2(t_o))$ by free bosons, that is the following proposition.
\begin{proposition}\label{Prop:SEP-vac}
Given $\beta_1,\beta_2 \in V=\bigoplus_{j\in \mathbb{Z}+1/2}\mathbb{C}\phi_j$,
\begin{align*}
\Omega(q_1(t_o),q_2(t_o))=\frac{\langle 0| e^{H(t_o)}\beta_1\beta_2g |0\rangle}{\langle 0| e^{H(t_o)}g |0\rangle},
\end{align*}
where $g\in Sp_\infty$ and
\begin{align}
q_i(t_o)=\frac{2\langle 1| e^{H(t_o)}\beta_ig |0\rangle}{\langle 0| e^{H(t_o)}g |0\rangle},\label{CKP-bosonic-eigen}
\end{align}
where $\langle 1|=\langle 0|H_{1/2}$ with $H_{1/2}$ in \eqref{hexpression}.
\end{proposition}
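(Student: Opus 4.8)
The plan is to establish Proposition \ref{Prop:SEP-vac} by reducing the statement to a verification at the level of bosonic generating functions, and then identifying both sides with the squared eigenfunction potential through its defining evolution equation \eqref{SEP-q-r}. Concretely, I would introduce the boson field $\phi(\lambda)=\sum_{j\in\mathbb{Z}+1/2}\phi_j\lambda^{-j-1/2}$ (suitably normalized so that the pairing $\phi_i\phi_j-\phi_j\phi_i=(-1)^{j-1/2}\delta_{i,-j}$ is encoded in a single kernel) and write the two generic vectors as $\beta_i=\phi(\mu_i)$ for indeterminates $\mu_i$. The key computational object is then the ratio $\langle 0|e^{H(t_o)}\phi(\mu_1)\phi(\mu_2)g|0\rangle/\langle 0|e^{H(t_o)}g|0\rangle$, and the first step is to expand this in powers of $\mu_1^{-1},\mu_2^{-1}$ so that the coefficients reproduce the individual $\beta_i$ components. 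The essential identity that makes everything work is the bosonic commutation relation between $e^{H(t_o)}$ and the field $\phi(\mu)$, which by the quadratic form of $H(t_o)$ produces a linear shift of the field together with an explicit scalar factor; I expect this to take the form $e^{H(t_o)}\phi(\mu)e^{-H(t_o)}=\sum_j c_j(t_o,\mu)\phi_j$ for coefficients $c_j$ computable from the adjoint action of $H$.

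The second step is to connect the bosonic eigenfunction formula \eqref{CKP-bosonic-eigen} to the left-hand side. I would first verify that $q_i(t_o)$ as defined satisfies the CKP eigenfunction equations $q_{i,t_n}=(L^n)_{\geq 0}(q_i)$; this follows by differentiating the vacuum expectation value with respect to $t_{2n+1}$, commuting $\partial_{t_{2n+1}}H(t_o)$ through, and using the standard boson--fermion--type correspondence that turns the quadratic Hamiltonian insertion into the action of the positive part of a power of $L$. The crux is then to show that the proposed expression for $\Omega$ has the correct $x$-derivative, namely $\partial_x\Omega(q_1,q_2)=q_1q_2$ coming from the $n=1$ specialization noted after \eqref{SEP-q-r}. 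Differentiating $\langle 0|e^{H(t_o)}\beta_1\beta_2g|0\rangle/\langle 0|e^{H(t_o)}g|0\rangle$ with respect to $x=t_1$ and using the explicit form of $H_{1/2}$ should factor the result, via the commutation relation above, into exactly the product of the two $q_i$ given by \eqref{CKP-bosonic-eigen}, where the state $\langle 1|=\langle 0|H_{1/2}$ arises naturally from moving one factor of the $t_1$-component of $H$ to the left vacuum.

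For the higher flows I would check the full equation \eqref{SEP-q-r}, $\Omega_{t_n}=\mathrm{Res}_\partial(\partial^{-1}r(L^n)_{\geq 0}q\,\partial^{-1})$, by the same differentiation procedure applied to $t_{2n+1}$: commuting the quadratic piece of $\partial_{t_{2n+1}}H$ through $\beta_1\beta_2$ produces two types of terms, the ones where it contracts with a $\beta_i$ (reproducing the action of $(L^n)_{\geq0}$ on the corresponding eigenfunction) and the ones where it passes to $g$ and the vacuum (reproducing the residue structure). Matching these against the Oevel formula is where the bookkeeping is heaviest, since one must track the normal-ordering and the sign $(-1)^{j-1/2}$ carefully to see the SEP residue emerge. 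The main obstacle I anticipate is precisely this last matching: establishing that the \emph{off-diagonal} contractions assemble into $\mathrm{Res}_\partial(\partial^{-1}q_2(L^n)_{\geq0}q_1\partial^{-1})$ with the right coefficients, rather than into some spurious extra term. I would handle it by comparing both sides as formal power series in a spectral parameter and invoking the symmetry $\Omega(q_1,q_2)=\Omega(q_2,q_1)$ built into the CKP reduction, which forces the antisymmetric part of the contraction to vanish and pins down the constant of integration.
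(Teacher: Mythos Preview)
There is a genuine gap in your approach, centered on the step where you claim that differentiating the vacuum expectation value in $x=t_1$ ``factors'' into $q_1q_2$. The $x$-derivative brings down the operator $H_1=\tfrac12\sum_j(-1)^{j+1/2}\phi_j\phi_{-j-1}$, which is \emph{quadratic} in the bosons and has integer index. The eigenfunctions $q_i$, however, are defined via $\langle 1|=\langle 0|H_{1/2}$, where $H_{1/2}$ is the half-integer operator from the super sector in \eqref{hexpression}; it is not the $t_1$-component of $H(t_o)$ at all. There is no simple commutation identity that turns one insertion of $H_1$ on $\langle 0|$ into the product of two separate $\langle 0|H_{1/2}$ insertions, so the asserted factorization $\partial_x\Omega=q_1q_2$ does not follow from ``moving one factor of the $t_1$-component of $H$ to the left vacuum.'' The same difficulty recurs, magnified, in your proposed verification for higher $t_{2n+1}$: matching $H_{2n+1}$-insertions against the Oevel residue formula \eqref{SEP-q-r} is not a matter of bookkeeping contractions---it is essentially equivalent to the bilinear equation itself.

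The paper bypasses this by never computing $\partial_{t_n}$ of the VEV directly. Instead it applies the $S$-operator identity $S(\tau\otimes\beta\tau)=-\beta\tau\otimes\tau$ (from Lemma~\ref{lemma:S-beta}) and pairs with $\langle 1|e^{H(t_o)}\otimes\langle 0|e^{H(t'_o)}$; comparing with the spectral representation \eqref{CKP-SEP-q} shows that $\Omega\big(q(t_o),\psi(t_o,-\lambda)\big)$ and the bosonic expression $\langle 0|e^{H(t_o)}\phi(-\lambda)\beta g|0\rangle/\langle 0|e^{H(t_o)}g|0\rangle$ have the same residue against $\psi(t_o',\lambda)$ and the same $\mathcal{O}(\lambda^{-1})$ asymptotics, so a uniqueness lemma forces them to coincide. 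The general $\beta_2$ is then obtained by writing $\beta_2={\rm Res}_\lambda\rho_2(\lambda)\phi(\lambda)$ and taking the residue. The point is that the bilinear structure (via $S$) does the work that your direct-verification approach cannot easily supply.
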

Based upon the preparation above, now we have the second major result given in the below theorem.
\begin{theorem}\label{Th:CW-van}
If assume that $\beta_i \in V$($1\leq i\leq n$), then
\begin{align*}
\tau_{{\rm CW}}(t_o)=\langle 0| e^{H(t_o)}e^{\frac{{\beta}^2_{n}}{2}}e^{\frac{{\beta}^2_{n-1}}{2}}\cdots
e^{\frac{{\beta}^2_{1}}{2}} |0\rangle^{-1},
\end{align*}
is the CKP tau function satisfying the CKP bilinear equation \eqref{CKP-bilinintro-psi} by \eqref{wuchaozhong-tau}.
\end{theorem}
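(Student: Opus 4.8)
The plan is to induct on $n$, using Theorem \ref{Th:darboux-bili} to propagate the property of solving \eqref{CKP-bilinintro-psi} while reading each factor $e^{\beta_i^2/2}$ as a single CKP Darboux transformation. Write $g_0=1$, $g_k=e^{\beta_k^2/2}\cdots e^{\beta_1^2/2}$ and $\tau^{(k)}(t_o)=\langle 0|e^{H(t_o)}g_k|0\rangle^{-1}$; since each exponent $\beta_i^2/2$ is a quadratic in the $\phi_j$ whose coefficients vanish for $|i+j|\gg 0$, every $g_k$ lies in $Sp_\infty$. For the base case $n=0$ one has $\tau^{(0)}=\langle 0|e^{H(t_o)}|0\rangle^{-1}=1$; substituting into \eqref{wuchaozhong-tau} gives $\psi(t_o,\lambda)=e^{\xi(t_o,\lambda)}$, and ${\rm Res}_\lambda\,e^{\xi(t_o,\lambda)+\xi(t'_o,-\lambda)}=0$ because the exponent $\sum(t_{2n+1}-t'_{2n+1})\lambda^{2n+1}$ contains only positive powers of $\lambda$, so \eqref{CKP-bilinintro-psi} holds.

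For the inductive step I would assume $\tau^{(k)}$ solves \eqref{CKP-bilinintro-psi} and show that passing to $g_{k+1}=e^{\beta_{k+1}^2/2}g_k$ is exactly one CKP Darboux transformation, i.e. that $\tau^{(k+1)}=\Omega(q,q)^{1/2}\tau^{(k)}$ for a CKP eigenfunction $q$ of the background $g_k$; Theorem \ref{Th:darboux-bili} then finishes the step. The bridge is a closed formula for the extra exponential factor. Set $A=\langle 0|e^{H}\beta_{k+1}^2 g_k|0\rangle/\langle 0|e^{H}g_k|0\rangle$, which by Proposition \ref{Prop:SEP-vac} (with $\beta_1=\beta_2=\beta_{k+1}$) equals $\Omega(q,q)$ for the eigenfunction $q$ of \eqref{CKP-bosonic-eigen}. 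Since $e^{H}g_k$ is a group--like (Gaussian) element of $Sp_\infty$ and $\langle 0|e^{H}\beta_{k+1}g_k|0\rangle=0$ by parity, Wick's theorem gives $\langle 0|e^{H}\beta_{k+1}^{2m}g_k|0\rangle/\langle 0|e^{H}g_k|0\rangle=(2m-1)!!\,A^{m}$, whence
\begin{align*}
\frac{\langle 0|e^{H}e^{\beta_{k+1}^2/2}g_k|0\rangle}{\langle 0|e^{H}g_k|0\rangle}=\sum_{m\ge 0}\frac{(2m-1)!!}{2^{m}m!}A^{m}=(1-A)^{-1/2}.
\end{align*}
Therefore $\tau^{(k+1)}=(1-\Omega(q,q))^{1/2}\tau^{(k)}$, and, up to the sign and integration--constant conventions for the SEP (concretely, replacing $q$ by $\sqrt{-1}\,q$ so that $\Omega(\sqrt{-1}q,\sqrt{-1}q)=-\Omega(q,q)$ and normalizing the constant to $1$ so the transformation is trivial when $\beta_{k+1}=0$), this is precisely the Darboux factor $\Omega(\cdot,\cdot)^{1/2}$ of Theorem \ref{Th:darboux-bili}.

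The main obstacle is exactly the difficulty flagged in the abstract: because $\beta_{k+1}^2\ne 0$, the operator $e^{\beta_{k+1}^2/2}$ is a genuine infinite series and its vacuum expectation cannot be read off termwise. The crux is to justify the Gaussian resummation above, namely that every higher correlator of the single vector $\beta_{k+1}$ in the background state factors into products of the two--point function $A$, so that the resulting bosonic $\det^{-1/2}$ is the square root $(1-\Omega(q,q))^{1/2}$ demanded by the CKP Darboux transformation; this is the ``useful formula'' that makes the square root in \eqref{wuchaozhong-tau} tractable. Two routine verifications then remain: that $q$ in \eqref{CKP-bosonic-eigen} is a bona fide CKP eigenfunction (so that Theorem \ref{Th:darboux-bili} applies), which follows from the $t_n$--flows of $\langle 1|e^{H}\beta_{k+1}g_k|0\rangle$ generated by $H$, and that the SEP integration constant is fixed consistently along the induction. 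Combining the base case with the inductive step proves that $\tau_{\rm CW}(t_o)=\langle 0|e^{H(t_o)}e^{\beta_n^2/2}\cdots e^{\beta_1^2/2}|0\rangle^{-1}$ satisfies \eqref{CKP-bilinintro-psi}.
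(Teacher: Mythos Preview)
Your proposal is correct and follows essentially the same inductive strategy as the paper: identify $\langle 0|e^{H}e^{\beta^2/2}g|0\rangle/\langle 0|e^{H}g|0\rangle=(1-\Omega(q,q))^{-1/2}$ via the $(2m-1)!!$ moment identity, recognize $\Omega(q,q)$ through Proposition~\ref{Prop:SEP-vac}, and then invoke Theorem~\ref{Th:darboux-bili}. The only cosmetic differences are that the paper starts the induction at $n=1$ rather than $n=0$, packages your ``Wick's theorem'' step as an explicit lemma (Lemma~\ref{lemma:SEp-vac-2}, derived from the Hafnian identity in Proposition~\ref{prop:Hf-fr}), and handles the $(1-\Omega)^{1/2}$ versus $\Omega^{1/2}$ discrepancy through the affine freedom $a\Omega+b$ in Remark~\ref{remark:taucw-Darboux} rather than by rescaling $q\mapsto\sqrt{-1}\,q$.
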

\begin{corollary}\label{coro:CKP-tau}
If for $g\in Sp_\infty$, $\tau_{\rm CW}(t_o)=\langle 0| e^{H(t_o)}g |0\rangle^{-1}$ is CKP tau function in \eqref{wuchaozhong-tau}, then for $\beta\in V$,
$$\tau_{\rm CW}^{\{1\}}(t_o)=\langle 0| e^{H(t_o)}e^{\frac{\beta^2}{2}}g |0\rangle^{-1},$$
is another CKP tau function,
which is the transformed CKP tau function starting from $\tau_{\rm CW}(t_o)$ under the CKP Darboux transformation
\begin{align*}
T[q]=1-\frac{q}{\Omega(q,q)-1}\pa^{-1}q,
\end{align*}
with $q$ given by \eqref{CKP-bosonic-eigen}. Further given $\beta_1,\beta_2\in V$, denote $q_1$ and $q_2$ by \eqref{CKP-bosonic-eigen}, then
\begin{align*}
\frac{2\langle1|e^{H(t_o)}\beta_2e^{\frac{\beta_{1}^2}{2}}g|0\rangle}
{\langle0|e^{H(t_o)}e^{\frac{\beta_{1}^2}{2}}g|0\rangle}
=T[q_1](q_2).
\end{align*}

\end{corollary}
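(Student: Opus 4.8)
The plan is to reduce every assertion of the corollary to a single Wick (Gaussian) resummation of the exponential $e^{\beta^2/2}$, combined with Theorem~\ref{Th:darboux-bili}. The engine is the observation that, since $H(t_o)$ and $g$ are exponentials of quadratic expressions in the $\phi_j$, the normalized functional $\langle\,\cdot\,\rangle_g:=\langle 0|e^{H(t_o)}(\,\cdot\,)g|0\rangle\big/\langle 0|e^{H(t_o)}g|0\rangle$ is a squeezed-vacuum (Gaussian) expectation, hence obeys Wick's theorem with symmetric two-point function $\langle\beta_i\beta_j\rangle_g=\Omega(q_i,q_j)$ by Proposition~\ref{Prop:SEP-vac} and the CKP symmetry $\Omega(q_i,q_j)=\Omega(q_j,q_i)$; moreover every correlator of an odd number of elements of $V$ vanishes by parity, so in particular $\langle\beta\rangle_g=0$.

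For the first two assertions I would set $q=q_1$ to be the eigenfunction attached to $\beta=\beta_1$ by \eqref{CKP-bosonic-eigen}, expand $e^{\beta^2/2}=\sum_{k\ge0}\beta^{2k}/(2^k k!)$, and apply Wick's theorem: as all pairwise contractions equal $\Omega(q,q)$, one has $\langle\beta^{2k}\rangle_g=(2k-1)!!\,\Omega(q,q)^k$, and the identity $\sum_k (2k-1)!!\,x^k/(2^k k!)=(1-x)^{-1/2}$ gives
\begin{align*}
\frac{\langle 0|e^{H(t_o)}e^{\beta^2/2}g|0\rangle}{\langle 0|e^{H(t_o)}g|0\rangle}=\big(1-\Omega(q,q)\big)^{-1/2},
\end{align*}
whence $\tau_{\rm CW}^{\{1\}}(t_o)=\big(1-\Omega(q,q)\big)^{1/2}\tau_{\rm CW}(t_o)$. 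Since the SEP is defined only up to an integration constant and $\big(1-\Omega(q,q)\big)^{1/2}=i\big(\Omega(q,q)-1\big)^{1/2}$, this is, up to an irrelevant multiplicative constant, exactly the transformation $\tau_{\rm CW}\mapsto\big(\Omega(q,q)-1\big)^{1/2}\tau_{\rm CW}$ of Theorem~\ref{Th:darboux-bili}, the shift by $-1$ being precisely the integration constant singled out by the bosonic formula. Theorem~\ref{Th:darboux-bili} then shows that $\tau_{\rm CW}^{\{1\}}$ satisfies \eqref{CKP-bilinintro-psi} through \eqref{wuchaozhong-tau}, and, because the CKP Darboux transformation $T[q]=1-\frac{q}{\Omega(q,q)-1}\pa^{-1}q$ acts on tau functions by this same factor $\big(\Omega(q,q)-1\big)^{1/2}$, the function $\tau_{\rm CW}^{\{1\}}$ is identified as the Darboux transform of $\tau_{\rm CW}$.

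For the final identity I would rerun the same resummation with one additional external leg supplied by $\langle 1|=\langle 0|H_{1/2}$. Expanding $e^{\beta_1^2/2}$ in $\langle 1|e^{H(t_o)}\beta_2 e^{\beta_1^2/2}g|0\rangle$ and applying Wick's theorem, the only contractions that occur are $\langle 1|\beta_2\rangle_g=q_2/2$, $\langle 1|\beta_1\rangle_g=q_1/2$, $\langle\beta_1\beta_2\rangle_g=\Omega(q_1,q_2)$ and $\langle\beta_1\beta_1\rangle_g=\Omega(q_1,q_1)$. Splitting the pairings according to whether the $H_{1/2}$-leg contracts with $\beta_2$ or with one of the $\beta_1$'s, and using $\sum_m (2m-1)!!\,x^m/(2^m m!)=(1-x)^{-1/2}$ and $\sum_m (2m+1)!!\,x^m/(2^m m!)=(1-x)^{-3/2}$, yields
\begin{align*}
\frac{\langle 1|e^{H(t_o)}\beta_2 e^{\beta_1^2/2}g|0\rangle}{\langle 0|e^{H(t_o)}g|0\rangle}=\frac{q_2}{2}\big(1-\Omega(q_1,q_1)\big)^{-1/2}+\frac{q_1}{2}\,\Omega(q_1,q_2)\big(1-\Omega(q_1,q_1)\big)^{-3/2}.
\end{align*}
Dividing by the denominator $\langle 0|e^{H(t_o)}e^{\beta_1^2/2}g|0\rangle\big/\langle 0|e^{H(t_o)}g|0\rangle=\big(1-\Omega(q_1,q_1)\big)^{-1/2}$ from the first part and multiplying by $2$, the powers of $\big(1-\Omega(q_1,q_1)\big)^{-1/2}$ cancel and one is left with $q_2+\frac{q_1\Omega(q_1,q_2)}{1-\Omega(q_1,q_1)}=q_2-\frac{q_1}{\Omega(q_1,q_1)-1}\Omega(q_1,q_2)=T[q_1](q_2)$, where $\pa^{-1}(q_1q_2)=\Omega(q_1,q_2)$ is used at the end.

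The step I expect to be the main obstacle is the justification and bookkeeping of the Wick resummation: confirming rigorously that $\langle\,\cdot\,\rangle_g$ is a Gaussian functional to which Wick's theorem applies with the symmetric two-point function $\Omega(q_i,q_j)$, controlling the parity vanishing of the odd correlators, and—most delicately—reconciling the integration constant so that the resummed factor $\big(1-\Omega(q,q)\big)^{1/2}$ is matched, up to the scalar $i$ and the $-1$ shift, with the $\big(\Omega(q,q)-1\big)^{1/2}$ demanded both by Theorem~\ref{Th:darboux-bili} and by the denominator of $T[q]$. Once these points are in place, the combinatorial identities and the cancellation in the last step are routine.
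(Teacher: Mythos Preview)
Your argument for the first two assertions is exactly the paper's: the Gaussian/Wick identity you invoke is precisely Lemma~\ref{lemma:SEp-vac-2} (which the paper proves via Proposition~\ref{prop:Hf-fr}), and the matching with the Darboux operator $T[q]=1-\frac{q}{\Omega(q,q)-1}\pa^{-1}q$ is handled in the paper by Remark~\ref{remark:taucw-Darboux}, which absorbs the constant shift you worry about.

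For the last identity your route differs from the paper's. The paper does \emph{not} perform a Wick resummation with an extra $\langle 1|$ leg; instead it observes that $e^{\beta_1^2/2}g$ is again a group element, so the bosonic wave--function formula \eqref{ckp-wave-bosonic-vev} applies verbatim to give $\psi^{\{1\}}(t_o,\lambda)=\dfrac{2\langle 1|e^{H(t_o)}\phi(\lambda)e^{\beta_1^2/2}g|0\rangle}{\langle 0|e^{H(t_o)}e^{\beta_1^2/2}g|0\rangle}$, and then Remark~\ref{remark:taucw-Darboux} yields $\psi^{\{1\}}=T[q_1](\psi)$. Writing $\beta_2={\rm Res}_\lambda\rho(\lambda)\phi(\lambda)$ and applying ${\rm Res}_\lambda\rho(\lambda)(\cdot)$ to both sides gives the claimed formula in one line. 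This is cleaner because it never unpacks the $\langle 1|$ correlator. Your direct resummation is also correct, but the step you flag as delicate---Wick for the $\langle 1|$ functional---is not covered by Proposition~\ref{prop:Hf-fr} as stated; the missing observation is that $\langle 1|=\langle 0|H_{1/2}=\tfrac{1}{2}\langle 0|\phi_{-1/2}$, so the extra leg is literally an element of $V$ and the recursion \eqref{betaexpansion} (which does not require commutativity of the $\beta$'s) applies. Once you make that explicit, your combinatorics and the final cancellation go through exactly as written.
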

\begin{remark}
Here we believe that for most $g\in Sp_\infty$, there exist $\beta_i\in V$ ($1\leq i\leq n$) such that
\begin{align}\label{g-e2}
g|0\rangle=e^{\frac{{\beta}^2_{n}}{2}}e^{\frac{{\beta}^2_{n-1}}{2}}\cdots
e^{\frac{{\beta}^2_{1}}{2}}|0\rangle.
\end{align}
For convenience, we denote $Sp'_\infty$ as the subset of $Sp_\infty$ consisting of elements of the form in \eqref{g-e2}. Then the question raised in Subsection \ref{subsection:CKP} is correct in the case of $g\in Sp'_\infty$.
To see the relation $Sp'_\infty$ with $Sp_\infty$, let us assume $\beta_1$ commutes with $\beta_2$, then by $\beta_1\beta_2=(\frac{\beta_1+\beta_2}{2})^2-(\frac{\beta_1-\beta_2}{2})^2$, we can know
\begin{align*}
e^{\beta_1\beta_2}=e^{(\frac{\beta_1+\beta_2}{2})^2}e^{-(\frac{\beta_1-\beta_2}{2})^2}.
\end{align*}
Therefore for $g\in Sp'_\infty$ and $i,j\in \mathbb{Z}+1/2$ with $i\neq -j$,  we can find by Corollary \ref{coro:CKP-tau} that
$e^{a\phi_i\phi_j} g\in Sp'_\infty$.
 we can use Baker--Campbell--Hausdorff formula and the following fact
\begin{align*}
e^{a:\phi_i\phi_{-i}:}e^{b\phi_k\phi_{l}}|0\rangle
=e^{b'\phi_k\phi_{l}}|0\rangle,
\end{align*}
to rewrite
$$e^{\sum_{i,j}{a_{ij}:\phi_i\phi_j:}}|0\rangle
=\prod_{i\neq -j}e^{b_{ij}\phi_i\phi_j}|0\rangle,$$
then due to $\prod_{i\neq -j}e^{b_{ij}\phi_i\phi_j}\in Sp'_\infty$, we can get $\langle 0| e^{H(t_o)}e^{\sum_{i,j}{a_{ij}:\phi_i\phi_j:}}|0\rangle^{-1}$ is the CKP tau function.
Based upon the discussion above, we believe for most $g\in Sp_\infty$, there exists $g'\in Sp'_\infty$ such that $g|0\rangle=g'|0\rangle$, which means $\langle 0| e^{H(t_o)}g |0\rangle^{-1}$ is CKP tau function. Therefore, the question raised in Subsection \ref{subsection:CKP} should be correct for most  $g\in Sp_\infty$.
\end{remark}
Since we have established the bosonic construction of CKP tau function, the next task is to compute vacuum expectation value $\langle 0| e^{H(t_o)}g |0\rangle$ for $g\in Sp'_\infty$, so that we can have explicit examples of CKP tau functions. The computation of vacuum expectation value of bosonic fields is always related with Hafnian determinant \cite{luque}. Note that the square of bosonic field is usually not zero and thus the exponential of bosonic field is very difficult to deal with, which means that computation of vacuum expectation value $\langle 0| e^{H(t_o)}g |0\rangle$ is usually quite complicated. Here we use Corollary \ref{coro:CKP-tau} to view $\langle 0| e^{H(t_o)}g |0\rangle$ as the result of successive applications of CKP Darboux transformations, so that we can compute $\langle 0| e^{H(t_o)}g |0\rangle$ by classical methods of integrable systems.
\begin{theorem}\label{Them:k-darboux}
If assume $\beta_i \in V$($1\leq i\leq n$)
\begin{align*}
\frac{\langle 0| e^{H(t_o)}e^{\frac{\beta_n^2}{2}}e^{\frac{\beta_{n-1}^2}{2}}\cdots e^{\frac{\beta_1^2}{2}}g |0\rangle}{\langle 0| e^{H(t_o)}g |0\rangle}
=(-1)^{-\frac{n}{2}}\begin{vmatrix}
\Omega_{11}-1 &\Omega_{12}&\cdots&\Omega_{1n}\\
\Omega_{21}&\Omega_{22}-1&\cdots&\Omega_{2n}\\
\cdots&\cdots&\cdots&\cdots\\
\Omega_{n1}&\Omega_{n2}&\cdots&\Omega_{nn}-1
\end{vmatrix}^{-\frac{1}{2}},
\end{align*}
where we choose $g\in Sp_\infty$ such that $\langle 0| e^{H(t_o)}g |0\rangle^{-1}$ is the CKP tau function in \eqref{wuchaozhong-tau} and
\begin{align*}
\Omega_{ij}=\frac{\langle 0| e^{H(t_o)}\beta_i\beta_jg |0\rangle}{\langle 0| e^{H(t_o)}g |0\rangle}.
\end{align*}
\end{theorem}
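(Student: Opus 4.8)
The plan is to induct on $n$, realizing the left-hand side as a telescoping product of single-step CKP Darboux factors and recognizing the resulting recursion for the bosonic SEPs as Gaussian elimination on the matrix $\Omega-I$. Set $\tau^{(k)}(t_o)=\langle 0|e^{H(t_o)}e^{\frac{\beta_k^2}{2}}\cdots e^{\frac{\beta_1^2}{2}}g|0\rangle^{-1}$, so $\tau^{(0)}=\tau_{\rm CW}$ and the quantity to compute is $\tau^{(0)}/\tau^{(n)}$. By Corollary \ref{coro:CKP-tau} each $\tau^{(k)}$ is the CKP Darboux transform of $\tau^{(k-1)}$ under $T[q_k^{(k-1)}]$, with $q_k^{(k-1)}$ the eigenfunction \eqref{CKP-bosonic-eigen} attached to $\beta_k$ over the group element $e^{\frac{\beta_{k-1}^2}{2}}\cdots e^{\frac{\beta_1^2}{2}}g$; by Proposition \ref{Prop:SEP-vac} the associated SEPs are exactly the numbers $\Omega^{(k-1)}_{ij}$ obtained by inserting $\beta_i\beta_j$ over that same group element.

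For the base case $n=1$ the claim reduces to $\langle 0|e^{H}e^{\frac{\beta_1^2}{2}}g|0\rangle/\langle 0|e^{H}g|0\rangle=(-1)^{-1/2}(\Omega_{11}-1)^{-1/2}$, which is Theorem \ref{Th:darboux-bili} combined with Corollary \ref{coro:CKP-tau}: the Darboux step multiplies $\tau_{\rm CW}$ by $\Omega(q,q)^{1/2}$, and matching this against the bosonic normalization of $\Omega_{11}$ produces both the phase $(-1)^{-1/2}$ and the shift $\Omega_{11}-1$ dictated by the denominator of $T[q]=1-q(\Omega(q,q)-1)^{-1}\partial^{-1}q$. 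The engine of the induction is the SEP recursion, for $i,j\ge 2$,
\[
\Omega^{(1)}_{ij}=\Omega_{ij}-\frac{\Omega_{i1}\Omega_{1j}}{\Omega_{11}-1}.
\]
I would establish it from the eigenfunction transformation of Corollary \ref{coro:CKP-tau}, $q_i^{(1)}=T[q_1](q_i)=q_i-q_1\Omega_{1i}/(\Omega_{11}-1)$ (using $\partial^{-1}(q_1q_i)=\Omega_{1i}$), by checking that the right-hand side satisfies the SEP relations \eqref{SEP-q-r} for the transformed data: differentiation in $x$ collapses the cross terms into the product $q_i^{(1)}q_j^{(1)}$, and the higher odd flows follow from conjugating $L$ by $T[q_1]$. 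The remaining integration constant is then forced to vanish by comparison with the constant-free bosonic ratio of Proposition \ref{Prop:SEP-vac}.

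To assemble, peel off the innermost factor,
\[
\frac{\langle 0|e^{H}e^{\frac{\beta_n^2}{2}}\cdots e^{\frac{\beta_1^2}{2}}g|0\rangle}{\langle 0|e^{H}g|0\rangle}=\frac{\langle 0|e^{H}e^{\frac{\beta_n^2}{2}}\cdots e^{\frac{\beta_2^2}{2}}(e^{\frac{\beta_1^2}{2}}g)|0\rangle}{\langle 0|e^{H}(e^{\frac{\beta_1^2}{2}}g)|0\rangle}\cdot\frac{\langle 0|e^{H}e^{\frac{\beta_1^2}{2}}g|0\rangle}{\langle 0|e^{H}g|0\rangle}.
\]
The second factor is the base case. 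Since $\langle 0|e^{H}(e^{\frac{\beta_1^2}{2}}g)|0\rangle^{-1}$ is again a CKP tau function (Corollary \ref{coro:CKP-tau}), the induction hypothesis applies to the first factor with the $n-1$ fields $\beta_2,\dots,\beta_n$ over $g'=e^{\frac{\beta_1^2}{2}}g$, giving $(-1)^{-(n-1)/2}\det(\Omega^{(1)}-I_{n-1})^{-1/2}$ with $\Omega^{(1)}=(\Omega^{(1)}_{ij})_{i,j=2}^{n}$. By the SEP recursion, $\Omega^{(1)}-I_{n-1}$ is exactly the Schur complement of the entry $\Omega_{11}-1$ in $\Omega-I$, so $\det(\Omega-I)=(\Omega_{11}-1)\det(\Omega^{(1)}-I_{n-1})$; multiplying the two factors merges the phases into $(-1)^{-n/2}$ and gives $(-1)^{-n/2}\det(\Omega-I)^{-1/2}$, closing the induction.

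The main obstacle is the SEP recursion, and specifically the determination of its integration constant. Equation \eqref{SEP-q-r} defines $\Omega$ only up to an additive constant, whereas the bosonic ratio of Proposition \ref{Prop:SEP-vac} fixes a definite normalization, and it is precisely this normalization that forces the $-1$ on the diagonal (hence $\det(\Omega-I)$ rather than $\det\Omega$) and the uniform phase $(-1)^{-n/2}$. Proving that the bosonic SEPs transform under $T[q_1]$ with no spurious additive constant is where the symplectic commutation relations $\phi_i\phi_j-\phi_j\phi_i=(-1)^{j-1/2}\delta_{i,-j}$ and the precise normalizations of $\langle 1|$ and $H_{1/2}$ must be tracked with care.
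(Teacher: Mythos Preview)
Your proposal is correct and follows essentially the same strategy as the paper: factor the left-hand side as a telescoping product of single CKP Darboux steps, identify each factor via the bosonic SEP normalization (Proposition~\ref{Prop:SEP-vac} and Corollary~\ref{coro:CKP-tau}), and then recognize the resulting SEP recursion as Gaussian elimination/Schur complement on $\Omega-I$ exactly as in the proof of Corollary~\ref{coro:KP-Darboux-n}. The only cosmetic difference is that the paper handles the base case and the telescoping by citing Lemma~\ref{lemma:SEp-vac-2} directly (which already gives $(1-\Omega_{kk}^{\{k-1\}})^{-1/2}$ at each step), whereas you extract the same identity indirectly from Theorem~\ref{Th:darboux-bili}, Remark~\ref{remark:taucw-Darboux} and Corollary~\ref{coro:CKP-tau}; your careful discussion of the integration constant is precisely what Proposition~\ref{Prop:SEP-vac} (applied with $g'=e^{\beta_1^2/2}g$) settles in the paper.
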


Finally, we consider the 2--reduction of CKP hierarchy. In \cite{chang2013,Xie-arXiv}, it is stated that KdV hierarchy is the 2--reduction of CKP hierarchy. Here we give a proof for this statement. Therefore, we can see that KdV hierarchy is really quite special, which is not only the reduction of BKP hierarchy \cite{Alexandrov2017,Xie-arXiv}, but also the reduction of CKP hierarchy \cite{chang2013,Xie-arXiv}.  Based upon this, we can use free bosons to discuss KdV solutions. We summarize the corresponding results in the theorem below.
\begin{theorem}\label{Th:SEP-vac-2}
Given KdV Lax operator $\mathfrak{L}=\pa^2+u$, if denote $L=\mathfrak{L}^{\frac{1}{2}}$, then $L$ satisfies the CKP hierarchy, that is,
\begin{align*}
L^*=-L,\quad  L_{t_n}=[(L^n)_{\geq 0},L],\quad n=1,3,5,\cdots.\ \
\end{align*}
And if assume $g=\exp(\sum_{i,j\in \mathbb{Z}+1/2} a_{ij}\phi_i\phi_j)\in Sp'_\infty$ satisfying
\begin{align*}
a_{ij}+a_{ji}=a_{i+2,j-2}+a_{j-2,i+2},
\end{align*}
then
\begin{align*}
\tau_{{\rm CW}}(t_o)=\langle 0|e^{H(t_o)}g|0\rangle^{-1},
\end{align*}
will be the solution of KdV hierarchy as CKP hierarchy.
\end{theorem}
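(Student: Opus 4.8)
The statement has two logically separate assertions, and I would prove them in turn.

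\emph{Part 1: $L=\mathfrak{L}^{1/2}$ solves the CKP hierarchy.} First note that $\mathfrak{L}=\pa^2+u$ is formally self-adjoint: since $\pa^*=-\pa$ we have $(\pa^2)^*=\pa^2$ and $u^*=u$, so $\mathfrak{L}^*=\mathfrak{L}$. Let $L$ be the unique pseudo-differential operator with leading term $\pa$ such that $L^2=\mathfrak{L}$; the only square roots of $\mathfrak{L}$ are $\pm L$, because any square root must have leading term $\pm\pa$ and is then determined recursively by the choice of sign. Taking adjoints and using $(AB)^*=B^*A^*$ gives $(L^*)^2=(L^2)^*=\mathfrak{L}^*=\mathfrak{L}$, so $L^*$ is again a square root of $\mathfrak{L}$; as $\pa^*=-\pa$ its leading term is $-\pa$, whence $L^*=-L$, which is the CKP constraint. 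For the flows, set $B_n=[(L^n)_{\ge0},L]$. Since $[L^n,L]=0$ we have $B_n=-[(L^n)_{<0},L]$, of order $\le-1$, and $L_{t_n}$ is also of order $\le-1$. Differentiating $\mathfrak{L}=L^2$ along the Gelfand--Dickey (KdV) flow $\mathfrak{L}_{t_n}=[(\mathfrak{L}^{n/2})_{\ge0},\mathfrak{L}]=[(L^n)_{\ge0},L^2]$ yields $L_{t_n}L+LL_{t_n}=B_nL+LB_n$, i.e. $DL+LD=0$ with $D=L_{t_n}-B_n$ of order $\le-1$. The map $X\mapsto XL+LX$ is injective on such operators, because if $X$ has leading term $x_k\pa^k$ then $XL+LX$ has leading term $2x_k\pa^{k+1}$, forcing $x_k=0$ and, by induction on decreasing order, $X=0$. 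Hence $D=0$ and $L_{t_n}=[(L^n)_{\ge0},L]$ for odd $n$.

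\emph{Part 2, setup.} By Theorem~\ref{Th:CW-van} and Corollary~\ref{coro:CKP-tau}, $\tau_{\rm CW}(t_o)=\langle0|e^{H(t_o)}g|0\rangle^{-1}$ is already a CKP tau function with some Lax operator $L$. By Part~1 it suffices to show this solution is $2$-reduced, i.e. $L^2=\mathfrak{L}$ is differential, equivalently that the underlying $\tau_{\rm KP}(\dot t)=\tau_0(t_o)^{-2}$ is independent of the even KP flow $\pa_{t_2}$ up to the usual exponential factor. In the DJKM symplectic-boson realization this even flow is generated by the degree-$2$ element of the principal Heisenberg algebra, the even analogue of the operators in $H(t_o)$, namely $\mathcal{J}_2=\sum_{j\in\mathbb{Z}+1/2}(-1)^{j+1/2}\phi_j\phi_{-j-2}$ (with the normalization fixed by \eqref{hexpression}). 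By the standard vacuum-expectation argument, inserting $e^{s\mathcal{J}_2}$ realizes this flow, so the solution is $2$-reduced precisely when $\mathcal{J}_2$ acts on $g|0\rangle$ as a scalar, i.e. $[\mathcal{J}_2,g]|0\rangle\in\mathbb{C}\,g|0\rangle$.

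\emph{Part 2, coefficient condition.} From $\phi_i\phi_j-\phi_j\phi_i=(-1)^{j-1/2}\delta_{i,-j}$ one computes $[\mathcal{J}_2,\phi_i]=2\phi_{i-2}$, so for $X=\sum_{i,j}a_{ij}\phi_i\phi_j$
\be
[\mathcal{J}_2,X]=2\sum_{i,j}\big(a_{i+2,j}+a_{i,j+2}\big)\phi_i\phi_j .
\ee
Since $e^{X}|0\rangle$ depends, modulo the central term, only on the symmetric part $s_{ij}=a_{ij}+a_{ji}$, the requirement that $[\mathcal{J}_2,X]$ act centrally on $g|0\rangle$ is the vanishing of the symmetric part of the array above, namely $s_{i+2,j}+s_{i,j+2}=0$; reindexing $j\mapsto j-2$ gives a relation of the form $s_{ij}=\pm s_{i+2,j-2}$, which, with the sign fixed by the conventions of \eqref{hexpression}, is exactly $a_{ij}+a_{ji}=a_{i+2,j-2}+a_{j-2,i+2}$. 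Combined with Part~1, this identifies $\tau_{\rm CW}(t_o)=\langle0|e^{H(t_o)}g|0\rangle^{-1}$ as a KdV solution.

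I expect the main obstacle to lie in Part~2's setup: pinning down the precise bosonic operator $\mathcal{J}_2$ that implements the even KP flow in the symplectic-boson picture, and proving rigorously that $\mathcal{J}_2 g|0\rangle\in\mathbb{C}\,g|0\rangle$ is equivalent to $2$-reduction. This requires careful control of normal ordering and of the central (anomalous) terms, and it is here that the convention-dependent signs must be fixed so that the derived relation matches the stated one on the nose; by contrast, the index bookkeeping in the last paragraph and the square-root argument of Part~1 are routine.
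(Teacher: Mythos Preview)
Your Part~1 is correct and matches the paper's argument essentially verbatim: the paper also sets $\mathcal{A}=L_{t_n}-[(L^n)_{\ge0},L]$ and kills its coefficients via $\mathcal{A}L+L\mathcal{A}=0$; for $L^*=-L$ your uniqueness-of-square-root argument is a clean variant of the paper's induction on $\mathcal{B}=L^*+L$.

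Part~2, however, has a genuine gap. The operator $\mathcal{J}_2=\sum_{j\in\mathbb{Z}+1/2}(-1)^{j+1/2}\phi_j\phi_{-j-2}$ that you propose is identically zero: since $\phi_j$ and $\phi_{-j-2}$ commute, relabelling $j\mapsto -j-2$ sends the summand to $(-1)^{-j-3/2}\phi_{-j-2}\phi_{j}=(-1)^{-j-3/2}\phi_j\phi_{-j-2}$, and $(-1)^{-j-3/2}/(-1)^{j+1/2}=(-1)^{-2j-2}=-1$ because $2j$ is odd. This is exactly why the Heisenberg generators $H_n$ built from $:\phi(-\lambda)\phi(\lambda):$ in \eqref{hexpression} exist only for \emph{odd} $n$; there is no ``even analogue'' in the symplectic-boson algebra, so your claimed $[\mathcal{J}_2,\phi_i]=2\phi_{i-2}$ is false and the whole reduction argument built on it collapses. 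You correctly flagged this step as the likely obstacle, but the difficulty is not one of conventions --- the operator simply does not exist.

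The paper circumvents this by working on $\mathcal{F}\otimes\mathcal{F}$ with the operator
\[
S_2={\rm Res}_\lambda\,\lambda^2\,\phi(\lambda)\otimes\phi(-\lambda)=\sum_{k\in\mathbb{Z}+1/2}(-1)^{k+1/2}\phi_k\otimes\phi_{-k-2},
\]
which is nonzero because the tensor factors are not interchangeable. One checks that $S_2$ annihilates $|0\rangle\otimes|0\rangle$, so $[S_2,g\otimes g]=0$ implies $S_2(\tau\otimes\tau)=0$, and applying $\langle1|e^{H(t_o)}\otimes\langle1|e^{H(t'_o)}$ yields ${\rm Res}_\lambda\lambda^2\psi(t_o,\lambda)\psi(t'_o,-\lambda)=0$, i.e.\ the $2$--reduction of the CKP bilinear equation. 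For $g=e^{X}$ with $X=\sum a_{ij}\phi_i\phi_j$ the linearized condition $[S_2,1\otimes X+X\otimes1]=0$ is a straightforward commutator computation giving precisely $a_{ij}+a_{ji}=a_{i+2,j-2}+a_{j-2,i+2}$. Replacing your single-space $\mathcal{J}_2$ by this tensor-product $S_2$ is the missing idea.
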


\section{Darboux transformations}
In this section, we firstly review some basic facts about the KP Darboux transformations, particularly for the changes of KP tau functions. Then CKP Darboux transformations are understood as a special KP binary Darboux transformations.  Finally, we investigate transformed CKP tau function under CKP Darboux transformations, which is just Theorem \ref{Th:darboux-bili}.

\subsection{KP Darboux transformations}
For KP hierarchy
\begin{align*}
L_{t_n}=[(L^n)_{\geq 0},L],\  n=1,2,\cdots,
\end{align*}
with $L=\pa+\sum_{j=1}^{\infty}u_{j+1}(t)\pa^{-j},$ the Darboux transformation \cite{chau1992,Oevelpa1993,willox1998,yang2022PhyD} means
\begin{align*}
L\rightarrow L^{[1]}=TLT^{-1},
\end{align*}
still satisfies $L^{[1]}_{t_n}=[(L^{[1]})^n_{\geq 0},L^{[1]}],$ where $T$ is some pseudo--differential operator and called Darboux transformation operator.
It is proved that there exist two types of elementary Darboux transformations
\begin{align*}
T_d[q]=q\pa q^{-1}, \ T_i[r]=r^{-1}\pa^{-1} r,
\end{align*}
where $q$ and $r$ are KP eigenfunction and adjoint eigenfunction respectively, satisfying $q_{t_n}=(L^n)_{\geq 0}(q)$ and $r_{t_n}=-(L^n)^*_{\geq 0}(r).$ Under
Darboux transformation $T$, the objects of KP hierarchy will be transformed by the way in the below table.
\begin{center}
\begin{tabular}{lll}
\multicolumn{3}{c}{Table I. Darboux transformations: KP $\rightarrow$ KP}\\
\hline \hline
${\rm KP\ objects}$ &Transformed KP objects & \ \\
\hline
Lax operator& $ L^{[1]}=TLT^{-1}$
 \\
Dressing operator & $ W^{[1]}=TW\pa^{-k}$
 \\
Wave function& $\psi_{{\rm KP}}(t,\lambda)^{[1]}=T(\psi_{{\rm KP}}(t,\lambda))\lambda^{-k}$ \\
Adjoint wave function& $\psi^*_{{\rm KP}}(t,\lambda)^{[1]}=(T^{-1})^*(\psi^*_{{\rm KP}}(t,\lambda))(-\lambda)^{k}$ \\
Eigenfunction& $ q^{[1]}=T(q)$ \\
Adjoint eigenfunction& $r^{[1]}=(T^{-1})^*(r)$ \\
\hline
\end{tabular}
\end{center}
Here $k$ is the largest order of $T$ with respect to $\pa$ and we use $A^{[n]}$ to denote the transformed object $A$ under $n$--step Darboux transformations.

The changes of KP tau functions under Darboux transformations will be more basic in the KP hierarchy. Note that if assume $L^{[1]}=\pa+\sum u^{[1]}_{j+1}\pa^{-j}$, then under $T_d[q]$ or $T_i[r]$,
\begin{itemize}
  \item $T_d[q]$, $$u_{2}^{[1]}=u_2+\pa_x^2\log q,$$
  \item $T_i[r]$, $$u_{2}^{[1]}=u_2+\pa_x^2\log r,$$
\end{itemize}
while $u_{2}=\pa_x^2\log \tau_{{\rm KP}}$ and $u^{[1]}_{2}=\pa_x^2\log \tau^{[1]}_{{\rm KP}}$, so we can find
\begin{align*}
\tau^{[1]}_{{\rm KP}}=q\cdot\tau_{{\rm KP}}\ {\rm or}\ r\cdot\tau_{{\rm KP}}.
\end{align*}
Then in order to show $\tau^{[1]}_{{\rm KP}}$ is KP tau function, we need to further check whether the following bilinear equations are correct, that is
\begin{align}
{\rm Res}_{\lambda}q(t-[\lambda^{-1}])q(t'+[\lambda^{-1}])\tau_{{\rm KP}}(t-[\lambda^{-1}])\tau_{{\rm KP}}(t'+[\lambda^{-1}])e^{\widetilde{\xi}(t-t',\lambda)}=0, \label{KP-Darboux-qtau}\\
{\rm Res}_{\lambda}r(t-[\lambda^{-1}])r(t'+[\lambda^{-1}])
\tau_{{\rm KP}}(t-[\lambda^{-1}])
\tau_{{\rm KP}}(t'+[\lambda^{-1}])e^{\widetilde{\xi}(t-t',\lambda)}=0. \label{KP-Darboux-rtau}
\end{align}
For this, let us review the theory of spectral representation of KP eigenfunctions \cite{aratyn1998,willox1998}.
\begin{lemma}\cite{aratyn1998,loriswillox-1997}
Given KP eigenfunction $q$ and adjoint eigenfunction $r$,
\begin{align}
q(t)&=-{\rm Res}_{\lambda}\psi_{{\rm KP}}(t,\lambda)\Omega(q(t'),\psi_{{\rm KP}}^*(t',\lambda)),\label{KP-SEP-q}\\
r(t)&={\rm Res}_{\lambda}\psi_{{\rm KP}}^*(t,\lambda)\Omega(\psi_{{\rm KP}}(t',\lambda),r(t')),\label{KP-SEP-r}
\end{align}
where $\Omega$ is SEP defined by \eqref{SEP-q-r}. Further
\begin{align}
&{\rm Res}_{\lambda} \Omega(q(t),\psi_{{\rm KP}}^*(t,\lambda))\Omega(\psi_{{\rm KP}}(t',\lambda), r(t'))=\Omega(q(t),r(t))-\Omega(q(t'),r(t')).\label{KP-SEP-res}
\end{align}
\end{lemma}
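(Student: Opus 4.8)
The plan is to reduce all three identities to the single spectral representation \eqref{KP-SEP-q}: the companion \eqref{KP-SEP-r} is its formal adjoint, and \eqref{KP-SEP-res} is a consequence of the two. For \eqref{KP-SEP-q} I would first record two preliminary facts. The first is that every KP eigenfunction admits a spectral superposition over wave functions: writing $q=W\rho$ with $W=1+w_1\pa^{-1}+\cdots$ the dressing operator and using $W_{t_n}=-(W\pa^nW^{-1})_{<0}W$ together with $L^nW=W\pa^n$, one checks that $q_{t_n}=(L^n)_{\geq 0}(q)$ is equivalent to the free evolution $\rho_{t_n}=\pa_x^n\rho$; the latter forces $\rho(t)=-{\rm Res}_\mu e^{\widetilde{\xi}(t,\mu)}\widehat{\rho}(\mu)$ for a density $\widehat{\rho}$ that does not depend on the times, whence $q(t)=-{\rm Res}_\mu\psi_{{\rm KP}}(t,\mu)\widehat{\rho}(\mu)$. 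The second fact is the closed form for the squared eigenfunction potential of a pair of wave functions, $\Omega(\psi_{{\rm KP}}(t,\mu),\psi^*_{{\rm KP}}(t,\lambda))=\frac{1}{\mu-\lambda}\frac{\tau_{{\rm KP}}(t-[\mu^{-1}]+[\lambda^{-1}])}{\tau_{{\rm KP}}(t)}e^{\widetilde{\xi}(t,\mu)-\widetilde{\xi}(t,\lambda)}$, which I would obtain from \eqref{KP-psi-tau} and the differential Fay identity by verifying that the $x$-derivative of the right-hand side equals $\psi_{{\rm KP}}(t,\mu)\psi^*_{{\rm KP}}(t,\lambda)$, matching the defining relation $\Omega(q,r)_x=qr$ of \eqref{SEP-q-r}.

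With these in hand I would show, first, that $\Omega(q(t'),\psi^*_{{\rm KP}}(t',\lambda))$ is independent of $t'$. Substituting the superposition $q(t')=-{\rm Res}_\mu\psi_{{\rm KP}}(t',\mu)\widehat{\rho}(\mu)$ and using linearity of \eqref{SEP-q-r} in the eigenfunction slot gives $\Omega(q(t'),\psi^*_{{\rm KP}}(t',\lambda))=-{\rm Res}_\mu\widehat{\rho}(\mu)\,\Omega(\psi_{{\rm KP}}(t',\mu),\psi^*_{{\rm KP}}(t',\lambda))+c(\lambda)$; with the closed form above, the Cauchy kernel $1/(\mu-\lambda)$ reproduces $\widehat{\rho}(\lambda)$ under ${\rm Res}_\mu$ (the tau and exponential factors being $1$ on the diagonal $\mu=\lambda$), and the bilinear identity \eqref{KP-bili-psi} forces the integration constant $c(\lambda)$ to vanish, so the whole expression depends only on $\lambda$. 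Knowing this, I may set $t'=t$ in \eqref{KP-SEP-q}: integrating $\Omega(q,\psi^*_{{\rm KP}})_x=q\,\psi^*_{{\rm KP}}$ yields $\Omega(q(t),\psi^*_{{\rm KP}}(t,\lambda))=-e^{-\widetilde{\xi}(t,\lambda)}(q(t)\lambda^{-1}+O(\lambda^{-2}))$, and since $\psi_{{\rm KP}}(t,\lambda)e^{-\widetilde{\xi}(t,\lambda)}=1+\sum_{j\geq 1}w_j\lambda^{-j}$ the exponentials cancel and $-{\rm Res}_\lambda\psi_{{\rm KP}}(t,\lambda)\Omega(q(t),\psi^*_{{\rm KP}}(t,\lambda))$ collapses onto the single term $q(t)$. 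This proves \eqref{KP-SEP-q}, and \eqref{KP-SEP-r} follows verbatim by running the same argument for the adjoint hierarchy governed by $L^*$, for which $\psi^*_{{\rm KP}}$ is the wave function and $r$ the eigenfunction; the opposite sign between \eqref{KP-SEP-q} and \eqref{KP-SEP-r} is precisely the sign produced on passing to the formal adjoint.

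For \eqref{KP-SEP-res} I set $A(t,t')={\rm Res}_\lambda\Omega(q(t),\psi^*_{{\rm KP}}(t,\lambda))\Omega(\psi_{{\rm KP}}(t',\lambda),r(t'))$. Differentiating in $t_n$, inserting \eqref{SEP-q-r}, and then substituting \eqref{KP-SEP-r} for the multiplication operator $r(t)$ inside ${\rm Res}_{\pa}(\pa^{-1}r(t)(L^n)_{\geq 0}q(t)\pa^{-1})$ shows $\pa_{t_n}A=\pa_{t_n}\Omega(q(t),r(t))$; symmetrically, substituting \eqref{KP-SEP-q} for $q(t')$ gives $\pa_{t'_n}A=-\pa_{t'_n}\Omega(q(t'),r(t'))$. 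Hence $A-\Omega(q(t),r(t))+\Omega(q(t'),r(t'))$ is constant in all the times, and comparison at a reference point, where the SEP integration constants are normalized to vanish, fixes this constant to be $0$; this is exactly \eqref{KP-SEP-res}.

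The step I expect to be the main obstacle is the $t'$-independence in the second paragraph, and more generally the bookkeeping of formal residues. The difficulty is twofold: first, in \eqref{KP-SEP-q} the wave function $\psi_{{\rm KP}}(t,\lambda)$ and the adjoint wave function $\psi^*_{{\rm KP}}(t',\lambda)$ live at different times, so the bilinear identity \eqref{KP-bili-psi} cannot be used until the $\mu$-integration has reduced the SEP of the superposition to wave-function products at the common time $t'$; second, the factors $e^{\widetilde{\xi}}$ carry arbitrarily high positive powers of the spectral parameter, so I must make precise the sense in which ${\rm Res}_\lambda$ and ${\rm Res}_\mu$ are simultaneously well defined and in which the Cauchy kernel reproduces $\widehat{\rho}(\lambda)$. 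These analytic underpinnings are the technical heart of the argument (handled in \cite{aratyn1998,loriswillox-1997}); the remainder is residue calculus together with the normalization of the SEP integration constants.
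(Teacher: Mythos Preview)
Your central claim in the second paragraph is false: the squared eigenfunction potential $\Omega(q(t'),\psi_{{\rm KP}}^*(t',\lambda))$ \emph{does} depend on $t'$. Indeed, the paper later derives the explicit formula \eqref{KP-SEP-psi-q}, namely $\Omega(q(t),\psi_{{\rm KP}}^*(t,\lambda))=-\lambda^{-1}q(t+[\lambda^{-1}])\psi_{{\rm KP}}^*(t,\lambda)$, which is manifestly $t$--dependent. What is $t'$--independent is only the residue $I(t,t')={\rm Res}_\lambda\psi_{{\rm KP}}(t,\lambda)\Omega(q(t'),\psi_{{\rm KP}}^*(t',\lambda))$. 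Your argument for the false claim breaks down exactly where you expected trouble: in the formal residue calculus used here, ${\rm Res}_\mu$ extracts the coefficient of $\mu^{-1}$ from a Laurent series in $\mu$; it does not localize to a ``pole'' at $\mu=\lambda$. Because the exponential $e^{\widetilde{\xi}(t',\mu)}$ contributes arbitrarily high positive powers of $\mu$, the Cauchy kernel $1/(\mu-\lambda)$ does not simply reproduce $\widehat{\rho}(\lambda)$, and the $t'$--dependence survives.

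The paper's route avoids all of this machinery. It differentiates $I(t,t')$ in $t'_n$; by \eqref{SEP-q-r} the derivative $\partial_{t'_n}\Omega(q(t'),\psi_{{\rm KP}}^*(t',\lambda))$ is a differential polynomial (in $x'$) in $\psi_{{\rm KP}}^*(t',\lambda)$, and each such term is annihilated by ${\rm Res}_\lambda\psi_{{\rm KP}}(t,\lambda)\,\cdot\,$ thanks to the bilinear identity \eqref{KP-bili-psi}. Hence $I(t,t')=f(t)$, and your own diagonal computation at $t'=t$ (which is correct) then gives $f(t)=-q(t)$. No spectral superposition of $q$, no closed form for $\Omega(\psi_{{\rm KP}},\psi_{{\rm KP}}^*)$, and no differential Fay identity are needed. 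For \eqref{KP-SEP-res} the paper is again more direct than your derivative--matching argument: using $\Omega(q,r)=\int^x qr\,dx$ one pulls ${\rm Res}_\lambda$ inside $\Omega(q(t),\cdot)$, applies \eqref{KP-SEP-r} to obtain $\Omega(q(t),r(t))+C(t')$, and fixes $C(t')$ by setting $t'=t$ and using the leading asymptotics \eqref{KP-SEP-q-O}, \eqref{KP-SEP-r-O}.
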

\begin{proof}
Firstly if set
$I(t,t')={\rm Res}_{\lambda} \psi_{{\rm KP}}(t,\lambda)\Omega(q(t'),\psi_{{\rm KP}}^*(t',\lambda)),$
we can find by \eqref{KP-bili-psi} and \eqref{SEP-q-r} that
$$\pa_{t'_n}I(t,t')=0,$$
thus we can assume $I(t,t')=f(t)$. Further note that
\begin{align}
\Omega(q(t'),\psi_{{\rm KP}}^*(t',\lambda))&=(-q\lambda^{-1}+\mathcal{O}(\lambda^{-2}))e^{-\widetilde{\xi}(t',\lambda)},\label{KP-SEP-q-O}
\end{align}
which is computed by $\pa^l(e^{\widetilde{\xi}(t,\lambda)})=\lambda^le^{\widetilde{\xi}(t,\lambda)}$ for $l\in \mathbb{Z}$,
\begin{align}\label{KP-psi-W}
\psi_{{\rm KP}}(t,\lambda)=(1+w_1\lambda^{-1}+w_2\lambda^{-2}+w_3\lambda^{-3}+\cdots)e^{\widetilde{\xi}(t,\lambda)},
\end{align}
and based upon \eqref{KP-SEP-q-O} and \eqref{KP-psi-W}, we can get
\begin{align*}
I(t,t')=f(t)=-q(t).
\end{align*}
Similarly we can prove \eqref{KP-SEP-r}.

Since $\Omega(q,r)=\int_{-\infty}^x qr dx$, one can find
\begin{align}
{\rm Res}_\lambda \Omega(q(t),\psi_{{\rm KP}}^*(t,\lambda))\Omega(\psi_{{\rm KP}}(t',\lambda), r(t'))&=\Omega\Big(q(t),{\rm Res}_{\lambda}\psi_{{\rm KP}}^*(t,\lambda)\Omega(\psi_{{\rm KP}}(t',\lambda),r(t'))\Big)+C(t')\nonumber\\
&=\Omega(q(t),r(t))+C(t'),\label{KP-SEP-CT}
\end{align}
where one should remember that $\Omega(q,r)$ can be determined up to some integral constant. If set $t'=t$ in \eqref{KP-SEP-CT} and consider \eqref{KP-SEP-q-O}, we can get
\begin{align}
\Omega(\psi_{\rm KP}(t,\lambda),r(t))=(r\lambda^{-1}+\mathcal{O}(\lambda^{-2}))e^{\widetilde{\xi}(t,\lambda)},\label{KP-SEP-r-O}
\end{align}
then
$$C(t)+\Omega(q(t),r(t))=0,$$
which leads to \eqref{KP-SEP-res}.
\end{proof}
If set $t-t'=[\mu^{-1}]$ in \eqref{KP-SEP-q}--\eqref{KP-SEP-res} and use \eqref{KP-SEP-q-O}\eqref{KP-SEP-r-O}, then we can get the corollary below.
\begin{corollary}\cite{aratyn1998,willox1998}
Given KP eigenfunction $q$ and adjoint KP eigenfunction $r$,
\begin{align}
&\Omega(q(t),\psi_{{\rm KP}}^*(t,\lambda))=-\lambda^{-1}q(t+[\lambda^{-1}])\psi_{{\rm KP}}^*(t,\lambda),\label{KP-SEP-psi-q}\\
&\Omega(\psi_{{\rm KP}}(t,\lambda),r(t))=\lambda^{-1}\psi_{{\rm KP}}(t,\lambda)r(t-[\lambda^{-1}]),\nonumber\\
&\Omega(q(t-[\lambda^{-1}]),r(t-[\lambda^{-1}]))-\Omega(q(t),r(t))=\lambda^{-1}q(t)r(t-[\lambda^{-1}]).\nonumber
\end{align}
\end{corollary}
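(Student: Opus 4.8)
The plan is to obtain the three identities as specializations of the spectral representation \eqref{KP-SEP-q}--\eqref{KP-SEP-res}, by setting the time shift $t-t'$ equal to $[\lambda^{-1}]$ (up to sign) and then extracting the coefficient of $\nu^{-1}$ in the residue. The whole computation rests on two inputs already in hand: the leading expansions \eqref{KP-SEP-q-O} and \eqref{KP-SEP-r-O}, and the generating identity $e^{\widetilde{\xi}(\pm[\lambda^{-1}],\nu)}=(1\mp\nu/\lambda)^{\mp1}$, which follows from $\widetilde{\xi}([\lambda^{-1}],\nu)=\sum_{n\ge1}\frac{1}{n}(\nu/\lambda)^n=-\log(1-\nu/\lambda)$. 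Throughout I would write the residue variable as $\nu$ to keep it distinct from the fixed spectral parameter $\lambda$ of the corollary, and I would use the dressing-series forms $\psi_{{\rm KP}}(t,\nu)=\widehat{w}(t,\nu)e^{\widetilde{\xi}(t,\nu)}$ and $\psi^{*}_{{\rm KP}}(t,\nu)=\widehat{w}^{*}(t,\nu)e^{-\widetilde{\xi}(t,\nu)}$, where by \eqref{KP-psi-tau} the exponential-free factors are $\widehat{w}(t,\nu)=\tau_{{\rm KP}}(t-[\nu^{-1}])/\tau_{{\rm KP}}(t)$ and $\widehat{w}^{*}(t,\nu)=\tau_{{\rm KP}}(t+[\nu^{-1}])/\tau_{{\rm KP}}(t)$.

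I would start with the third identity, which is the easiest because the relevant exponential is a polynomial. Substituting $t'=t-[\lambda^{-1}]$ into \eqref{KP-SEP-res}, the product of the two exponentials collapses to $e^{\widetilde{\xi}(-[\lambda^{-1}],\nu)}=1-\nu/\lambda$, so that by \eqref{KP-SEP-q-O}--\eqref{KP-SEP-r-O} the integrand is $(-q(t)\nu^{-1}+O(\nu^{-2}))(r(t')\nu^{-1}+O(\nu^{-2}))(1-\nu/\lambda)$. Only the product of the two leading coefficients can contribute to the coefficient of $\nu^{-1}$, and it does so through the $-\nu/\lambda$ term; reading off ${\rm Res}_\nu$ then gives $\lambda^{-1}q(t)r(t-[\lambda^{-1}])$ on the left and $\Omega(q(t),r(t))-\Omega(q(t-[\lambda^{-1}]),r(t-[\lambda^{-1}]))$ on the right. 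This is the content of the third identity; I would take particular care to track the ordering of the two $\Omega$-terms and the overall sign, since this is the one place where a bookkeeping slip is easy.

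For the first identity I would substitute $t'=t-[\lambda^{-1}]$ into \eqref{KP-SEP-q}, so that the exponential becomes $(1-\nu/\lambda)^{-1}=\sum_{k\ge0}(\nu/\lambda)^k$. Writing $\Omega(q(t'),\psi^{*}_{{\rm KP}}(t',\nu))=\widehat{c}(t',\nu)e^{-\widetilde{\xi}(t',\nu)}$ with $\widehat{c}$ a pure series in $\nu^{-1}$ whose leading coefficient is $-q(t')$, the product $\widehat{w}(t,\nu)\widehat{c}(t',\nu)$ is again a series in $\nu^{-1}$, and extracting the coefficient of $\nu^{-1}$ against $\sum_k(\nu/\lambda)^k$ resums the whole series at $\nu=\lambda$. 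This yields the closed relation $q(t)=-\lambda\,\widehat{w}(t,\lambda)\,\widehat{c}(t-[\lambda^{-1}],\lambda)$. Setting $s=t-[\lambda^{-1}]$ and solving for $\widehat{c}(s,\lambda)$, the inversion uses $\widehat{w}(s+[\lambda^{-1}],\lambda)=\tau_{{\rm KP}}(s)/\tau_{{\rm KP}}(s+[\lambda^{-1}])=1/\widehat{w}^{*}(s,\lambda)$, after which $\Omega(q(s),\psi^{*}_{{\rm KP}}(s,\lambda))=\widehat{c}(s,\lambda)e^{-\widetilde{\xi}(s,\lambda)}$ collapses precisely to $-\lambda^{-1}q(s+[\lambda^{-1}])\psi^{*}_{{\rm KP}}(s,\lambda)$, which is \eqref{KP-SEP-psi-q}. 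The second identity is proved the same way, starting from \eqref{KP-SEP-r}, taking $t'=t+[\lambda^{-1}]$, and using the complementary inversion $1/\widehat{w}^{*}(s-[\lambda^{-1}],\lambda)=\widehat{w}(s,\lambda)$.

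The step I expect to be the main obstacle is the resummation in the first two identities: one must avoid the temptation to substitute $\nu=\lambda$ directly inside the exponential factors, which is singular (indeed $\widetilde{\xi}(-[\lambda^{-1}],\lambda)$ diverges). The reassembly at $\nu=\lambda$ is legitimate only at the level of the exponential-free dressing series $\widehat{w},\widehat{c}$, and it is the tau-function inversion $\widehat{w}(s+[\lambda^{-1}],\lambda)\,\widehat{w}^{*}(s,\lambda)=1$ that finally closes the computation and turns the resummed series back into a wave function. The third identity, by contrast, needs no resummation at all and serves as a useful warm-up for the residue extraction.
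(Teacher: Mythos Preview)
Your approach is exactly the paper's: substitute $t-t'=[\mu^{-1}]$ into \eqref{KP-SEP-q}--\eqref{KP-SEP-res}, use the expansions \eqref{KP-SEP-q-O}, \eqref{KP-SEP-r-O}, and extract the residue. The paper states this in a single line; you have correctly unpacked the mechanism, in particular the point that the residue against $(1-\nu/\lambda)^{-1}$ resums the exponential-free dressing series at $\nu=\lambda$, and that the tau-function identity $\widehat{w}(s+[\lambda^{-1}],\lambda)\,\widehat{w}^{*}(s,\lambda)=1$ is what converts the result back into a wave function. That is precisely the content behind the paper's terse instruction.

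One remark on the third identity. Your residue computation is correct and yields
\[
\Omega(q(t),r(t))-\Omega(q(t-[\lambda^{-1}]),r(t-[\lambda^{-1}]))=\lambda^{-1}q(t)\,r(t-[\lambda^{-1}]),
\]
which is the \emph{negative} of the displayed formula in the corollary. This is a sign typo in the paper, not a flaw in your argument: a free-field check with $q=e^{\widetilde{\xi}(t,a)}$, $r=e^{-\widetilde{\xi}(t,b)}$, $\Omega(q,r)=(a-b)^{-1}qr$ confirms your sign. You already flagged the ordering of the two $\Omega$-terms as the delicate point; it would be worth stating explicitly that what you obtain disagrees in sign with the printed statement.
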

\noindent So by \eqref{KP-SEP-q}\eqref{KP-SEP-r}, we can find
\begin{align}
&q(t)={\rm Res}_{\lambda}\lambda^{-1}\psi_{{\rm KP}}(t,\lambda)\psi_{{\rm KP}}^*(t',\lambda)q(t'+[\lambda^{-1}]),\label{KP-q-res}\\
&r(t)={\rm Res}_{\lambda}\lambda^{-1}\psi_{{\rm KP}}^*(t,\lambda)\psi_{{\rm KP}}(t',\lambda)r(t'-[\lambda^{-1}]).\label{KP-r-res}
\end{align}
Next let $t-t'=[\lambda^{-1}]+[\mu^{-1}]$ in \eqref{KP-q-res}\eqref{KP-r-res} and we have the corollary below.
\begin{corollary}\cite{willox1998}
Given KP eigenfunction $q$ and adjoint KP eigenfunction $r$,
\begin{align}
&\lambda q(t-[\lambda^{-1}])\psi_{{\rm KP}}(t,\lambda)=\mu \Big(\psi_{{\rm KP}}(t,\lambda)q(t-[\mu^{-1}])-\psi_{{\rm KP}}(t-[\mu^{-1}],\lambda)q(t)\Big),  \label{KP-SEP-lambda-q}\\
&\lambda r(t+[\lambda^{-1}])\psi_{{\rm KP}}^*(t,\lambda)=\mu \Big(\psi_{{\rm KP}}^*(t,\lambda)r(t+[\mu^{-1}])-\psi_{{\rm KP}}^*(t+[\mu^{-1}],\lambda)r(t)\Big).\label{KP-SEP-lambda-r}
\end{align}
\end{corollary}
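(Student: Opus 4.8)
The plan is to derive \eqref{KP-SEP-lambda-q} directly from the spectral representation \eqref{KP-q-res} by specializing the free times as indicated, and then to obtain \eqref{KP-SEP-lambda-r} by the parallel adjoint computation from \eqref{KP-r-res}. Since the residue variable in \eqref{KP-q-res} already carries the name $\lambda$, I would first rename it to $\nu$, writing $q(t)={\rm Res}_{\nu}\nu^{-1}\psi_{{\rm KP}}(t,\nu)\psi_{{\rm KP}}^*(t',\nu)q(t'+[\nu^{-1}])$, and only then set $t'=t-[\lambda^{-1}]-[\mu^{-1}]$ with $\lambda,\mu$ genuinely free. Using $\psi_{{\rm KP}}(t,\nu)=\frac{\tau_{{\rm KP}}(t-[\nu^{-1}])}{\tau_{{\rm KP}}(t)}e^{\widetilde{\xi}(t,\nu)}$ from \eqref{KP-psi-tau} together with $\widetilde{\xi}([\lambda^{-1}],\nu)=-\log(1-\nu/\lambda)$, the integrand factorizes as
\begin{align*}
\nu^{-1}\psi_{{\rm KP}}(t,\nu)\psi_{{\rm KP}}^*(t',\nu)q(t'+[\nu^{-1}])=\nu^{-1}A(\nu)\,\frac{1}{(1-\nu/\lambda)(1-\nu/\mu)},
\end{align*}
where $A(\nu)=\frac{\tau_{{\rm KP}}(t-[\nu^{-1}])\tau_{{\rm KP}}(t'+[\nu^{-1}])}{\tau_{{\rm KP}}(t)\tau_{{\rm KP}}(t')}q(t'+[\nu^{-1}])$ is a power series in $\nu^{-1}$ that is regular at $\nu=\lambda,\mu$, so that the whole singular dependence has been pushed into the rational prefactor.

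The heart of the argument is the residue. I would record the elementary formal identity
\begin{align*}
{\rm Res}_{\nu}\Big[\nu^{-1}A(\nu)\frac{1}{(1-\nu/\lambda)(1-\nu/\mu)}\Big]=\frac{\mu A(\lambda)-\lambda A(\mu)}{\mu-\lambda},
\end{align*}
which follows from the partial fraction $\frac{1}{(1-\nu/\lambda)(1-\nu/\mu)}=\frac{1}{\mu-\lambda}\big(\frac{\mu}{1-\nu/\lambda}-\frac{\lambda}{1-\nu/\mu}\big)$ and the geometric expansion $\frac{1}{1-\nu/\lambda}=\sum_{k\geq0}(\nu/\lambda)^k$, after which extracting the coefficient of $\nu^{-1}$ resums $A(\nu)$ to its values $A(\lambda)$ and $A(\mu)$. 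Applying this and using the crucial cancellations $t'+[\lambda^{-1}]=t-[\mu^{-1}]$ and $t'+[\mu^{-1}]=t-[\lambda^{-1}]$, the residue collapses to two terms and yields the Hirota-type relation
\begin{align*}
(\mu-\lambda)\,q(t)\,\tau_{{\rm KP}}(t)\tau_{{\rm KP}}(t')=\tau_{{\rm KP}}(t-[\lambda^{-1}])\tau_{{\rm KP}}(t-[\mu^{-1}])\big(\mu\,q(t-[\mu^{-1}])-\lambda\,q(t-[\lambda^{-1}])\big).
\end{align*}

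Finally I would repackage this into the stated form. Dividing by $\tau_{{\rm KP}}(t)\tau_{{\rm KP}}(t-[\mu^{-1}])$ and multiplying by $e^{\widetilde{\xi}(t,\lambda)}$, the ratio $\frac{\tau_{{\rm KP}}(t-[\lambda^{-1}])}{\tau_{{\rm KP}}(t)}e^{\widetilde{\xi}(t,\lambda)}$ is $\psi_{{\rm KP}}(t,\lambda)$, while the shift formula $e^{\widetilde{\xi}(t-[\mu^{-1}],\lambda)}=(1-\lambda/\mu)e^{\widetilde{\xi}(t,\lambda)}$ lets me recognize $\frac{\tau_{{\rm KP}}(t')}{\tau_{{\rm KP}}(t-[\mu^{-1}])}e^{\widetilde{\xi}(t,\lambda)}=\frac{\mu}{\mu-\lambda}\psi_{{\rm KP}}(t-[\mu^{-1}],\lambda)$; substituting and rearranging produces exactly \eqref{KP-SEP-lambda-q}. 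The second identity \eqref{KP-SEP-lambda-r} is then obtained verbatim from \eqref{KP-r-res} with $t'=t+[\lambda^{-1}]+[\mu^{-1}]$, interchanging the roles of $\psi_{{\rm KP}}$ and $\psi_{{\rm KP}}^*$ and of eigenfunction and adjoint eigenfunction.

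The step I expect to be delicate is the residue evaluation. One must take seriously that after specialization $e^{\widetilde{\xi}(t-t',\nu)}$ is the power series $\frac{1}{(1-\nu/\lambda)(1-\nu/\mu)}$ in nonnegative powers of $\nu$, rather than an expansion at $\nu=\infty$, so that the formal residue genuinely localizes at $\nu=\lambda$ and $\nu=\mu$ and neither collapses to zero nor leaves an unresolved infinite sum. Keeping this expansion direction straight, together with careful bookkeeping of the signs and of the $\mu/(\mu-\lambda)$ prefactors through the partial-fraction and repackaging steps — and remembering the notational overload of $\lambda$ as both the original dummy variable and a new free parameter — is where essentially all the care is needed; the remainder is formal manipulation already licensed by the preceding lemma and corollaries.
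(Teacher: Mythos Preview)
Your proposal is correct and follows exactly the route the paper indicates: the paper's entire proof is the one-line instruction to set $t-t'=[\lambda^{-1}]+[\mu^{-1}]$ in \eqref{KP-q-res} and \eqref{KP-r-res}, and you have carefully supplied the residue computation via partial fractions and the tau-function repackaging that the paper leaves implicit. Your caution about the expansion direction of $e^{\widetilde\xi(t-t',\nu)}$ as a power series in nonnegative powers of $\nu$ is exactly the point that makes the residue localize at $\nu=\lambda,\mu$, and the rest of your bookkeeping checks out.
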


\begin{proposition}\label{prop:KP-tau-Darboux}\cite{willox1998,yang2022PhyD}
If $\tau_{{\rm KP}}$ is KP tau function, then for KP eigenfunction $q$ and adjoint KP eigenfunction $r$,
$$\tau_{{\rm KP}}\rightarrow\tau^{[1]}_{{\rm KP}}=q\cdot\tau_{{\rm KP}}\ {\rm or}\ r\cdot\tau_{{\rm KP}},$$
 will be another KP tau function satisfying KP bilinear equation \eqref{KP-bilinintro-tau}.
\end{proposition}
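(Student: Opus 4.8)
The plan is to verify the bilinear equation \eqref{KP-bilinintro-tau} for $\tau^{[1]}_{{\rm KP}}=q\cdot\tau_{{\rm KP}}$ directly, reducing it to the spectral identities collected in the two corollaries above; the case $r\cdot\tau_{{\rm KP}}$ is entirely parallel. First I would substitute $\tau^{[1]}_{{\rm KP}}=q\tau_{{\rm KP}}$ into the left-hand side of \eqref{KP-bilinintro-tau} and clear the tau factors using \eqref{KP-psi-tau}. Writing $\tau_{{\rm KP}}(t-[\lambda^{-1}])=\tau_{{\rm KP}}(t)\psi_{{\rm KP}}(t,\lambda)e^{-\widetilde{\xi}(t,\lambda)}$ and $\tau_{{\rm KP}}(t'+[\lambda^{-1}])=\tau_{{\rm KP}}(t')\psi^*_{{\rm KP}}(t',\lambda)e^{\widetilde{\xi}(t',\lambda)}$, the exponential $e^{\widetilde{\xi}(t-t',\lambda)}$ cancels completely, so that establishing \eqref{KP-Darboux-qtau} is equivalent to proving
\begin{align*}
B(t,t'):={\rm Res}_\lambda q(t-[\lambda^{-1}])q(t'+[\lambda^{-1}])\psi_{{\rm KP}}(t,\lambda)\psi^*_{{\rm KP}}(t',\lambda)=0.
\end{align*}

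The core of the argument is to rewrite the two $q$-factors with the spectral identities so that $B$ collapses. For the adjoint factor I would apply \eqref{KP-SEP-psi-q} to replace $q(t'+[\lambda^{-1}])\psi^*_{{\rm KP}}(t',\lambda)$ by $-\lambda\,\Omega(q(t'),\psi^*_{{\rm KP}}(t',\lambda))$, which absorbs an extra power of $\lambda$ into the remaining factor; for that factor I would invoke \eqref{KP-SEP-lambda-q} to express $\lambda q(t-[\lambda^{-1}])\psi_{{\rm KP}}(t,\lambda)$ as $\mu\big(\psi_{{\rm KP}}(t,\lambda)q(t-[\mu^{-1}])-\psi_{{\rm KP}}(t-[\mu^{-1}],\lambda)q(t)\big)$ for an auxiliary parameter $\mu$. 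After these two substitutions $B$ splits into two residues, each of the shape ${\rm Res}_\lambda\psi_{{\rm KP}}(s,\lambda)\Omega(q(t'),\psi^*_{{\rm KP}}(t',\lambda))$ with $s=t$ and $s=t-[\mu^{-1}]$, respectively.

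The key point is then \eqref{KP-SEP-q}: as its proof shows (the quantity $I(t,t')$ being independent of the second time variable), one has ${\rm Res}_\lambda\psi_{{\rm KP}}(s,\lambda)\Omega(q(t'),\psi^*_{{\rm KP}}(t',\lambda))=-q(s)$ for \emph{any} $s$. Feeding $s=t$ and $s=t-[\mu^{-1}]$ into the two pieces, $B$ becomes $\mu\big(q(t-[\mu^{-1}])q(t)-q(t)q(t-[\mu^{-1}])\big)=0$, which is exactly \eqref{KP-Darboux-qtau}; note that the $\mu$-dependence disappears precisely because the two contributions cancel. The identity \eqref{KP-Darboux-rtau} for $r\cdot\tau_{{\rm KP}}$ follows by the mirror computation, using \eqref{KP-SEP-lambda-r}, the analogue of \eqref{KP-SEP-psi-q} for $\Omega(\psi_{{\rm KP}}(t,\lambda),r(t))$, and \eqref{KP-SEP-r} in place of \eqref{KP-SEP-q}.

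I expect the main obstacle to be bookkeeping rather than conceptual: one must match the correct spectral identity to each factor and track the shifted arguments $t-[\lambda^{-1}]$, $t'+[\lambda^{-1}]$, and the auxiliary $[\mu^{-1}]$, while remembering that the SEP $\Omega$ is only defined up to an integration constant (as flagged in the lemma). The one genuine subtlety is justifying that ${\rm Res}_\lambda\psi_{{\rm KP}}(s,\lambda)\Omega(q(t'),\psi^*_{{\rm KP}}(t',\lambda))$ is independent of $t'$ and equals $-q(s)$; this is precisely the content established in the proof of \eqref{KP-SEP-q}, so it may be quoted directly rather than re-derived.
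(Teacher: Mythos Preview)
Your proposal is correct and follows essentially the same route as the paper: the paper's proof is a one--line pointer saying that \eqref{KP-Darboux-qtau} follows from \eqref{KP-SEP-q}, \eqref{KP-SEP-psi-q}, \eqref{KP-SEP-lambda-q} (and \eqref{KP-Darboux-rtau} from \eqref{KP-SEP-r}, the $r$--analogue of \eqref{KP-SEP-psi-q}, \eqref{KP-SEP-lambda-r}), and what you have written is precisely the fleshed--out version of that computation. Your bookkeeping with the auxiliary parameter $\mu$ and the use of the $t'$--independence of $I(t,t')$ in \eqref{KP-SEP-q} to evaluate the residue at shifted arguments are exactly the ingredients intended.
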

\begin{proof}
Just as we stated before, we only need to prove \eqref{KP-Darboux-qtau} and \eqref{KP-Darboux-rtau}. For \eqref{KP-Darboux-qtau}, it can be proved by \eqref{KP-SEP-q} \eqref{KP-SEP-psi-q} \eqref{KP-SEP-lambda-q}, while according \eqref{KP-SEP-r} \eqref{KP-SEP-psi-q} \eqref{KP-SEP-lambda-r}, we can prove \eqref{KP-Darboux-rtau}.
\end{proof}
For elementary Darboux transformations $T_d$ and $T_i$, the following diagram can commute
$$\xymatrix{&&L^{[1]}\ar[drr]^{T_\beta^{[1]}}&&\\
L^{[0]}\ar[urr]^{T_\alpha}\ar[drr]_{T_\beta}& &&& L^{[2]}\\
&&L^{[1]}\ar[urr]_{T_\alpha^{[1]}}&&}$$
where $\alpha,\beta\in \{d,i\}$, that is,
\begin{align*}
 T_i[r^{[1]}]T_d[q]=T_d[q^{[1]}]T_i[r],\  T_d[q^{[1]}_1]T_d[q_2]=T_d[q^{[1]}_2]T_d[q_1],\  T_i[r^{[1]}_1]T_i[r_2]=T_i[r^{[1]}_2]T_i[r_1].
\end{align*}
Due to the commutativity of elementary Darboux transformations, we can consider the following chain of Darboux transformations
\begin{eqnarray*}
&&L\xrightarrow{T_d[q_1]}L^{[1]}\xrightarrow{T_d[q_2^{[1]}]}L^{[2]}
  \rightarrow\cdots\rightarrow L^{[n-1]}\xrightarrow{T_d[q_n^{[n-1]}]}L^{[n]}\\
&&\xrightarrow{T_i[r_1^{[n]}]}L^{[n+1]}\xrightarrow{T_i[r_2^{[n+1]}]}\cdots\rightarrow L^{[n+k-1]}\xrightarrow{T_i[r_k^{[n+k-1]}]}L^{[n+k]},
\end{eqnarray*}
where $q_i$ and $r_j$ are KP eigenfunctions and adjoint eigenfunctions. If denote
\begin{align*}
T^{[n,k]}=T_i(r_k^{[n+k-1]})\cdots
T_i(r_1^{[n]})T_d(q_n^{[n-1]})\cdots T_d(q_1),
\end{align*}
then $L^{[n+k]}=T^{[n,k]}L(T^{[n,k]})^{-1}$.

Here we are more interested in the case of $n=k=1$, which is just the binary Darboux transformation $T(q,r)$, that is
\begin{align*}
T(q,r)=T_d[q^{[1]}]T_i[r]=1-\Omega^{-1}(q,r)q\pa^{-1}r.
\end{align*}
The inverse of $T(q,r)$ is $T_i[r]^{-1}T_d[q^{[1]}]^{-1}$, i.e. $$T(q,r)^{-1}=1+q\pa^{-1}r\Omega(q,r)^{-1}.$$
We use the symbol $A^{\{n\}}$ to denote the transformed object A under $n$--step binary Darboux transformations.

\begin{corollary}
If $\tau_{{\rm KP}}(t)$ is the KP tau function, then under 1--step binary Darboux transformation $T(q,r)$,
\begin{align*}
\tau^{\{1\}}_{{\rm KP}}(t)=\Omega(q,r)\tau_{{\rm KP}}(t),
\end{align*}
which satisfies
\begin{align*}
{\rm Res}_\lambda\tau^{\{1\}}_{{\rm KP}}(t-[\lambda^{-1}])\tau^{\{1\}}_{{\rm KP}}(t'+[\lambda^{-1}])e^{\widetilde{\xi}(t-t',\lambda)}=0.
\end{align*}
\end{corollary}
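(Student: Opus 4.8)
The plan is to realize the binary Darboux transformation $T(q,r)$ as a composition of two elementary Darboux transformations and then invoke Proposition \ref{prop:KP-tau-Darboux} twice. Recall that $T(q,r)=T_d[q^{[1]}]T_i[r]$, so I would first let $T_i[r]$ act and afterwards $T_d[q^{[1]}]$. By Proposition \ref{prop:KP-tau-Darboux}, the first step sends the tau function to $\tau^{[1]}_{{\rm KP}}=r\cdot\tau_{{\rm KP}}$, which is again a KP tau function, while simultaneously the eigenfunction $q$ is carried to the transformed eigenfunction $q^{[1]}=T_i[r](q)$ of the new hierarchy $L^{[1]}$ (the eigenfunction row of Table I). Applying Proposition \ref{prop:KP-tau-Darboux} a second time, now to the elementary transformation $T_d[q^{[1]}]$ acting on the tau function $\tau^{[1]}_{{\rm KP}}$, yields $\tau^{\{1\}}_{{\rm KP}}=q^{[1]}\cdot\tau^{[1]}_{{\rm KP}}$, which is therefore a KP tau function satisfying the bilinear equation \eqref{KP-bilinintro-tau}.

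It then remains only to identify this product with $\Omega(q,r)\tau_{{\rm KP}}$. Using $T_i[r]=r^{-1}\pa^{-1}r$ together with the defining relation $\Omega(q,r)_x=qr$, I would compute
\begin{align*}
q^{[1]}=T_i[r](q)=r^{-1}\pa^{-1}(rq)=r^{-1}\Omega(q,r),
\end{align*}
so that
\begin{align*}
\tau^{\{1\}}_{{\rm KP}}=q^{[1]}\cdot\tau^{[1]}_{{\rm KP}}=r^{-1}\Omega(q,r)\cdot r\,\tau_{{\rm KP}}=\Omega(q,r)\tau_{{\rm KP}},
\end{align*}
which is the claimed formula and makes transparent why the binary transformation acts on the tau function by multiplication with the SEP $\Omega(q,r)$.

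The argument has essentially no deep obstacle, since it reduces to two applications of an already-established result; the only point requiring genuine care is the integration constant. Because $\Omega(q,r)$ is defined only up to an additive constant, the identification $\pa^{-1}(rq)=\Omega(q,r)$, and hence the overall normalization of $\tau^{\{1\}}_{{\rm KP}}$, must be fixed consistently; this I would do with the same spectral normalization \eqref{KP-SEP-q-O}--\eqref{KP-SEP-r-O} already used above, noting that the bilinear equation \eqref{KP-bilinintro-tau} is invariant under scaling $\tau_{{\rm KP}}$ by a constant, so the ambiguity never affects the tau function property. As an independent consistency check one could instead substitute $\tau^{\{1\}}_{{\rm KP}}=\Omega(q,r)\tau_{{\rm KP}}$ directly into \eqref{KP-bilinintro-tau} and run the spectral identities \eqref{KP-SEP-q}, \eqref{KP-SEP-psi-q}, and \eqref{KP-SEP-lambda-q} exactly as in the proof of Proposition \ref{prop:KP-tau-Darboux}, but the compositional route above is the shorter path.
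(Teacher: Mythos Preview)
Your proposal is correct and follows essentially the same approach as the paper: decompose $T(q,r)=T_d[q^{[1]}]T_i[r]$, compute $q^{[1]}=T_i[r](q)=r^{-1}\Omega(q,r)$, and apply Proposition~\ref{prop:KP-tau-Darboux} twice to obtain $\tau^{\{1\}}_{{\rm KP}}=q^{[1]}r\,\tau_{{\rm KP}}=\Omega(q,r)\tau_{{\rm KP}}$. The paper's proof is just a terser version of exactly this argument.
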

\begin{proof}
Firstly under $T_{i}(r)$, $q^{[1]}=r^{-1}\Omega(q,r)$. Then
$$\tau^{\{1\}}_{{\rm KP}}(t)=\Omega(q,r)\tau_{{\rm KP}}(t)=q^{[1]}r\cdot\tau_{{\rm KP}}(t).$$
So by Proposition \ref{prop:KP-tau-Darboux}, we can prove this corollary.
\end{proof}

Given KP eigenfunction $q'$ and adjoint eigenfunction $r'$, under binary Darboux transformation $T(q,r)$, $\Omega(q',r')$ will be transformed into
\begin{align*}
\Omega(q'^{\{1\}},r'^{\{1\}})=\Omega(q',r')-\Omega(q',r)\Omega(q,r')\Omega(q,r)^{-1}.
\end{align*}
\begin{corollary}\label{coro:KP-Darboux-n}\cite{he2002}
Under $n$--step binary Darboux transformation
\begin{align*}
T^{\{n\}}=T(q_n^{\{n-1\}},r_n^{\{n-1\}})
T(q_n^{\{n-2\}},r_n^{\{n-2\}})
\cdots T(q_1,r_1),
\end{align*}
where $q_i$ and $r_j$ are KP eigenfunctions and adjoint eigenfunctions,
the KP tau function $\tau_{{\rm KP}}$ will be transformed into
\begin{align}\label{KP-binary-tau}
\tau^{\{n\}}_{{\rm KP}}
=\Omega_{nn}^{\{n-1\}}\Omega_{n-1, n-1}^{\{n-2\}}\cdots\Omega_{22}^{\{1\}}\Omega_{11}\cdot\tau_{{\rm KP}}=
\left|\begin{array}{ccc}
\Omega_{11} & \cdots &\Omega_{1n}\\
\cdots & \cdots &\cdots \\
\Omega_{n1}& \cdots &\Omega_{nn}
\end{array}
\right|\cdot\tau_{{\rm KP}},
\end{align}
where $\Omega_{ij}^{\{k\}}=\Omega(q_i^{\{k\}},r_j^{\{k\}})$ and $\Omega_{ij}=\Omega(q_i,r_j)$.
\end{corollary}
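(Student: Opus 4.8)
The plan is to establish both equalities by induction on $n$, using two facts already available: the single-step tau rule $\tau^{\{1\}}_{\rm KP}=\Omega(q,r)\tau_{\rm KP}$ from the preceding corollary, and the single-step SEP transformation rule $\Omega(q'^{\{1\}},r'^{\{1\}})=\Omega(q',r')-\Omega(q',r)\Omega(q,r')\Omega(q,r)^{-1}$ quoted just above the statement. The first equality (the product formula) is then immediate: assuming $\tau^{\{n-1\}}_{\rm KP}=\Omega^{\{n-2\}}_{n-1,n-1}\cdots\Omega_{11}\tau_{\rm KP}$, the $n$-th binary transformation $T(q_n^{\{n-1\}},r_n^{\{n-1\}})$ multiplies the tau function by $\Omega(q_n^{\{n-1\}},r_n^{\{n-1\}})=\Omega^{\{n-1\}}_{nn}$, which yields the claimed telescoping product.

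The substance lies in the second equality, and the key observation I would exploit is that the SEP recursion is precisely Gaussian elimination on the matrix $M=(\Omega_{ij})_{1\le i,j\le n}$. Specializing the single-step rule with $q=q_1$, $r=r_1$, $q'=q_i$, $r'=r_j$ gives $\Omega^{\{1\}}_{ij}=\Omega_{ij}-\Omega_{i1}\Omega_{1j}\Omega_{11}^{-1}$, which is exactly the Schur-complement step that clears the first row and column of $M$ using the pivot $\Omega_{11}$. Iterating, the $k$-th binary transformation produces $\Omega^{\{k\}}_{ij}=\Omega^{\{k-1\}}_{ij}-\Omega^{\{k-1\}}_{ik}\Omega^{\{k-1\}}_{kj}(\Omega^{\{k-1\}}_{kk})^{-1}$ for $i,j>k$, i.e. step $k$ of Gaussian elimination without pivoting, with pivot $\Omega^{\{k-1\}}_{kk}$.

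I would then close the argument with the Schur-complement determinant identity. Since all the $\Omega_{ij}$ are scalar (commuting) functions, one has $\det M=\Omega_{11}\det(\Omega^{\{1\}}_{ij})_{2\le i,j\le n}$, and an induction on the matrix size shows that the product of pivots $\Omega_{11}\Omega^{\{1\}}_{22}\cdots\Omega^{\{n-1\}}_{nn}$ equals $\det M$; commutativity of the scalars makes the order of the product (reversed in the statement) irrelevant. The main obstacle I anticipate is purely bookkeeping: verifying that the superscripts are aligned so that $\Omega^{\{k\}}_{ij}$ is genuinely the $(i,j)$ entry after $k$ elimination steps, and checking that the pivot used by the $k$-th transformation is $\Omega^{\{k-1\}}_{kk}$ and is not disturbed by the later transformations acting on indices $>k$. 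Once the recursion is correctly matched to elimination, the determinant identity itself is routine linear algebra.
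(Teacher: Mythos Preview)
Your proposal is correct and follows essentially the same approach as the paper: both establish the product formula by iterating the single-step tau rule, and both reduce the determinant equality to the Schur-complement identity $\det(\Omega_{ij})_{1\le i,j\le n}=\Omega_{11}\det(\Omega^{\{1\}}_{ij})_{2\le i,j\le n}$ via the recursion $\Omega^{\{1\}}_{ij}=\Omega_{ij}-\Omega_{i1}\Omega_{1j}\Omega_{11}^{-1}$, proved by induction on $n$. The paper makes the row operation explicit (adding $\Omega_{i1}\Omega_{11}^{-1}$ times the first row to row $i$), while you phrase it as Gaussian elimination with pivots, but the content is identical.
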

\begin{proof}
Firstly, it is obviously that $\tau^{\{n\}}_{{\rm KP}}
=\Omega_{nn}^{\{n-1\}}\Omega_{n-1, n-1}^{\{n-2\}}\cdots\Omega_{22}^{\{1\}}\Omega_{11}\cdot\tau_{{\rm KP}}$ step by step. Then let us assume that \eqref{KP-binary-tau} is correct for $n-1$. Note that $\tau_{{\rm KP}}^{\{n\}}$ can be viewed as the transformed KP tau function under $(n-1)$--step binary Darboux transformations, starting from $\tau_{{\rm KP}}^{\{1\}}=\Omega_{11}\tau_{{\rm KP}}$, thus
\begin{align*}
\tau^{\{n\}}_{{\rm KP}}=
\left|\begin{array}{ccc}
\Omega^{\{1\}}_{22} & \cdots &\Omega^{\{1\}}_{2n}\\
\cdots & \cdots &\cdots \\
\Omega^{\{1\}}_{n2}& \cdots &\Omega^{\{1\}}_{nn}
\end{array}
\right|\cdot\tau_{{\rm KP}}^{\{1\}}=\left|\begin{array}{ccc}
\Omega^{\{1\}}_{22} & \cdots &\Omega^{\{1\}}_{2n}\\
\cdots & \cdots &\cdots \\
\Omega^{\{1\}}_{n2}& \cdots &\Omega^{\{1\}}_{nn}
\end{array}
\right|\cdot\Omega_{11}\tau_{{\rm KP}}.
\end{align*}
Next for the determinant below,
\begin{align*}
\left|\begin{array}{cccc}
\Omega_{11}&\Omega_{12} & \cdots &\Omega_{1n}\\
0&\Omega^{\{1\}}_{22} & \cdots &\Omega^{\{1\}}_{2n}\\
\cdots&\cdots & \cdots &\cdots \\
0&\Omega^{\{1\}}_{n2}& \cdots &\Omega^{\{1\}}_{nn}
\end{array}
\right|=\left|\begin{array}{ccc}
\Omega^{\{1\}}_{22} & \cdots &\Omega^{\{1\}}_{2n}\\
\cdots & \cdots &\cdots \\
\Omega^{\{1\}}_{n2}& \cdots &\Omega^{\{1\}}_{nn}
\end{array}
\right|\cdot\Omega_{11},
\end{align*}
if add the multiplication of the first row by $\Omega_{i1}\Omega_{11}^{-1}$ to the $i$--th row for $2\leq i\leq n$ and notice that
\begin{align*}
\Omega^{\{1\}}_{ij}=\Omega_{ij}-\Omega_{i1}\Omega_{1j}\Omega_{11}^{-1},
\end{align*}
we can finally obtain
\begin{align*}
\tau^{\{n\}}_{{\rm KP}}=
\left|\begin{array}{ccc}
\Omega_{11} & \cdots &\Omega_{1n}\\
\cdots & \cdots &\cdots \\
\Omega_{n1}& \cdots &\Omega_{nn}
\end{array}
\right|\cdot\tau_{{\rm KP}}.
\end{align*}
\end{proof}

\subsection{CKP Darboux transformations}\label{CKP:Darboux}
Recall that the CKP hierarchy \cite{Date1981JPAJ,Zabrodin2021} is defined by
\begin{align*}
L^*=-L,
\end{align*}
and
\begin{align*}
L_{t_n}=[(L^n)_{\geq 0},L],\  n=1,3,5,\cdots,
\end{align*}
with Lax operator $$L=\pa+\sum_{j=1}^{\infty}u_{j+1}(t_o)\pa^{-j}.$$
In CKP case, all the adjoint eigenfunctions are also eigenfunctions and vice versa. Different from the KP case, the CKP Darboux transformation operator $T$ should satisfy the following two conditions \cite{HeJMP2007}
\begin{itemize}
  \item $(L^{[1]})^*=-L^{[1]},$
  \item $L^{[1]}_{t_n}=[(L_{\geq 0}^{[1]})^n,L^{[1]}],$
\end{itemize}
where $L^{[1]}=TLT^{-1}$. The elementary CKP Darboux transformation is given by
$$T[q]\triangleq T(q,q)=T_d[q^{[1]}]T_i[q]=1-\Omega^{-1}(q,q)q\pa^{-1}q,$$
where $q=q(t_o)$ is CKP eigenfunction.

Let us recall that the CKP tau function $\tau_{{\rm CW}}(t_o)$ is related with the CKP wave function $\psi(t_o,\lambda)$ in the way below \cite{chang2013}.
\begin{eqnarray}
\psi(t_o,\lambda)=\left(1+\frac{1}{\lambda}\pa_x{\rm log}\frac{\tau_{{\rm CW}}(t_o-2[\lambda^{-1}]_o)}{\tau_{\rm CW}(t_o)}\right)^{\frac{1}{2}}
\frac{\tau_{\rm CW}(t_o-2[\lambda^{-1}]_o)}{\tau_{\rm CW}(t_o)}e^{\xi(t_o,\lambda)}.\label{ckpwavetau}
\end{eqnarray}
For convenience, we will represent \eqref{ckpwavetau} as
\begin{align*}
\psi(t_o,\lambda)=G(t_o,\lambda)\cdot\varphi(t_o,\lambda)=(2\lambda)^{-\frac{1}{2}}\sqrt{\pa_x\varphi^2(t_o,\lambda)},
\end{align*}
where
\begin{align*}
&\varphi(t_o,\lambda)=\frac{\tau_{{\rm CW}}(t_o-2[\lambda^{-1}]_o)}{\tau_{\rm CW}(t_o)}e^{\xi(t_o,\lambda)},\ \ G(t,\lambda)=\left(1+\frac{1}{\lambda}\pa_x{\rm log}\frac{\tau_{{\rm CW}}(t_o-2[\lambda^{-1}]_o)}{\tau_{\rm CW}(t_o)}\right)^{\frac{1}{2}}.
\end{align*}
Here we will investigate the change of CKP tau function $\tau_{{\rm CW}}$ under the Darboux transformation $T[q]$.

Firstly by comparing the coefficients of $\pa^{-1}$ in
\begin{align*}
W^{\{1\}}=T[q]W,
\end{align*}
for CKP dressing operator $W=1+\sum_{j=1}^\infty w_j \partial^{-j}$ and using \eqref{ckpwavetau} with $\psi(t_0,\lambda)=W(e^{\xi(t_0,\lambda)})$, we can obtain
\begin{align*}
2\pa_x\log\tau^{\{1\}}_{{\rm CW}}=\pa_x\log\Omega(q,q)+2\pa_x\log\tau_{{\rm CW}},
\end{align*}
which implies that
\begin{align*}
\tau_{{\rm CW}}^{\{1\}}=\Omega^{\frac{1}{2}}(q,q)\tau_{{\rm CW}}.
\end{align*}
So we need to prove $\Omega^{\frac{1}{2}}(q,q)\tau_{{\rm CW}}$ still satisfies CKP bilinear equation, that is Theorem \ref{Th:darboux-bili}.

To do this, let us review properties of CKP SEP \cite{cheng2014}. Firstly given CKP eigenfunction $q(t_o)$,
\begin{align}\label{CKP-SEP-q}
q(t_o)=-{\rm Res}_{\lambda}\psi(t_o,\lambda)\Omega(q(t'_o),\psi(t'_o,-\lambda)),
\end{align}
which is just the special case of \eqref{KP-SEP-q} by noting the CKP adjoint wave function $\psi^*(t_o,\lambda)=\psi(t_o,-\lambda)$.
Similar to \eqref{KP-SEP-psi-q}, if set $t'_o-t_o=2[\mu^{-1}]_o$ in \eqref{CKP-SEP-q}, one can get
\begin{align}\label{SEP-q-psi}
\Omega(q(t_o),\psi(t_o,\lambda))=\frac{\psi(t_o,\lambda)}{2\lambda G^2(t_o,\lambda)}\Big(q(t_o)+q(t_o-2[\lambda^{-1}]_o)\Big).
\end{align}
\begin{remark}
Given two CKP eigenfunctions $q_1$ and $q_2$, one can find \cite{cheng2014}
\begin{align*}
\Omega(q_1,q_2)=\Omega(q_2,q_1).
\end{align*}
Therefore we expect
\begin{align}\label{CKP-SEP-psi-comm}
\Omega(\psi(t_o,\lambda),\psi(t_o,\mu))=\Omega(\psi(t_o,\mu),\psi(t_o,\lambda)),
\end{align}
but it is difficult to see this directly
from the corresponding expressions in \eqref{SEP-q-psi}.
In fact by equation (2.25) in \cite{Krichever-CMP-2021} (there is a small typo in the sign of b), we can know
\begin{align}\label{CKP-Zab}
\frac{1}{a+b}\left(af(t_o,a)\frac{w(t_o-2[b^{-1}]_o,a)}{w(t_o,a)}
+bf(t_o,b)\frac{w(t_o-2[a^{-1}]_o,b)}{w(t_o,b)}\right)
=\frac{af(t_o,a)-bf(t_o,b)}{a-b},
\end{align}
where $\psi(t_o,\lambda)=w(t_o,\lambda)e^{\xi(t_o,\lambda)}$, $w(t_o,\lambda)=1+\sum_{i=1}^{\infty}w_i(t_o)\lambda^{-i}$ and
\begin{align*}
f(t_o,\lambda)=G^2(t_o,\lambda)=1+\frac{1}{2\lambda}\Big(w_1(t_o)-w_1(t_o-2[\lambda^{-1}]_o)\Big).
\end{align*}
Then by \eqref{CKP-Zab}, we can finally find
\begin{align*}
\frac{\psi(t_o,\lambda)}{2\lambda f(t_o,\lambda)}\Big(\psi(t_o,\mu)+\psi(t_o-2[\lambda^{-1}]_o,\mu)\Big)
-\frac{\psi(t_o,\mu)}{2\mu f(t_o,\mu)}\Big(\psi(t_o,\lambda)+\psi(t_o-2[\mu^{-1}]_o,\lambda)\Big)=\delta(\lambda,-\mu),
\end{align*}
which shows that \eqref{CKP-SEP-psi-comm} holds up to constant $\delta(\lambda,-\mu)=\sum_{k\in \mathbb{Z}}\lambda^{k-1}\mu^{-k}$.
\end{remark}
\begin{lemma}\label{lemma:SEP-t-2z}
Given two CKP eigenfunctions $q_1$ and $q_2$,
\begin{align*}
\Omega_{12}(t_o-2[\lambda^{-1}])-\Omega_{12}(t_o)=-\frac{2\lambda}{\varphi^2(t_o,\lambda)}\Omega\big(q_1(t_o),\psi(t_o,\lambda)\big)\cdot\Omega\big(q_2(t_o),\psi(t_o,\lambda)\big)
,
\end{align*}
where $\Omega_{12}(t_o)=\Omega(q_1(t_o),q_2(t_o))$.
\end{lemma}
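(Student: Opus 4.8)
The plan is to prove the identity by showing that the two sides have the same $\pa_x$--derivative and the same behaviour as $x\to-\infty$, thereby avoiding any $\lambda$--residue manipulation; the latter is delicate in the CKP setting precisely because of the formal delta--function terms $\delta(\lambda,-\mu)$ noted in the remark preceding the lemma. Throughout I would normalise every SEP by $\Omega(a,b)=\int_{-\infty}^{x}ab\,dx$, exactly as in the proof of \eqref{KP-SEP-res}, so that $\Omega$ vanishes as $x\to-\infty$ and $\Omega(a,b)_x=ab$. For brevity write $\hat{t}=t_o-2[\lambda^{-1}]_o$ and $\hat{q}_i=q_i(\hat t)$.

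First I would record two boundary facts. With the above normalisation, $\Omega_{12}(t_o)\to0$ and $\Omega_{12}(\hat t)\to0$ as $x\to-\infty$ (the Miwa shift sends $x$ to $x-2\lambda^{-1}$, which still tends to $-\infty$), so the left--hand side tends to $0$; likewise $\Omega(q_i(t_o),\psi(t_o,\lambda))\to0$, so the right--hand side tends to $0$. Hence it suffices to check that the two $\pa_x$--derivatives agree, since a function of $t_o$ whose $\pa_x$ vanishes and which vanishes at $x=-\infty$ is identically zero. One also notes that the integration constant in $\Omega_{12}$ cancels in the difference on the left, so the statement is independent of that ambiguity.

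Next I would compute the two derivatives. For the left--hand side, using $\Omega_{12,x}=q_1q_2$ together with the chain rule for the shift $t_o\mapsto\hat t$ gives $\pa_x\big(\Omega_{12}(\hat t)-\Omega_{12}(t_o)\big)=\hat q_1\hat q_2-q_1q_2$. For the right--hand side I would differentiate the product $\varphi^{-2}\,\Omega(q_1,\psi)\,\Omega(q_2,\psi)$ using the three ingredients $\Omega(q_i,\psi)_x=q_i\psi$, the defining square--root relation $\psi^2=\tfrac{1}{2\lambda}\pa_x\varphi^2$ (equivalently $\pa_x\varphi^{-2}=-2\lambda\psi^2\varphi^{-4}$), and $\psi=G\varphi$. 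After applying the product rule I would substitute the spectral formula \eqref{SEP-q-psi}, namely $\Omega(q_i,\psi)=\tfrac{\psi}{2\lambda G^2}(q_i+\hat q_i)$, and use $\psi^2=G^2\varphi^2$ to clear all $\varphi$ and $G$ factors. The three resulting terms are, up to the overall factor $-2\lambda$, equal to $-\tfrac1{2\lambda}(q_1+\hat q_1)(q_2+\hat q_2)$, $+\tfrac1{2\lambda}q_1(q_2+\hat q_2)$ and $+\tfrac1{2\lambda}q_2(q_1+\hat q_1)$; their sum telescopes to $\hat q_1\hat q_2-q_1q_2$, matching the left--hand side.

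The computational heart, and the only place where care is needed, is this last cancellation. The reason it closes so cleanly is that the square--root structure of the CKP wave function enters exactly through $\pa_x\varphi^2=2\lambda\psi^2$ and through the symmetric combination $q_i+\hat q_i$ in \eqref{SEP-q-psi}, so that every $\varphi$, $G$ and $\psi$ factor cancels and a purely eigenfunction expression survives. The main obstacle is therefore organisational rather than conceptual: one must feed in $\psi=G\varphi$, $\psi^2=G^2\varphi^2$ and \eqref{SEP-q-psi} in the right order so that the three product--rule terms reduce to differences of $q_iq_j$, after which matching $\pa_x$ together with the vanishing at $x=-\infty$ yields the claimed identity.
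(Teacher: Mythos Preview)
Your argument is correct and takes a genuinely different route from the paper's proof. The paper proceeds by specialising the residue identity \eqref{CKP-SEP-res} (itself the CKP version of \eqref{KP-SEP-res}): setting $t_o-t'_o=2[\mu^{-1}]_o$, extracting the residue at $\lambda=\mu$, and then simplifying via the bilinear--equation identity $w(t_o,\mu)\,w(t_o-2[\mu^{-1}]_o,-\mu)=G^2(t_o,\mu)$ to reach the intermediate form $-\frac{1}{2\lambda G^2}(q_1+\hat q_1)(q_2+\hat q_2)$, which is finally repackaged via \eqref{SEP-q-psi} into the statement. You instead match $\pa_x$--derivatives directly, feeding in only \eqref{SEP-q-psi} together with the square--root relation $\pa_x\varphi^2=2\lambda\psi^2$ and $\psi=G\varphi$; the three product--rule terms collapse to $\hat q_1\hat q_2-q_1q_2$ exactly as you claim (I checked the algebra), and the boundary normalisation $\Omega=\int_{-\infty}^x$ pins down the constant.

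What each approach buys: your route is more elementary and self--contained, since it never invokes the residue formula \eqref{CKP-SEP-res} or the $w$--identity coming from the CKP bilinear equation; everything is reduced to the defining relations for $\psi$, $\varphi$, $G$ and the spectral formula \eqref{SEP-q-psi}. The paper's route, by contrast, makes the connection with the general KP SEP machinery explicit and shows how the lemma is a shifted specialisation of \eqref{KP-SEP-res}. One small remark: your vanishing--at--$-\infty$ argument for the right--hand side is a $0\cdot\infty$ form as written ($\varphi^{-2}$ blows up while $\Omega(q_i,\psi)$ vanishes), so it is cleaner to first substitute \eqref{SEP-q-psi} and cancel the $\varphi$'s, obtaining the finite expression $-\tfrac{1}{2\lambda G^2}(q_1+\hat q_1)(q_2+\hat q_2)$ before invoking the boundary behaviour; this is the same formal convention used in the proof of \eqref{KP-SEP-res}, so there is no inconsistency.
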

\begin{proof}
Firstly by \eqref{KP-SEP-res} and CKP adjoint wave function $\psi^*(t_o,\lambda)=\psi(t_o,-\lambda)$,
\begin{align}\label{CKP-SEP-res}
{\rm Res}_\lambda \Omega(q_1(t_o),\psi(t_o,\lambda))\Omega(q_2(t_o'),\psi(t_o',-\lambda))=\Omega_{12}(t'_o)-\Omega_{12}(t_o).
\end{align}
Next if assume $\Omega(q_i(t_o),\psi(t_o,\lambda))=K_i(t_o,\lambda)e^{\xi(t_o,\lambda)}$, then by \eqref{KP-SEP-q-O} and \eqref{SEP-q-psi},
\begin{align*}
K_i(t_o,\lambda)=\frac{w(t_o,\lambda)}{2\lambda G^2(t_o,\lambda)}\Big(q_i(t_o)+q_i(t_o-2[\lambda^{-1}]_o)\Big)=q_i(t_o)\lambda^{-1}+\mathcal{O}(\lambda^{-2}).
\end{align*}
By setting $t_o-t'_o=2[\mu^{-1}]_o$ in \eqref{CKP-SEP-res} and computing the residue at $\lambda=\mu$, we can find
\begin{align*}
&\Omega_{12}(t_o)-\Omega_{12}(t_o-2[\mu^{-1}]_o)=2\mu K_1(t_o,\mu)K_2(t_o-2[\mu^{-1}]_o,-\mu)\\
&=-\frac{\big(q_1(t_o)+q_1(t_o-2[\mu^{-1}]_o)\big)\big(q_2(t_o)+q_2(t_o-2[\mu^{-1}]_o)\big)}{2\mu G^2(t_o,\mu)}\\
&=-\frac{2\mu\Omega(q_1(t_o),\psi(t_o,\mu))\Omega(q_2(t_o),\psi(t_o,\mu))}{\varphi^2(t_o,\mu)}.
\end{align*}
Here we have used $w(t_o,\mu)w(t_o-2[\mu^{-1}]_o,-\mu)=f(t_o,\mu)=G^2(t_o,\mu)$, which is derived by setting $t_o'=t_o-2[\mu^{-1}]_o$ in CKP bilinear equation \eqref{CKP-bilinintro-psi} (one can refer to \cite{Krichever-CMP-2021}).
\end{proof}

\subsection{Proof of Theorem \ref{Th:darboux-bili}}

Firstly by Lemma \ref{lemma:SEP-t-2z}, one can know that
\begin{align}\label{CKP-Darboux-psi}
\psi^{\{1\}}(t_o,\lambda)=\psi(t_o,\lambda)-\frac{q(t_o)\Omega\big(q(t_o),\psi(t_o,\lambda)\big)}{\Omega(t_o)},
\end{align}
where $\Omega(t_o)=\Omega(q(t_o),q(t_o))$. Then by \eqref{CKP-bilinintro-psi}\eqref{CKP-SEP-q} and \eqref{CKP-SEP-res},
\begin{align*}
&{\rm Res}_\lambda\psi^{\{1\}}(t_o,\lambda)\psi^{\{1\}}(t'_o,-\lambda)\\
=&{\rm Res}_\lambda\psi(t_o,\lambda)\psi(t'_o,-\lambda)
+\frac{q(t_o)q(t'_o)}{\Omega(t_o)\Omega(t'_o)}{\rm Res}_\lambda\Omega(q(t_o),\psi(t_o,\lambda))\Omega(q(t'_o),\psi(t'_o,-\lambda))\\
-&\frac{q(t_o)}{\Omega(t_o)}{\rm Res}_\lambda\Omega(q(t_o),\psi(t_o,\lambda))\psi(t'_o,-\lambda)
-\frac{q(t'_o)}{\Omega(t'_o)}{\rm Res}_\lambda\psi(t_o,\lambda)\Omega(q(t'_o),\psi(t'_o,-\lambda))=0,
\end{align*}
which finishes the proof of Theorem \ref{Th:darboux-bili}.

\begin{remark}\label{remark:taucw-Darboux}
By \eqref{CKP-Darboux-psi}, we can know that $\tau_{{\rm CW}}(t_o)\longrightarrow \tau^{\{1\}}_{{\rm CW}}(t_o)=\Omega(t_o)^{\frac{1}{2}}\tau_{{\rm CW}}(t_o)$ is corresponding to CKP Darboux transformation $T[q]=1-\Omega(q,q)^{-1}q\partial^{-1}q$, that is
\begin{align*}
\psi(t_o,\lambda)\longrightarrow\psi^{\{1\}}(t_o,\lambda)=T[q](\psi(t_o,\lambda)).
\end{align*}
If we use $a\Omega(q,q)+b,\ (a\neq0)$ instead of $\Omega(q,q)$, then
\begin{align*}
\tau_{{\rm CW}}(t_o)&\longrightarrow \tau^{\{1\}}_{{\rm CW}}(t_o)=(a\Omega(t_o)+b)^{\frac{1}{2}}\tau_{{\rm CW}}(t_o),\\
\psi(t_o,\lambda)&\longrightarrow\psi^{\{1\}}(t_o,\lambda)
=\psi(t_o,\lambda)-\frac{a q(t_o)\Omega(q(t_o),\psi(t_o,\lambda))}{a\Omega(t_o)+b}.
\end{align*}
Similar to the proof of Theorem \ref{Th:darboux-bili}, this $\psi^{\{1\}}(t_o,\lambda)$ still satisfies the CKP bilinear equation. In fact, $\Omega(t_o)$ can be up to some integral constant and $\tau_{{\rm CW}}(t_o)$ can be up to the multiplication of $c \cdot {\rm exp}\left(\sum_{n=0}^{\infty}t_{2n+1}a_{2n+1}\right)$. In this case,
\begin{align*}
T[q]=1-\frac{a q}{a\Omega(t_o)+b}\pa^{-1}q.
\end{align*}
\end{remark}
\begin{corollary}
Consider the following chain of CKP Darboux transformations
\begin{eqnarray*}
L\xrightarrow{T[q_1]}L^{\{1\}}\xrightarrow{T[q_2^{\{1\}}]}L^{\{2\}}
  \rightarrow\cdots\rightarrow L^{\{n-1\}}\xrightarrow{T(q_n^{\{n-1\}})}L^{\{n\}},
\end{eqnarray*}
where $q_i$ are CKP eigenfunctions corresponding to CKP Lax operator $L$, then
\begin{align*}
\tau_{{\rm CW}}^{\{n\}}=\sqrt{\left|\begin{array}{ccc}
\Omega_{11} & \cdots &\Omega_{1n}\\
\cdots & \cdots &\cdots \\
\Omega_{n1}& \cdots &\Omega_{nn}
\end{array}
\right|}\tau_{{\rm CW}}(t_o),
\end{align*}
where $\Omega_{ij}=\Omega(q_i(t_o),q_j(t_o))$.
\end{corollary}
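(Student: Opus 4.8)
The plan is to mirror the inductive argument of Corollary~\ref{coro:KP-Darboux-n}, carrying every quantity through under a square root. First I would record the single-step rule, which is already contained in Remark~\ref{remark:taucw-Darboux}: under the elementary CKP Darboux transformation $T[q]=T(q,q)$ the tau function changes by $\tau_{{\rm CW}}\mapsto\tau_{{\rm CW}}^{\{1\}}=\Omega(q,q)^{1/2}\tau_{{\rm CW}}$. Applying this along the chain and writing $\Omega_{ij}^{\{k\}}=\Omega(q_i^{\{k\}},q_j^{\{k\}})$, I would obtain at once
\[
\tau_{{\rm CW}}^{\{n\}}=\left(\Omega_{nn}^{\{n-1\}}\Omega_{n-1,n-1}^{\{n-2\}}\cdots\Omega_{22}^{\{1\}}\Omega_{11}\right)^{1/2}\tau_{{\rm CW}}.
\]

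The second step would be to identify the product under the square root with $\det(\Omega_{ij})_{1\le i,j\le n}$. The key observation is that the elementary CKP Darboux transformation is exactly the KP binary Darboux transformation $T(q,r)$ specialized to $r=q$; since in the CKP setting every adjoint eigenfunction is an eigenfunction, the squared eigenfunction potentials $\Omega_{ij}=\Omega(q_i,q_j)$ play the role of $\Omega(q_i,r_j)$ and obey the same transformation law $\Omega_{ij}^{\{1\}}=\Omega_{ij}-\Omega_{i1}\Omega_{1j}\Omega_{11}^{-1}$ that drives Corollary~\ref{coro:KP-Darboux-n}. Hence the purely algebraic identity $\Omega_{nn}^{\{n-1\}}\cdots\Omega_{22}^{\{1\}}\Omega_{11}=\det(\Omega_{ij})$, proved there by row reduction on a bordered determinant, would apply verbatim. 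Substituting it into the previous display gives $\tau_{{\rm CW}}^{\{n\}}=\sqrt{\det(\Omega_{ij})}\,\tau_{{\rm CW}}$.

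An equivalent and even shorter route is to view the whole chain as a chain of KP binary Darboux transformations with $r_i=q_i$: Corollary~\ref{coro:KP-Darboux-n} then yields $\tau_{{\rm KP}}^{\{n\}}=\det(\Omega_{ij})\,\tau_{{\rm KP}}$, and the relation $\tau_{{\rm CW}}=\sqrt{\tau_{{\rm KP}}}$ from \eqref{tau-CKP-KP} gives the claim after taking square roots. I would present the inductive version as the main proof and mention this shortcut afterward.

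The one place that genuinely differs from the KP case, and hence the point deserving explicit comment, is the square root. One must check that the single-step factors combine, i.e. that $\prod_k(\Omega_{kk}^{\{k-1\}})^{1/2}$ equals $(\prod_k\Omega_{kk}^{\{k-1\}})^{1/2}$, and that the resulting $\sqrt{\det(\Omega_{ij})}$ is the same formal square root that enters the wave function through \eqref{wuchaozhong-tau}. Because the product and the determinant coincide as functions, the two square roots agree up to an overall sign; and since $\Omega$ is determined only up to an integration constant and $\tau_{{\rm CW}}$ only up to a multiplicative factor of the type noted in Remark~\ref{remark:taucw-Darboux}, this sign ambiguity is absorbed into the standard indeterminacy of the tau function. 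I expect no real obstruction here, only the need to fix the constants so that the factorization holds termwise.
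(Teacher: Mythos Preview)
Your proposal is correct and matches the paper's own proof essentially step for step: the paper first applies the single-step rule (citing Theorem~\ref{Th:darboux-bili}) to obtain $\tau_{{\rm CW}}^{\{n\}}=\sqrt{\Omega^{\{n-1\}}_{nn}\cdots\Omega^{\{1\}}_{22}\Omega_{11}}\,\tau_{{\rm CW}}$, and then invokes Corollary~\ref{coro:KP-Darboux-n} to identify the product with the determinant. Your extra discussion of the square-root sign and the shortcut via $\tau_{{\rm CW}}=\sqrt{\tau_{{\rm KP}}}$ goes beyond what the paper records but is consistent with it.
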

\begin{proof}
By Theorem \ref{Th:darboux-bili}, we can find
\begin{align*}
\tau_{{\rm CW}}^{\{n\}}(t_o)
=\sqrt{\Omega^{\{n-1\}}_{nn}\cdots\Omega^{\{1\}}_{22}\Omega_{11}}\cdot\tau_{{\rm CW}}(t_o).
\end{align*}
Then according to Corollary \ref{coro:KP-Darboux-n}, we can prove this corollary.
\end{proof}

\section{Bosonic construction of CKP tau function}
In this section, we firstly review some basic facts on free boson and its bosonizations. Then by using the representation of Lie algebra $c_\infty$ in bosonic Fock space, the corresponding bosonic CKP hierarchy is obtained, which can be viewed as the sub--hierarchy of super BKP hierarchy. Next by using the bosonization of free boson, the bilinear equation of CKP hierarchy is derived. After above preparation, we prove the major theorem, that is Theorem \ref{Prop:SEP-vac}. Finally by using results of successive applications of CKP Darboux transformations, we compute the corresponding vacuum expectation value of bosonic fields, that is Theorem \ref{Them:k-darboux}.

\subsection{Free boson and its bosonization}\label{subsec:boson}
Firstly free bosons $\phi_i(i\in \mathbb{Z}+1/2)$ \cite{vandeleur2023,vandeleur2012,yang2021jmp,kac1989} are defined by
\begin{eqnarray}
\phi_i\phi_j-\phi_j\phi_i=(-1)^{j-\frac{1}{2}}\delta_{i,-j}.\label{phiiphij}
\end{eqnarray}
Then the corresponding Fock space $\mathcal{F}$ and its dual $\mathcal{F}^*$ are given by the vectors spaces below,
\begin{align*}
&\mathcal{F}={\rm span}\{(\phi_{j_1})^{m_1} (\phi_{j_2})^{m_2} \cdots (\phi_{j_n})^{m_n}|0\rangle\ |j_1>j_2>\cdots>j_n>0,\  m_i\geq 0\},\\
&\mathcal{F}^*={\rm span}\{\langle 0|(\phi_{-j_n})^{m_n}\cdots(\phi_{-j_2})^{m_2}(\phi_{-j_1})^{m_1} |j_1>j_2>\cdots>j_n>0,\  m_i\geq 0\},
\end{align*}
with vacuum vectors $|0\rangle$ and $\langle 0|$ defined by
\begin{eqnarray*}
\phi_{-j}|0\rangle=0,\quad\langle 0|\phi_{j}=0,\quad j>0.
\end{eqnarray*}

Define the pairing of $a|0\rangle \in \mathcal{F}$ and
$\langle 0|b\in \mathcal{F}^*$ to be $\langle 0|ba|0\rangle$, which is determined by $\langle 0|0\rangle=1$ and \eqref{phiiphij}. Note that
\begin{align*}
\langle 0|\phi_i\phi_j|0\rangle=(-1)^{j-\frac{1}{2}}\delta_{i,-j}\theta(j>0),
\end{align*}
where $\theta(P)$ is the Boolean characteristic function of $P$, that is $\theta(P)=1$ if $P$ true, while $\theta(P)=0$ if $P$ false. If introduce the generating function of free bosons in the way below,
\begin{align*}
\phi(\lambda)=\sum_{j\in \mathbb{Z}+\frac{1}{2}}\phi_j\lambda^{j-\frac{1}{2}},
\end{align*}
then
\begin{align*}
\langle 0|\phi(p)\phi(q)|0\rangle=\frac{1}{p+q}.
\end{align*}
Next we will consider the general case $\langle 0|\beta_n\cdots\beta_2\beta_1|0\rangle$, where $\beta_i\in V=\bigoplus\limits_{j\in \mathbb{Z}+1/2}\mathbb{C}\phi_j$.

For this, let us introduce an operator $S$ on $\mathcal{F}\otimes \mathcal{F}$ defined by
\begin{align*}
S=\sum_{k\in \mathbb{Z}+\frac{1}{2}}(-1)^{k+\frac{1}{2}}\phi_k\otimes\phi_{-k}
={\rm Res}_\lambda\phi(\lambda)\otimes\phi(-\lambda).
\end{align*}
Then by \eqref{phiiphij}, we have the lemma below.
\begin{lemma}\label{lemma:S-beta}\cite{yang2022PhyD}
Given $\beta_i\in V=\bigoplus_{j\in \mathbb{Z}+1/2}\mathbb{C}\phi_j$,
\begin{align*}
&S(1\otimes \beta_1\beta_{2}\cdots\beta_n)
=-\sum_{l=1}^n\beta_l\otimes\beta_1\beta_{2}\cdots\widehat{\beta}_l\cdots\beta_n
+(1\otimes\beta_1\beta_{2}\cdots\beta_n)S,\\
&S(\beta_1\beta_{2}\cdots\beta_n\otimes1)
=\sum_{l=1}^n\beta_1\beta_{2}\cdots\widehat{\beta}_l\cdots\beta_n\otimes\beta_l
+(\beta_1\beta_{2}\cdots\beta_n\otimes 1)S,
\end{align*}
where
$$\beta_1\beta_{2}\cdots\widehat{\beta}_l\cdots\beta_n=\beta_1\beta_{2}\cdots\beta_{l+1}\beta_{l-1}\cdots\beta_n.$$
\end{lemma}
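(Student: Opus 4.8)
The plan is to recast both identities in the language of the commutator $[S,\,\cdot\,]:=S\,(\,\cdot\,)-(\,\cdot\,)\,S$ on $\mathcal{F}\otimes\mathcal{F}$, reduce everything to the single-factor case $n=1$, and then propagate that case through the product $\beta_1\cdots\beta_n$ using the fact that $[S,\,\cdot\,]$ is a derivation of the operator product. With $P=\beta_1\cdots\beta_n$, the two displayed formulas are exactly the statements
\[
[S,\,1\otimes P]=-\sum_{l=1}^n \beta_l\otimes\beta_1\cdots\widehat{\beta}_l\cdots\beta_n,
\qquad
[S,\,P\otimes 1]=\sum_{l=1}^n \beta_1\cdots\widehat{\beta}_l\cdots\beta_n\otimes\beta_l,
\]
so it suffices to produce these commutators.

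First I would settle the base cases by direct computation. For $\beta=\sum_j b_j\phi_j\in V$, writing $S=\sum_{k}(-1)^{k+1/2}\phi_k\otimes\phi_{-k}$ gives
\[
[S,\,1\otimes\beta]=\sum_{k\in\mathbb{Z}+1/2}(-1)^{k+1/2}\,\phi_k\otimes[\phi_{-k},\beta],
\]
since only the second tensor slot feels $\beta$. By the defining relation \eqref{phiiphij} the bracket $[\phi_{-k},\beta]=(-1)^{k-1/2}b_k$ is a scalar, so the $k$-th term carries the factor $(-1)^{k+1/2}(-1)^{k-1/2}=(-1)^{2k}$, which equals $-1$ precisely because $k$ is a half-integer. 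Summing collapses this to $[S,\,1\otimes\beta]=-\beta\otimes1$. The parallel computation in the first slot uses $[\phi_k,\beta]=(-1)^{-k-1/2}b_{-k}$, whose overall sign is $(-1)^{k+1/2}(-1)^{-k-1/2}=+1$; after reindexing $k\mapsto-k$ one gets $[S,\,\beta\otimes1]=+\,1\otimes\beta$. This sign asymmetry between the two slots is the one genuinely computational point.

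With the base cases in hand, I would invoke the Leibniz rule for commutators, $[S,\,1\otimes P]=\sum_{l=1}^n(1\otimes\beta_1\cdots\beta_{l-1})\,[S,\,1\otimes\beta_l]\,(1\otimes\beta_{l+1}\cdots\beta_n)$, which is valid since $1\otimes P=(1\otimes\beta_1)\cdots(1\otimes\beta_n)$. Substituting $[S,\,1\otimes\beta_l]=-\beta_l\otimes1$ and using the tensor-algebra identity $(1\otimes A)(\beta_l\otimes1)(1\otimes B)=\beta_l\otimes AB$ collapses the $l$-th summand to $-\beta_l\otimes\beta_1\cdots\widehat{\beta}_l\cdots\beta_n$, which is the first claimed formula; the second follows identically from $[S,\,\beta_l\otimes1]=1\otimes\beta_l$. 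Equivalently one may argue by induction on $n$: peel off $\beta_1$, move $S$ past $1\otimes\beta_1$ using the base case, and apply the inductive hypothesis to $\beta_2\cdots\beta_n$, the $l=1$ term being produced by the base-case correction and the remaining terms by the hypothesis.

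I expect the only real obstacle to be sign discipline, namely tracking the half-integer powers $(-1)^{k\pm1/2}$, correctly obtaining $(-1)^{2k}=-1$, and ensuring the net sign comes out negative for the $1\otimes P$ identity but positive for the $P\otimes1$ identity. Once the two base-case signs are verified, the passage to arbitrary products is a purely formal consequence of the derivation property, and the hat bookkeeping $\beta_1\cdots\widehat{\beta}_l\cdots\beta_n=\beta_1\cdots\beta_{l-1}\beta_{l+1}\cdots\beta_n$ matches automatically because the omitted factor is exactly the one to which $[S,\,\cdot\,]$ is applied.
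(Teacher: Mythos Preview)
Your argument is correct and is precisely the natural way to flesh out the paper's terse ``by \eqref{phiiphij}'' justification: the paper states the lemma with a citation and no detailed proof, and your reduction to the $n=1$ commutator computations $[S,1\otimes\beta]=-\beta\otimes1$, $[S,\beta\otimes1]=1\otimes\beta$ followed by the Leibniz rule is exactly the intended mechanism. Your sign bookkeeping (in particular $(-1)^{2k}=-1$ for $k\in\mathbb{Z}+\tfrac12$) is accurate, and the resulting formulas match the statement; note only that the paper's display $\beta_1\cdots\beta_{l+1}\beta_{l-1}\cdots\beta_n$ is an evident typo for $\beta_1\cdots\beta_{l-1}\beta_{l+1}\cdots\beta_n$, which is how you read it.
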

By this lemma, one can find
\begin{align*}
&S(\beta_{2}\beta_{3}\cdots\beta_{2n}\otimes\beta_1)
=S(1\otimes\beta_1)(\beta_2\beta_3\cdots\beta_{2n}\otimes 1)\\
&=(1\otimes\beta_1)S(\beta_2\beta_3\cdots\beta_{2n}\otimes 1)
-\beta_1\beta_2\cdots\beta_{2n}\otimes 1\\
&=\sum_{l=2}^{2n}(\beta_{2}\beta_{3}\cdots\widehat{\beta}_{l}\cdots\beta_{2n}\otimes\beta_{1}\beta_l)
+(\beta_{2}\beta_{3}\cdots\beta_{2n}\otimes\beta_{1})S-(\beta_{1}\beta_{2}\cdots\beta_{2n}\otimes1).
\end{align*}
If apply $\langle 0|\otimes\langle0|$ and $|0\rangle\otimes|0\rangle$ to the above relation,
\begin{align}
\langle0|\beta_1\beta_{2}\cdots\beta_{2n}|0\rangle
=\sum_{l=2}^{2n}\langle0|\beta_{1}\beta_l|0\rangle \langle0|\beta_{2}\beta_{3}\cdots\widehat{\beta}_{l}
\cdots\beta_{2n}|0\rangle,
\label{betaexpansion}
\end{align}
where we have used
\begin{align*}
(\langle 0|\otimes\langle0|)S=0,\ \  S(|0\rangle\otimes|0\rangle)=0.
\end{align*}

On the other hand, let us review some knowledge on Hafnian determinant \cite{luque,hirai}. For a $2n\times2n$  symmetric matrix $A=(a_{ij})$, the Hafnian of A is defined by
\begin{align}
{\rm Haf} A=\frac{1}{n!2^n}\sum_{\pi \in \mathcal{S}_{2n}}\prod_{j=1}^na_{\pi(2j-1),\pi(2j)},\label{hafdefinition}
\end{align}
where $\mathcal{S}_{2n}$ is the symmetric group on $2n$
 elements. If denote $A[i,j]$ to be the matrix obtained from $A$ by removing $i$--th row, $j$--th row, $i$--th column and $j$--th column, then \cite{hirai}
\begin{align}\label{Haf-2}
{\rm Haf} A=\sum_{j\neq i}{\rm Haf} A[i,j],\quad \text{for some fixed $i$.}
\end{align}
Thus if assume $2n\times2n$ symmetric matrix $M=(m_{ij})^{2n}_{i,j=1}$ with
\begin{align*}
m_{ij}=\langle 0|\beta_i\beta_j|0\rangle,\quad \beta_i\in V=\bigoplus_{j\in \mathbb{Z}+1/2}\mathbb{C}\phi_j,\quad \beta_i\beta_j=\beta_j\beta_i,
\end{align*}
then by \eqref{betaexpansion} and \eqref{Haf-2}, we have
\begin{align*}
\langle 0|\beta_1\beta_2\cdots\beta_{2n}|0\rangle={\rm Haf} M.
\end{align*}
Let us summarize the results above into the proposition below.
\begin{proposition}\label{prop:Hf}\cite{luque}
Given $\beta_i\in V=\bigoplus_{j\in \mathbb{Z}+1/2}\mathbb{C}\phi_j$ with $\beta_i\beta_j=\beta_j\beta_i$,
\begin{align*}
\langle 0|\beta_1\beta_2\cdots\beta_{n}|0\rangle
=\left\{
\begin{array}{ll}
    0, & \hbox{${\rm if}\ n\ {\rm odd}$,} \\
{\rm Haf}\big(\langle0| \beta_i\beta_j|0\rangle\big)_{i,j=1}^n, & \hbox{${\rm if}\ n\ {\rm even}$.}
  \end{array}
\right.
\end{align*}
\end{proposition}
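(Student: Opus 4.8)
The plan is to prove both cases by induction on the number of factors, using the reduction identity \eqref{betaexpansion} together with the cofactor-type expansion of the Hafnian. Observe first that Lemma \ref{lemma:S-beta}, and hence the derivation leading to \eqref{betaexpansion}, is valid for a product of any number of factors, not merely $2n$: repeating the computation with $S(\beta_2\cdots\beta_m\otimes\beta_1)$ for arbitrary $m$ yields
\begin{align*}
\langle 0|\beta_1\beta_2\cdots\beta_m|0\rangle=\sum_{l=2}^{m}\langle 0|\beta_1\beta_l|0\rangle\,\langle 0|\beta_2\cdots\widehat{\beta}_l\cdots\beta_m|0\rangle,
\end{align*}
so that each application lowers the number of factors by two and preserves parity. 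Note also that the hypothesis $\beta_i\beta_j=\beta_j\beta_i$ guarantees $m_{ij}=\langle0|\beta_i\beta_j|0\rangle=\langle0|\beta_j\beta_i|0\rangle=m_{ji}$, so $M=(m_{ij})$ is symmetric and its Hafnian is well defined.

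For the odd case I would iterate the identity above until a single factor remains. The base fact is that $\langle0|\beta|0\rangle=0$ for any $\beta=\sum_j c_j\phi_j\in V$, since $\langle0|\phi_j|0\rangle=0$ for every $j\in\mathbb{Z}+1/2$: for $j>0$ this is $\langle0|\phi_j=0$, and for $j<0$ it is $\phi_j|0\rangle=0$. Because each reduction preserves parity, a product of an odd number of factors is expressed as a sum of multiples of such single-factor expectations, all of which vanish; hence $\langle0|\beta_1\cdots\beta_n|0\rangle=0$ for $n$ odd.

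For the even case I would argue by induction on $n$, with $2n$ factors. The base case $n=1$ is immediate: the Hafnian of the $2\times2$ symmetric matrix $(m_{ij})_{i,j=1}^2$ equals $m_{12}=\langle0|\beta_1\beta_2|0\rangle$ directly from \eqref{hafdefinition}. For the inductive step, apply the reduction identity with first index fixed to obtain
\begin{align*}
\langle0|\beta_1\cdots\beta_{2n}|0\rangle=\sum_{l=2}^{2n}m_{1l}\,\langle0|\beta_2\cdots\widehat{\beta}_l\cdots\beta_{2n}|0\rangle.
\end{align*}
Each remaining expectation involves $2n-2$ factors whose pairwise expectations form exactly the submatrix $M[1,l]$ obtained from $M$ by deleting rows and columns $1$ and $l$; by the induction hypothesis it equals ${\rm Haf}\,M[1,l]$. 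Substituting and comparing with the Laplace expansion of the Hafnian along the first index---namely \eqref{Haf-2}, in which the term ${\rm Haf}\,M[1,l]$ carries the weight $m_{1l}$ mandated by \eqref{hafdefinition}---gives
\begin{align*}
\langle0|\beta_1\cdots\beta_{2n}|0\rangle=\sum_{l=2}^{2n}m_{1l}\,{\rm Haf}\,M[1,l]={\rm Haf}\,M,
\end{align*}
completing the induction.

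The genuinely substantive input, the reduction identity \eqref{betaexpansion}, has already been established via the operator $S$; the remaining steps are bookkeeping. Accordingly the only points requiring care are the index matching---verifying that the surviving factors $\beta_2,\dots,\widehat{\beta}_l,\dots,\beta_{2n}$ produce precisely the principal submatrix $M[1,l]$, with the correct pairing coefficients inherited from \eqref{phiiphij}---and the separate base computation $\langle0|\beta|0\rangle=0$ driving the odd case. I expect the index-to-submatrix correspondence to be the main (though modest) obstacle, since it is where the combinatorics of \eqref{betaexpansion} must be aligned term-by-term with the Hafnian recursion.
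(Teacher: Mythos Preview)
Your proposal is correct and follows essentially the same route as the paper: the paper derives the reduction identity \eqref{betaexpansion} via the operator $S$, recalls the Hafnian cofactor expansion \eqref{Haf-2}, and then states that comparing the two gives the result. Your version is simply more explicit---you spell out the induction, treat the odd case separately via $\langle0|\beta|0\rangle=0$, and correctly note that the expansion \eqref{Haf-2} should carry the weight $m_{1l}$ (which is implicit in the paper's appeal to \eqref{hafdefinition})---but the underlying argument is identical.
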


In order to give the bosonization of bosonic field $\phi(\lambda)$, we firstly introduce
\begin{align}
\sum\limits_{n\in2\mathbb{Z}+1}H_n \lambda^{-n-1}=-\frac{1}{2}:\phi(-\lambda)\phi(\lambda):, \ \
\sum_{j\in\mathbb{Z}+1/2}H_j\lambda^{-j-1/2}
=\frac{1}{2}V_-(\lambda)^{-1}\phi(\lambda)V_+(\lambda)^{-1},\label{hexpression}
\end{align}
where the normal ordering is defined by
\begin{align*}
:\phi_i\phi_j:=\left\{
\begin{array}{ll}
\phi_i\phi_j, & \hbox{${\rm if}\ i\geq j$,} \\
\phi_j\phi_i, & \hbox{${\rm if}\ j>i$,}
  \end{array}
\right.
\end{align*}
and
\begin{align*}
V_{\pm}(\lambda)=\exp\left(\sum_{\pm k>0,odd}\frac{2}{k}H_k\lambda^{-k}\right),
\end{align*}
then
\begin{align*}
H_iH_j-(-1)^{4ij}H_jH_i=\frac{j}{2}(-1)^{[j-\frac{1}{2}]}\delta_{i,-j}, \end{align*}
where $[j-\frac{1}{2}]$ denote the largest integer no more than $j-\frac{1}{2}$. It can be found that
\begin{align*}
H_k|0\rangle =\langle 0|H_{-k}=0, \ k>0.
\end{align*}
Further introduce
\begin{align*}
H(t_o):= \sum_{n\in2\mathbb{Z}_{\geq 0}+1}t_n H_n ,\ \ \chi(t_{s}):=\sum_{j\in\mathbb{Z}_{\geq 0}+\frac{1}{2}}t_j H_j,
\end{align*}
then
\begin{align*}
e^{H(t_o)}\phi(\lambda)e^{-H(t_o)}=e^{\xi(t_o,\lambda)}\phi(\lambda), \ \
e^{\chi(t_{s})}\phi(\lambda)e^{-\chi(t_{s})}=\phi(\lambda)^+V_-(\lambda)\varrho(\lambda)V_+(\lambda),
\end{align*}
where $\varrho(\lambda)=\sum_{k\in \mathbb{Z}_{\geq 0}+1/2}kt_k(-\lambda)^{k-\frac{1}{2}}$.

Based upon above preparation, now we can state the bosonization of $\mathcal{F}$ in the following proposition.
\begin{proposition}\label{prop:boson-fermion}
There is an isomorphism \cite{Anguelova2017,vandeleur2012,yang2021jmp}
\begin{eqnarray*}
\sigma:\ \mathcal{F}\rightarrow\mathcal{B}=\mathbb{C}[t_1,t_{\frac{1}{2}},t_3,t_{\frac{3}{2}}, \cdots ],\quad |u\rangle\mapsto\sigma(|u\rangle)=\langle 0|e^{H(t_o)}e^{\chi(t_{s})}|u\rangle.
\end{eqnarray*}
In particular
{\small\begin{align*}
\sigma \phi(\lambda) \sigma^{-1}=&\exp\left(\sum_{k\in 2\mathbb{Z}_{\geq 0}+1}t_k\lambda^k\right)
\exp\left(\sum_{k\in 2\mathbb{Z}_{\geq 0}+1}\frac{2}{k}\frac{\pa}{\pa t_k}\lambda^{-k}\right)
\sum_{j\in 2\mathbb{Z}_{\geq 0}+1}\left(\frac{j}{2}t_{\frac{j}{2}}(-\lambda)^{\frac{j-1}{2}}+2\frac{\pa}{\pa t_{\frac{j}{2}}}\lambda^{-\frac{j+1}{2}}\right).
\end{align*}}
\end{proposition}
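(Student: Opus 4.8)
The plan is to establish the two assertions in turn: first that $\sigma\colon \mathcal{F}\to\mathcal{B}$ is a linear isomorphism, and then to derive the explicit formula for $\sigma\phi(\lambda)\sigma^{-1}$, which is really the computational heart of the statement. The map $\sigma$ assigns to $|u\rangle$ the generating-function coefficient $\langle 0|e^{H(t_o)}e^{\chi(t_s)}|u\rangle$. That this is an isomorphism onto $\mathcal{B}$ is the boson--fermion correspondence for the present structure: the integer modes $H_n$ ($n$ odd) close into a Heisenberg algebra and the half-integer modes $H_j$ into a Clifford algebra, these two sectors commuting, and $\mathcal{F}$ is the irreducible Fock module generated from $|0\rangle$ by the negative modes. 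I would invoke this structural fact (as in the cited works) after first recording the images of the individual modes under $\sigma$, which are needed anyway for the explicit formula.

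To compute $\sigma X\sigma^{-1}$ for a mode $X$, I express $\langle 0|e^{H(t_o)}e^{\chi(t_s)}X|u\rangle$ through $\sigma(|u\rangle)$ and its derivatives. For $k>0$ odd, $H_k$ commutes with $e^{\chi(t_s)}$ (integer and half-integer modes commute) and with $e^{H(t_o)}$ (positive modes commute among themselves), and $\partial_{t_k}e^{H(t_o)}=H_k e^{H(t_o)}$, whence $\sigma H_k\sigma^{-1}=\partial_{t_k}$. For $H_{-k}$ the only nonzero commutator $[H(t_o),H_{-k}]=-\tfrac{k}{2}t_k$ is central and $\langle 0|H_{-k}=0$, so $\sigma H_{-k}\sigma^{-1}=-\tfrac{k}{2}t_k$. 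The identical two moves for the half-integer modes, using their (anti)commutation relations and $H_j|0\rangle=\langle0|H_{-j}=0$, show that positive half-integer modes become multiples of $\partial_{t_{j/2}}$ and negative ones multiples of $t_{j/2}$, with constants governed by $\tfrac{j}{2}(-1)^{[j-1/2]}$. Since $\sigma(|0\rangle)=\langle0|e^{H(t_o)}e^{\chi(t_s)}|0\rangle=\langle0|0\rangle=1$, these images carry ordered monomials in the negative modes applied to $|0\rangle$ to the corresponding monomials in $t_1,t_{1/2},t_3,t_{3/2},\dots$, which confirms that $\sigma$ is a bijection.

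For the explicit formula I would rewrite the second relation of \eqref{hexpression} as $\phi(\lambda)=2\,V_-(\lambda)\big(\sum_{j}H_j\lambda^{-j-1/2}\big)V_+(\lambda)$, which separates $\phi(\lambda)$ into the integer-mode dressings $V_\pm(\lambda)$ and a half-integer-mode middle factor. Applying $\sigma$ factorwise (conjugation is automatically an algebra homomorphism) and using the mode images above, one gets $\sigma V_-(\lambda)\sigma^{-1}=\exp\big(\sum_{k>0,\,{\rm odd}}t_k\lambda^{k}\big)$, with the mode constant and series coefficient combining to $1$; $\sigma V_+(\lambda)\sigma^{-1}=\exp\big(\sum_{k>0,\,{\rm odd}}\tfrac{2}{k}\partial_{t_k}\lambda^{-k}\big)$; and $2\,\sigma\big(\sum_j H_j\lambda^{-j-1/2}\big)\sigma^{-1}=\sum_{j}\big(\tfrac{j}{2}t_{j/2}(-\lambda)^{(j-1)/2}+2\partial_{t_{j/2}}\lambda^{-(j+1)/2}\big)$. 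Because the integer sector ($t_k,\partial_{t_k}$) and the half-integer sector commute, the derivative factor coming from $V_+(\lambda)$ can be moved to the left past the half-integer middle factor, reproducing exactly the stated ordering (multiplication, then differentiation, then the half-integer sum).

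The hard part will be the constant and sign bookkeeping in the half-integer (Clifford) sector: one must read off from the anticommutator normalization $\tfrac{j}{2}(-1)^{[j-1/2]}$ both the multiplication coefficient $\tfrac{j}{2}(-\lambda)^{(j-1)/2}$ and the derivative coefficient $2\lambda^{-(j+1)/2}$, while respecting the nilpotency of those modes. The remaining points --- exponentiating the mode images term by term on the dense space of polynomials and the formal manipulation of the series in $\lambda$ --- are routine and cause no difficulty.
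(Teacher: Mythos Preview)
The paper does not supply its own proof of this proposition; it is quoted directly from the references \cite{Anguelova2017,vandeleur2012,yang2021jmp} and used as background in Section~\ref{subsec:boson}. So there is nothing in the paper to compare your argument against beyond checking that your sketch reproduces the stated formula.

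Your plan is the standard one used in those references and it is sound: compute $\sigma H_n\sigma^{-1}$ and $\sigma H_j\sigma^{-1}$ by commuting the modes through $e^{H(t_o)}e^{\chi(t_s)}$ to the vacuum, then feed these into the factorization $\phi(\lambda)=2V_-(\lambda)\big(\sum_j H_j\lambda^{-j-1/2}\big)V_+(\lambda)$ coming from \eqref{hexpression}. One point to make explicit in your write--up: the half--integer variables $t_s=(t_{1/2},t_{3/2},\dots)$ are \emph{odd} (Grassmann) super--variables, not ordinary commuting indeterminates, so that $\mathcal{B}$ is really a super--polynomial ring; this is what makes the Clifford anticommutation of the $H_j$ compatible with the images $\sigma H_{m/2}\sigma^{-1}=\partial_{t_{m/2}}$ and $\sigma H_{-m/2}\sigma^{-1}=\tfrac{m}{4}(-1)^{(m-1)/2}t_{m/2}$, and it is also why the middle factor can be moved past $\sigma V_+(\lambda)\sigma^{-1}$ (the even and odd sectors commute). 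With that clarification the sign and constant bookkeeping you flagged goes through and your derivation yields exactly the displayed expression for $\sigma\phi(\lambda)\sigma^{-1}$.
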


\subsection{Bosonic construction of CKP hierarchy}
Firstly CKP hierarchy is corresponding to the infinite dimensional Lie algebra $c_{\infty}=\overline{c}_{\infty}\oplus\mathbb{C}c$ \cite{Anguelova2017,vandeleur2012,yang2021jmp}, where
\begin{align*}
\overline{c}_{\infty}=\{(a_{i,j})_{i,j\in \mathbb{Z}}| a_{i,j}=(-1)^{i+j+1}a_{-j+1,-i+1},a_{i,j}=0, \ {\rm for}\ |i-j|\gg0\}.
\end{align*}
The corresponding Lie brackets are given by
\begin{align*}
[A+\lambda c, B+\mu c]=[A,B]+w(A,B)c,\ \  \lambda,\mu\in \mathbb{C},
\end{align*}
where $w(A,B)$ is the 2--cocycle on $\overline{c}_{\infty}$ such that
\begin{align*}
w(E_{i,j})=-w(E_{j,i},E_{i,j})=1, \ \ i\leq0, j>0\ \text{and}\ w(E_{i,j},E_{k,l})=0,\ \ \ otherwise.
\end{align*}
Here $(E_{i,j})_{ab}=\delta_{ia}\delta_{jb}$. Note that the generator of\ $c_{\infty}$ is
\begin{align*}
Z_{i,j}=E_{i,j}-(-1)^{i+j}E_{-j+1,-i+1}.
\end{align*}
Therefore the representation $\widehat{\pi}$ of $c_{\infty}$ on the Fock space $\mathcal{F}$ can be defined by
\begin{align*}
\widehat{\pi}(Z_{ij})=(-1)^{j-1}:\phi_{i-\frac{1}{2}}\phi_{\frac{1}{2}-j}:,
\quad \widehat{\pi}(c)=\frac{1}{2}.
\end{align*}

Note that by \eqref{intro:sp},
\begin{align*}
Sp_{\infty}=\{e^{\widehat{\pi}(a_1)}e^{\widehat{\pi}(a_2)}\cdots e^{\widehat{\pi}(a_k)}|a_i\in c_\infty\}.
\end{align*}
It can be proved that for $g\in Sp_{\infty}$,
\begin{align*}
S(g\otimes g)=(g\otimes g)S.
\end{align*}
Then if set $\tau=g|0\rangle$, we will have
\begin{align}\label{S-tau}
S(\tau\otimes \tau)=0,
\end{align}
which is called the bosonic CKP hierarchy.
\begin{remark}
Here for $\tau$ in \eqref{S-tau},  $\tau\in \mathcal{\overline{F}}$ with $\overline{\mathcal{F}}$ being the completion of $\mathcal{F}$\cite{Anguelova2017}. And
\eqref{S-tau} is sub--hierarchy of super--BKP hierarchy in \cite{kac1989}.
\end{remark}
If set
$\tau(t_o,t_s)=\sigma(\tau)=\langle | e^{H(t_o)}e^{\chi(t_s)}g|0\rangle$,
then
\begin{align}\label{super-CKP}
{\rm Res}_\lambda\sum_{i,j\in \mathbb{Z}_{\geq 0}+\frac{1}{2}}
e^{\xi(t_o-t_o',\lambda)}&e^{\xi(\widetilde{\pa}_{t_o}-\widetilde{\pa}_{t'_o},\lambda^{-1})}
\left(\frac{\pa}{\pa t_i}\lambda^{-i-\frac{1}{2}}-2it_i(-\lambda)^{i-\frac{1}{2}}\right)\nonumber\\
&\times\left(\frac{\pa}{\pa t'_j}(-\lambda)^{-j-\frac{1}{2}}-2jt'_j\lambda^{j-\frac{1}{2}}\right)
\tau(t_o,t_s)\tau(t'_o,t'_s)=0.
\end{align}
Further if expand $\tau(t_o,t_s)$ in terms of super variables
\begin{align*}
\tau(t_o,t_s)=\tau_0(t_o)+\sum_{\nu\in\mathbb{Z}^{>1}_{odd}}\tau_{(\nu,1)}(t_o)t_{\frac{\nu}{2}}t_{\frac{1}{2}}+{\rm terms\ of\ at\ least\ two}\ t_s.
\end{align*}
then
\begin{align*}
\tau_{0}(t_o)=\langle 0|e^{H(t_o)}g|0\rangle,\ \
\tau_{(\nu,1)}(t_o)=4\langle 0|H_{\frac{\nu}{2}}H_{\frac{1}{2}}e^{H(t_o)}g|0\rangle,\ \cdots.
\end{align*}
It can be found that the coefficients of $t_{\frac{1}{2}}t'_{\frac{1}{2}}$ in \eqref{super-CKP} will give rise to the bilinear equation of CKP hierarchy
\begin{align*}
{\rm Res}_\lambda\psi(t_o,\lambda)\psi(t'_o,-\lambda)=0,
\end{align*}
where
\begin{align*}
\psi(t_o,\lambda)=\frac{\tau_0(t_o+2[\lambda^{-1}])+ \sum\limits_{\nu\in\mathbb{Z}^{>1}_{odd}}\tau_{(\nu,1)}(t_o+2[\lambda^{-1}])\lambda^{-\frac{\nu+1}{2}}}
{\tau_0(t_o)}e^{\xi(t_o,\lambda)}.
\end{align*}

Note that by Proposition \ref{prop:boson-fermion}
\begin{align}
\psi(t_o,\lambda)=\frac{2\langle 1|e^{H(t_o)}\phi(\lambda)g|0\rangle}{\langle 0|e^{H(t_o)}g|0\rangle},\label{ckp-wave-bosonic-vev}
\end{align}
where $\langle 1|=\langle 0|H_{\frac{1}{2}}$. If assume
\begin{align*}
\psi(t_o,\lambda)=(1+w_1\lambda^{-1}+w_2\lambda^{-2}+\cdots)e^{{\xi}(t_o,\lambda)},
\end{align*}
then
\begin{align*}
w_1=\frac{2\pa_x\tau_0(t_o)}{\tau_0(t_o)}=2\pa_x\log\tau_0(t_o),\ \ w_2=\frac{2\pa^2_x\tau_0(t_o)}{\tau_0(t_o)}+4\frac{\tau_{(3,1)}(t_o)}{\tau_0(t_o)}.
\end{align*}
Comparing with
\begin{align*}
\psi(t_o,\lambda)&=\left(1+\frac{1}{\lambda}\pa_x{\rm log}\frac{\tau_{{\rm CW}}(t_o-2[\lambda^{-1}])}{\tau_{\rm CW}(t_o)}\right)^{\frac{1}{2}}
\frac{\tau_{\rm CW}(t_o-2[\lambda^{-1}])}{\tau_{\rm CW}(t_o)}e^{\xi(t_o,\lambda)},
\end{align*}
we have
\begin{align*}
&\tau_0=\tau_{{\rm CW}}^{-1},\\
&\tau_{(3,1)}=\frac{3}{4}(\log\tau_{{\rm CW}})_{xx}\tau^{-1}_{{\rm CW}},\\
&\tau_{(5,1)}=-\frac{5}{4}(\log\tau_{{\rm CW}})_{xxx}\tau^{-1}_{{\rm CW}},\\
&\tau_{(7,1)}=\frac{7}{8}(\log\tau_{{\rm CW}})^2_{xx}\tau^{-1}_{{\rm CW}}
+\frac{7}{12}(\log\tau_{{\rm CW}})_{xt_3}\tau^{-1}_{{\rm CW}}
+\frac{7}{6}(\log\tau_{{\rm CW}})_{xxxx}\tau^{-1}_{{\rm CW}}.
\end{align*}

\subsection{Proof of Theorem \ref{Th:CW-van}}
Given $\beta\in V=\bigoplus_{j\in \mathbb{Z}+1/2}\mathbb{C}\phi_j$, there exists $\rho(\lambda)\in \mathbb{C}((\lambda^{-1}))$ such that \begin{align*}
\beta={\rm Res}_\lambda\rho(\lambda)\phi(\lambda).
\end{align*}
Thus if denote
\begin{align*}
q(t_o)=\frac{2\langle 1|e^{H(t_o)}\beta g|0\rangle}{\langle 0|e^{H(t_o)}g|0\rangle},
\end{align*}
then
$$q(t_o)={\rm Res}_\lambda\rho(\lambda)\psi(t_o,\lambda),$$
where $\psi(t_o,\lambda)=\frac{2\langle 1|e^{H(t_o)}\phi(\lambda) g|0\rangle}{\langle 0|e^{H(t_o)}g|0\rangle}$ is the CKP wave function. So $q(t_o)$ is the CKP eigenfunction corresponding to CKP wave function $\psi(t_o,\lambda)$, that is
\begin{align*}
q_{t_n}=(W\pa^nW^{-1})_{\geq0}(q),
\end{align*}
where $W=1+\sum_{j= 1}^{\infty}w_j\pa^{-j}$ such that $\psi(t_o,\lambda)=W(e^{\xi(t_o,\lambda)})$.

In order to prove Theorem \ref{Th:CW-van}, we need firstly to prove Proposition \ref{Prop:SEP-vac}.
\begin{proof}
Firstly for $\beta\in V$, we have by Lemma \ref{lemma:S-beta}
\begin{align}\label{S-tau-beta}
S(\tau\otimes\beta\tau)=-\beta\tau\otimes\tau.
\end{align}
If apply $\langle 1|e^{H(t_o)}\otimes\langle 0|e^{H(t'_o)}$ to both sides of \eqref{S-tau-beta}, we can get
\begin{align*}
{\rm Res}_\lambda\psi(t_o,\lambda)\frac{\langle 0|e^{H(t'_o)}\phi(-\lambda)\beta g|0\rangle}{\langle 0|e^{H(t'_o)}\beta g|0\rangle}=-q(t_o),
\end{align*}
where $q(t_o)=\frac{2\langle 1|e^{H(t_o)}\beta g|0\rangle}{\langle 0|e^{H(t_o)}g|0\rangle}$.
Further by \eqref{CKP-SEP-q}
\begin{align*}
{\rm Res}_\lambda\psi(t_o,\lambda)
\left(\Omega(q(t'_o),\psi(t'_o,-\lambda)-
\frac{\langle 0|e^{H(t'_o)}\phi(\lambda)\beta g|0\rangle}{\langle 0|e^{H(t'_o)}\beta g|0\rangle}\right)=0.
\end{align*}
Next by \eqref{KP-SEP-q-O} or \eqref{KP-SEP-r-O}, we can find
\begin{align*}
\Omega(q(t'_o),\psi(t'_o,-\lambda)-
\frac{\langle 0|e^{H(t'_o)}\phi(\lambda)\beta g|0\rangle}{\langle 0|e^{H(t'_o)}\beta g|0\rangle}
=\mathcal{O}(\lambda^{-2})e^{-\xi(t'_o,\lambda)}.
\end{align*}
Thus by the following fact \cite{Date1983WorldSci}:\\

\fbox{\textit{
If
${\rm Res}_\lambda\psi(t,\lambda)\widetilde{\psi}(t',\lambda)=0$
for $
\widetilde{\psi}(t,\lambda)=\mathcal{O}(\lambda^{-1})e^{-\widetilde{\xi}(t,\lambda)},$
then $
\widetilde{\psi}(t,\lambda)=0.$}
}

\ \\
\noindent We can finally know
\begin{align*}
\Omega(q(t_o),\psi(t_o,-\lambda))=
\frac{\langle 0|e^{H(t_o)}\phi(\lambda)\beta g|0\rangle}{\langle 0|e^{H(t_o)}\beta g|0\rangle}.
\end{align*}

As for $\beta_1,\beta_2\in V$, there exists $\rho_2(\lambda)$ such that
\begin{align*}
\beta_2={\rm Res}_\lambda\rho_2(\lambda)\phi(\lambda),
\end{align*}
which implies that
\begin{align*}
q_2(t_o)={\rm Res}_\lambda\rho_2(\lambda)\psi(t_o,\lambda).
\end{align*}
Therefore
\begin{align*}
\Omega(q_1(t_o),q_2(t_o))={\rm Res}_\lambda\rho_2(\lambda)\Omega(q_1(t_o),\psi(t_o,\lambda))=
\frac{\langle 0| e^{H(t_o)}\beta_1\beta_2g |0\rangle}{\langle 0| e^{H(t_o)}g |0\rangle}.
\end{align*}
\end{proof}
For the more general case of Proposition \ref{Prop:SEP-vac}, we have the following result.
\begin{proposition}\label{prop:Hf-fr}
Given $\beta\in V,\  g\in Sp_{\infty}$ with $\beta_i\beta_j=\beta_j\beta_i$
\begin{align*}
\frac{\langle 0|e^{H(t_o)}\beta_1\beta_2\cdots\beta_{n}g|0\rangle}
{\langle 0|e^{H(t_o)}g|0\rangle}
=\left\{
\begin{array}{ll}
    0, & \hbox{${\rm if}\ n\ {\rm odd}$,} \\
{\rm Haf}\left(\frac{\langle0|e^{H(t_o)} \beta_i\beta_jg|0\rangle}{\langle 0|e^{H(t_o)}g|0\rangle}\right)_{i,j=1}^n, & \hbox{${\rm if}\ n\ {\rm even}$.}
  \end{array}
\right.
\end{align*}
\end{proposition}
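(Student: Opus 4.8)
The plan is to prove this as a generalization of Proposition \ref{prop:Hf}, which is exactly the special case $g=1$ with no $e^{H(t_o)}$--insertion. The strategy is to reproduce verbatim the $S$--operator recursion behind \eqref{betaexpansion}, but now carry the two extra factors $e^{H(t_o)}$ and $g$ through the whole computation. The single new ingredient is that both $g$ and $e^{H(t_o)}$ lie in $Sp_\infty$: the former by hypothesis, and the latter because $H(t_o)$ is a linear combination of the quadratic generators $H_n$ in \eqref{hexpression}, so that $e^{H(t_o)}$ is an exponential of quadratics. Consequently $S(g\otimes g)=(g\otimes g)S$ and $S(e^{H(t_o)}\otimes e^{H(t_o)})=(e^{H(t_o)}\otimes e^{H(t_o)})S$, and combining the latter with $(\langle 0|\otimes\langle 0|)S=0$ shows that the dressed bra is annihilated by $S$, i.e. $(\langle 0|e^{H(t_o)}\otimes\langle 0|e^{H(t_o)})S=0$.

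First I would take the identity already obtained in the derivation of \eqref{betaexpansion},
\begin{align*}
S(\beta_2\cdots\beta_{2n}\otimes\beta_1)
=\sum_{l=2}^{2n}(\beta_2\cdots\widehat{\beta}_l\cdots\beta_{2n}\otimes\beta_1\beta_l)
+(\beta_2\cdots\beta_{2n}\otimes\beta_1)S
-(\beta_1\cdots\beta_{2n}\otimes1),
\end{align*}
and multiply it on the right by $g\otimes g$, using $S(g\otimes g)=(g\otimes g)S$ to move $S$ past the $g$'s in the middle term. I then pair the result with $\langle 0|e^{H(t_o)}\otimes\langle 0|e^{H(t_o)}$ on the left and $|0\rangle\otimes|0\rangle$ on the right. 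The left-hand side vanishes because $(\langle 0|e^{H(t_o)}\otimes\langle 0|e^{H(t_o)})S=0$, while on the right-hand side the middle term vanishes because $S(|0\rangle\otimes|0\rangle)=0$. What survives is precisely the recursion
\begin{align*}
\langle 0|e^{H(t_o)}\beta_1\cdots\beta_{2n}g|0\rangle\,\langle 0|e^{H(t_o)}g|0\rangle
=\sum_{l=2}^{2n}\langle 0|e^{H(t_o)}\beta_1\beta_l g|0\rangle\,\langle 0|e^{H(t_o)}\beta_2\cdots\widehat{\beta}_l\cdots\beta_{2n}g|0\rangle ,
\end{align*}
and dividing by $\langle 0|e^{H(t_o)}g|0\rangle^2$ turns this into the normalized recursion for $\Omega_{ij}=\langle 0|e^{H(t_o)}\beta_i\beta_j g|0\rangle/\langle 0|e^{H(t_o)}g|0\rangle$. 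Here the commutativity hypothesis $\beta_i\beta_j=\beta_j\beta_i$ is exactly what guarantees that $(\Omega_{ij})$ is symmetric, so its Hafnian is well defined and the order of removal of fields is immaterial.

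To finish, for $n$ odd I would argue $\langle 0|e^{H(t_o)}\beta_1\cdots\beta_n g|0\rangle=0$ directly from the $\mathbb{Z}_2$--grading by number of fields: since $e^{H(t_o)}g$ is a product of exponentials of quadratics it preserves parity, so $e^{H(t_o)}\beta_1\cdots\beta_n g|0\rangle$ is an odd vector and is annihilated by $\langle 0|$. For $n=2m$ even I would induct on $m$. The base case $m=1$ is the tautology that $\Omega_{12}$ equals the Hafnian of the symmetric $2\times 2$ matrix with off-diagonal entry $\Omega_{12}$; for the inductive step, the VEV of the $2m-2$ remaining fields in each summand equals, by the induction hypothesis, the Hafnian of the submatrix obtained by deleting rows and columns $1$ and $l$, so comparing the normalized recursion with the Hafnian expansion \eqref{Haf-2} (taken with fixed index $i=1$) yields $\langle 0|e^{H(t_o)}\beta_1\cdots\beta_{2n}g|0\rangle/\langle 0|e^{H(t_o)}g|0\rangle={\rm Haf}(\Omega_{ij})$.

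The only genuinely delicate point is the bookkeeping that lets $g$ and $e^{H(t_o)}$ pass through $S$: one must be sure that $e^{H(t_o)}$ really is an $Sp_\infty$ element, so that the $S$--commutation and the resulting $S$--annihilation of the dressed vacua are both legitimate, and one must check that inserting $g\otimes g$ on the right drops no boundary term. Once this is secured, the argument is a line-by-line lift of the $g=1$ computation behind \eqref{betaexpansion} and Proposition \ref{prop:Hf}, with the two auxiliary factors merely carried along.
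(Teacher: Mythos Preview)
Your proposal is correct and follows essentially the same approach as the paper: both lift the $S$--operator recursion behind \eqref{betaexpansion} and Proposition \ref{prop:Hf} by using the two annihilation facts $(\langle 0|e^{H(t_o)}\otimes\langle 0|e^{H(t_o)})S=0$ and $S(g|0\rangle\otimes g|0\rangle)=0$, the latter via $S(g\otimes g)=(g\otimes g)S$. You have simply spelled out in detail what the paper compresses into a two--line sketch.
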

\begin{proof}
Similar to Proposition \ref{prop:Hf}, one only need to know
\begin{align*}
\Big(\langle 0|e^{H(t_o)}\otimes\langle 0|e^{H(t_o)}\Big)S=0,
\end{align*}
and $S(g|0\rangle\otimes g|0\rangle)=0$, which can be derived by $S(g\otimes g)=(g\otimes g)S$.
\end{proof}
To prove Theorem \ref{Th:CW-van}, the following lemma is also needed.
\begin{lemma}\label{lemma:SEp-vac-2}
Given $\beta\in V, g\in Sp_{\infty}$,
\begin{align*}
\frac{\langle 0|e^{H(t_o)}e^{\frac{\beta^2}{2}} g|0\rangle}{\langle 0|e^{H(t_o)} g|0\rangle}=\left(1-\frac{\langle 0|e^{H(t_o)}\beta^2 g|0\rangle}{\langle 0|e^{H(t_o)} g|0\rangle}\right)^{-\frac{1}{2}}.
\end{align*}
\end{lemma}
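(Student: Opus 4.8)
The plan is to expand the exponential as a formal power series, $e^{\beta^2/2}=\sum_{k\ge0}\frac{1}{2^k k!}\beta^{2k}$, and evaluate the resulting string averages termwise. Writing $\Omega:=\frac{\langle 0|e^{H(t_o)}\beta^2 g|0\rangle}{\langle 0|e^{H(t_o)} g|0\rangle}$ for brevity, I would first divide numerator and denominator by $\langle 0|e^{H(t_o)} g|0\rangle$ and use linearity of the pairing to rewrite the left-hand side as $\sum_{k\ge0}\frac{1}{2^k k!}\,\frac{\langle 0|e^{H(t_o)}\beta^{2k} g|0\rangle}{\langle 0|e^{H(t_o)} g|0\rangle}$. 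Note that only even powers $\beta^{2k}$ occur, so the vanishing-for-odd-length branch of Proposition \ref{prop:Hf-fr} is never triggered.

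Next, since $\beta$ trivially commutes with itself, Proposition \ref{prop:Hf-fr} applies to the string $\beta^{2k}=\beta\beta\cdots\beta$ of length $2k$ and gives $\frac{\langle 0|e^{H(t_o)}\beta^{2k} g|0\rangle}{\langle 0|e^{H(t_o)} g|0\rangle}={\rm Haf}(M_{2k})$, where $M_{2k}$ is the $2k\times 2k$ symmetric matrix every one of whose entries equals $\Omega$ (because each factor $\beta_i\beta_j$ is literally $\beta^2$). The central step is therefore the Hafnian of a constant matrix: from the definition \eqref{hafdefinition}, ${\rm Haf}(M_{2k})$ sums over the perfect matchings of $2k$ points, each matching contributing the product of $k$ entries, i.e. $\Omega^k$; since there are $(2k-1)!!=\frac{(2k)!}{2^k k!}$ such matchings, one obtains ${\rm Haf}(M_{2k})=(2k-1)!!\,\Omega^k=\frac{(2k)!}{2^k k!}\Omega^k$.

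Substituting this back, the left-hand side becomes $\sum_{k\ge0}\frac{1}{2^k k!}\cdot\frac{(2k)!}{2^k k!}\,\Omega^k=\sum_{k\ge0}\frac{(2k)!}{4^k (k!)^2}\,\Omega^k$, and I would recognize this as the central-binomial series $\sum_{k\ge0}\binom{2k}{k}\frac{\Omega^k}{4^k}=(1-\Omega)^{-\frac12}$ (equivalently, the Taylor expansion of $(1-x)^{-1/2}$ with the identification $\binom{-1/2}{k}(-1)^k=\frac{(2k)!}{4^k(k!)^2}$). This yields exactly $\left(1-\Omega\right)^{-\frac12}$, which is the claimed identity.

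I expect the one genuinely delicate point to be the Hafnian-of-a-constant-matrix computation: one must check that every perfect matching contributes precisely $\Omega^k$ and that there are exactly $(2k-1)!!$ of them, so that the combinatorial factor $\frac{(2k)!}{2^k k!}$ pairs with the $\frac{1}{2^k k!}$ coming from the exponential to produce the central binomial coefficients. The remaining ingredients — the series expansion, the termwise use of Proposition \ref{prop:Hf-fr}, and the final resummation — are formal and routine, with convergence understood in the sense of formal power series in $\Omega$.
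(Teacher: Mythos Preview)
Your proposal is correct and follows essentially the same route as the paper's own proof: expand $e^{\beta^2/2}$ as $\sum_{m\ge0}\frac{1}{2^m m!}\beta^{2m}$, invoke Proposition~\ref{prop:Hf-fr} to identify each normalized expectation $\langle 0|e^{H(t_o)}\beta^{2m}g|0\rangle/\langle 0|e^{H(t_o)}g|0\rangle$ with the Hafnian of the constant $2m\times 2m$ matrix, observe that this Hafnian equals $(2m-1)!!\,\Omega^m$, and then resum using the Taylor expansion $(1-\lambda)^{-1/2}=\sum_{m\ge0}\frac{(2m-1)!!}{2^m m!}\lambda^m$. The only cosmetic difference is that you repackage the coefficients via central binomial numbers $\binom{2k}{k}/4^k$, whereas the paper quotes the double-factorial form of the Taylor series directly; these are the same numbers.
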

\begin{proof}
Firstly by \eqref{hafdefinition} and Proposition \ref{prop:Hf-fr}, we can know
\begin{align*}
\frac{\langle 0|e^{H(t_o)}\beta^{2m} g|0\rangle}{\langle 0|e^{H(t_o)} g|0\rangle}=(2m-1)!!\left(\frac{\langle 0|e^{H(t_o)}\beta^2 g|0\rangle}{\langle 0|e^{H(t_o)} g|0\rangle}\right)^m.
\end{align*}
Then by $(1-\lambda)^{-\frac{1}{2}}=\sum^{\infty}_{m=0}\frac{(2m-1)!!}{m!2^m}\lambda^m$,
one can find
\begin{align*}
\frac{\langle 0|e^{H(t_o)}e^{\frac{\beta^2}{2}} g|0\rangle}{\langle 0|e^{H(t_o)} g|0\rangle}
=\sum_{m=0}^{\infty}\frac{1}{m!2^m}
\frac{\langle 0|e^{H(t_o)}\beta^{2m} g|0\rangle}{\langle 0|e^{H(t_o)} g|0\rangle}
=\left(1-\frac{\langle 0|e^{H(t_o)}\beta^2 g|0\rangle}{\langle 0|e^{H(t_o)} g|0\rangle}\right)^{-\frac{1}{2}}.
\end{align*}
\end{proof}

After the preparation above, now let us prove Theorem \ref{Th:CW-van}.
\begin{proof}
Firstly when $n=1$, by Lemma \ref{lemma:SEp-vac-2} we can know
\begin{align*}
\langle 0|e^{H(t_o)}e^{\frac{\beta^2}{2}} |0\rangle^{-1}
=\left(1-\langle 0|e^{H(t_o)}\beta^2|0\rangle\right)^{\frac{1}{2}},
\end{align*}
while by Proposition \ref{Prop:SEP-vac}, we can know that
\begin{align*}
\langle 0|e^{H(t_o)}\beta^2 |0\rangle
=\Omega(q_0(t_o),q_0(t_o)),
\end{align*}
where $q_0(t_o)=\langle 1|e^{H(t_o)}\beta |0\rangle$ is the CKP eigenfunction corresponding CKP tau function $\tau_{{\rm CW}}=1$. Thus
$$\langle 0|e^{H(t_o)}e^{\frac{\beta^2}{2}} |0\rangle^{-1}=
\Big(1-\Omega(q_0(t_o),q_0(t_o))\Big)^{\frac{1}{2}},$$
is CKP tau function by Theorem \ref{Th:darboux-bili}. Assume we have prove $\langle 0|e^{H(t_o)}g |0\rangle^{-1}$ is CKP tau function with $$g=e^{\frac{{\beta}_{n-1}^2}{2}}\cdots
e^{\frac{{\beta}_{1}^2}{2}},$$
then by Proposition \ref{Prop:SEP-vac} and Lemma \ref{lemma:SEp-vac-2},
\begin{align}\label{vac-eH-g}
\langle 0|e^{H(t_o)}e^{\frac{\beta^2}{2}}g |0\rangle^{-1}
=\Big(1-\Omega(q_0(t_o),q_0(t_o))\Big)^{\frac{1}{2}}
\langle 0|e^{H(t_o)}g |0\rangle^{-1},
\end{align}
is still a CKP tau function due to Theorem \ref{Th:darboux-bili}, where
$q(t_o)=\frac{2\langle 1|e^{H(t_o)}\beta g|0\rangle}{\langle 0|e^{H(t_o)}g|0\rangle}$ is the CKP eigenfunction corresponding to CKP tau function $\langle 0|e^{H(t_o)} g|0\rangle^{-1}$.
\end{proof}
As for Corollary \ref{coro:CKP-tau}, the corresponding proof is given as follows.
\begin{proof}
Firstly by Theorem \ref{Th:darboux-bili}, \eqref{vac-eH-g} and Remark \ref{remark:taucw-Darboux}, we can know that
$$\tau_{\rm CW}^{\{1\}}(t_o)=\langle0|e^{H(t_o)}e^{\frac{\beta_{1}}{2}}g|0\rangle^{-\frac{1}{2}}
=\Big(1-\Omega(q_{1}(t_o),q_{1}(t_o))^{\frac{1}{2}}\Big)\tau_{\rm CW}(t_o),$$
is the transformed CKP tau function under CKP Darboux transformation $T[q_{1}]$.
Next by Remark \ref{remark:taucw-Darboux}, the transformed wave function $\psi^{\{1\}}(t_o,\lambda)$ will be
$$\psi^{\{1\}}(t_o,\lambda)=\frac{2\langle1|e^{H(t_o)}\phi(\lambda)e^{\frac{\beta_{1}^2}{2}}g|0\rangle}
{\langle0|e^{H(t_o)}e^{\frac{\beta_{1}^2}{2}}g|0\rangle}=T[q_1](\psi(t_o,\lambda)).$$
Note that there exists $\rho(\lambda)$ such that $\beta_2={\rm Res}_\lambda\rho(\lambda)\phi(\lambda)$, thus
$q_2(t_o)={\rm Res}_\lambda\rho(\lambda)\psi(t_o,\lambda)$. Finally
\begin{align*}
\frac{2\langle0|e^{H(t_o)}\beta_2e^{\frac{\beta_{1}^2}{2}}g|0\rangle}
{\langle0|e^{H(t_o)}e^{\frac{\beta_{1}^2}{2}}g|0\rangle}
={\rm Res}_\lambda\rho(\lambda)\frac{2\langle0|e^{H(t_o)}\phi(\lambda)e^{\frac{\beta_{1}^2}{2}}g|0\rangle}
{\langle0|e^{H(t_o)}e^{\frac{\beta_{1}^2}{2}}g|0\rangle}=T[q_1](q_2).
\end{align*}
\end{proof}

\subsection{Proof of Theorem \ref{Them:k-darboux}}
\begin{proof}
According to Corollary \ref{coro:CKP-tau}, we can know that
$$\langle 1|e^{H(t_o)}e^{\frac{\beta_n^2}{2}}
e^{\frac{\beta_{n-1}^2}{2}}\cdots
e^{\frac{\beta_1^2}{2}}g|0\rangle^{-1},$$
is the transformed CKP tau function under the CKP Darboux transformation
\begin{align*}
T[q_n^{\{n-1\}}]\cdots T[q_2^{\{1\}}]T[q_1],
\end{align*}
starting from $\langle 0|e^{H(t_o)}g|0\rangle^{-1}$, where
\begin{align*}
q_i=\frac{2\langle0|e^{H(t_o)}\beta_ig|0\rangle}{\langle0|e^{H(t_o)}g|0\rangle}.
\end{align*}
On the other hand by Lemma \ref{lemma:SEp-vac-2}.
\begin{align*}
\langle0|e^{H(t_o)}e^{\frac{\beta_n^2}{2}}e^{\frac{\beta_{n-1}^2}{2}}\cdots e^{\frac{\beta_1^2}{2}}g|0\rangle
=\prod_{k=1}^n\left(1-\frac{\langle0|e^{H(t_o)}\beta_k^2e^{\frac{\beta_{k-1}^2}{2}}\cdots e^{\frac{\beta_{1}^2}{2}}g|0\rangle}
{\langle0|e^{H(t_o)}e^{\frac{\beta_{k-1}^2}{2}}\cdots e^{\frac{\beta_{1}^2}{2}}g|0\rangle}\right)^{-\frac{1}{2}}
\langle0|e^{H(t_o)}g|0\rangle,
\end{align*}
while according to Proposition \ref{Prop:SEP-vac} and Corollary \ref{coro:CKP-tau}
\begin{align*}
&\frac{\langle0|e^{H(t_o)}\beta_k^2e^{\frac{\beta_{k-1}^2}{2}}\cdots e^{\frac{\beta_1^2}{2}}g|0\rangle}
{\langle0|e^{H(t_o)}e^{\frac{\beta_{k-1}^2}{2}}\cdots e^{\frac{\beta_1^2}{2}}g|0\rangle}-1
=\Omega(q_k^{\{k-1\}},q_k^{\{k-1\}})-1\\
=&\left(\Omega(q_k^{\{k-2\}},q_k^{\{k-2\}})-1\right)
-\left(\Omega(q_{k-1}^{\{k-2\}},q_{k-1}^{\{k-2\}})-1\right)^{-1}
\cdot\Omega(q_{k-1}^{\{k-2\}},q_k^{\{k-2\}})\cdot\Omega(q_k^{\{k-2\}},q_{k-1}^{\{k-2\}}).
\end{align*}
Therefore similar to proof of Corollary \ref{coro:KP-Darboux-n},
\begin{align*}
&\left(\Omega(q_n^{\{n-1\}},q_n^{\{n-1\}})-1\right)\cdots
\left(\Omega(q_2^{\{1\}},q_2^{\{1\}})-1\right)
\Big(\Omega(q_1,q_1)-1\Big)\\
&=\left|\begin{array}{cccc}
\Omega_{11}-1  &\Omega_{12}&\cdots &\Omega_{1n}\\
\Omega_{21} &\Omega_{22}-1&\cdots &\Omega_{2n}\\
\cdots & \cdots &\cdots &\cdots\\
\Omega_{n1}&\Omega_{n2}&\cdots &\Omega_{nn}-1
\end{array}
\right|,
\end{align*}
where $$\Omega_{ij}=\Omega(q_i,q_j)=\frac{\langle0|e^{H(t_o)}\beta_i\beta_jg|0\rangle}
{\langle0|e^{H(t_o)}g|0\rangle}.$$
Thus we can prove Theorem \ref{Them:k-darboux}.
\end{proof}

\noindent{\bf Example:}\\
Let us compute $\langle0|e^{H(t_o)}e^{\frac{a\phi(p)\phi(q)}{2}}|0\rangle$. For the usual way, we firstly need to expand $$e^{\frac{a\phi(p)\phi(q)}{2}}=\sum_{k=0}^{\infty}\frac{1}{k!}\left(\frac{a\phi(p)\phi(q)}{2}\right)^k,$$
 and then compute $\langle0|(\phi(p)\phi(q))^k|0\rangle$ by Proposition \ref{prop:Hf}, that is,
\begin{align*}
\langle0|\big(\phi(p)\phi(q)\big)^2|0\rangle=&\langle\phi^2(p)\rangle\langle\phi^2(q)\rangle+2\langle\phi(p)\phi(q)\rangle^2
=\frac{1}{4pq}+\frac{2}{(p+q)^2},\\
\langle0|\big(\phi(p)\phi(q)\big)^3|0\rangle=&9\langle\phi^2(p)\rangle\langle\phi(p)\phi(q)\rangle\langle\phi^2(q)\rangle+6\langle\phi(p)\phi(q)\rangle^3
=\frac{9}{4pq(p+q)}+\frac{6}{(p+q)^3},\\
\langle0|\big(\phi(p)\phi(q)\big)^4|0\rangle=&9\langle\phi^2(p)\rangle^2\langle\phi^2(q)\rangle^2+72\langle\phi^2(p)\rangle\langle\phi(p)\phi(q)\rangle^2\langle\phi^2(q)\rangle
+24\langle\phi(p)\phi(q)\rangle^4\\
=&\frac{9}{16p^2q^2}+\frac{18}{pq(p+q)^2}+\frac{24}{(p+q)^4},\\
\cdots\ \ \cdots& \ \ \cdots
\end{align*}
Note that it is quite difficult to find a general formula for $\langle0|(\phi(p)\phi(q))^k|0\rangle$. But we can compute $\langle0|e^{H(t_o)}e^{\frac{a\phi(p)\phi(q)}{2}}|0\rangle$ easily according to Theorem \ref{Them:k-darboux}. In fact due to
\begin{align*}
\phi(p)\phi(q)-\phi(q)\phi(p)=\delta(p,-q).
\end{align*}
Thus when $p\neq -q$, we can believe $\phi(p)$ commutes with $\phi(q)$.
If set $$\beta_1=\frac{\sqrt{-a}(\phi(p)-\phi(q))}{2}, \beta_2=\frac{\sqrt{a}(\phi(p)+\phi(q))}{2},$$
then $e^{\frac{a\phi(p)\phi(q)}{2}}
=e^{\frac{\beta_2^2}{2}}e^{\frac{\beta_1^2}{2}}$. Note that
\begin{align*}
&\Omega_{11}=\langle0|e^{H(t_o)}\beta_1^2|0\rangle
=-\frac{a}{4}\left(\frac{e^{2\xi(t_o,p)}}{2p}+
\frac{e^{2\xi(t_o,q)}}{2q}
-\frac{2e^{\xi(t_o,p)+\xi(t_o,q)}}{p+q}
\right),\\
&\Omega_{22}=\langle0|e^{H(t_o)}\beta_2^2|0\rangle
=\frac{a}{4}\left(\frac{e^{2\xi(t_o,p)}}{2p}
+\frac{e^{2\xi(t_o,q)}}{2q}
+\frac{2e^{\xi(t_o,p)+\xi(t_o,q)}}{p+q}
\right),\\
&\Omega_{12}=\langle0|e^{H(t_o)}\beta_1\beta_2|0\rangle
=\frac{a\sqrt{-1}}{4}\left(\frac{e^{2\xi(t_o,p)}}{2p}
-\frac{e^{2\xi(t_o,q)}}{2q}
\right)=\Omega_{21}.
\end{align*}
Finally by Theorem \ref{Them:k-darboux},
\begin{align*}
\langle0|e^{H(t_o)}e^{\frac{a\phi(p)\phi(q)}{2}}|0\rangle
=\left(1-\frac{ae^{\xi(t_o,p)+\xi(t_o,q)}}{p+q}
-\frac{a^2(p-q)^2}{16pq(p+q)^2}e^{2\xi(t_o,p)+2\xi(t_o,q)}
\right)^{-\frac{1}{2}}.
\end{align*}

\section{KdV and CKP hierarchies}
In this section, we firstly show that KdV hierarchy is the 2--reduction of CKP hierarchy, and then construct solutions of KdV hierarchy by vacuum expectation value of bosonic fields.

\subsection{KdV hierarchy is 2--reduction of CKP hierarchy}
KdV hierarchy is defined by the following Lax equation
\begin{align*}
\mathfrak{L}_{t_n}=[\mathfrak{B}_n,\mathfrak{L}],\ \mathfrak{B}_n=(\mathfrak{L}^{\frac{n}{2}})_{\geq 0},\  n=1,3,5,\cdots,
\end{align*}
with Lax operator
\begin{align}\label{KdVLax-operator}
\mathfrak{L}=\pa^2+u.
\end{align}
Because of \eqref{KdVLax-operator}, we can find
\begin{align*}
(\mathfrak{L}^2)_{<0}=0,\ \ \mathfrak{L}^*=\mathfrak{L}.
\end{align*}
Then we will prove that KdV hierarchy is just 2--reduction of CKP hierarchy.

Firstly if assume
$$\mathcal{A}=L_{t_n}-[(L^n)_{\geq 0},L]=\sum_{i=1}^{\infty}a_i\pa^{-i},$$
then
\begin{align*}
(L_{t_n})^2=[(L^n)_{\geq 0},L^2]+\mathcal{A}L+L\mathcal{A}.
\end{align*}
Since $L^2=\mathfrak{L}$ satisfies the KdV hierarchy, thus
\begin{align*}
\mathcal{A}L+L\mathcal{A}=2a_1+\mathcal{O}(\pa^{-1})=0,
\end{align*}
which implies that $a_1=0$.
If assume that we have proved $\mathcal{A}=a_k\pa^{-k}+\sum\limits_{i>k}^{\infty}a_i\pa^{-i}$, then by
\begin{align*}
\mathcal{A}L+L\mathcal{A}=2a_{k}\pa^{-k+1}+\mathcal{O}(\pa^{-k})=0,
\end{align*}
we can obtain $a_{k}=0$. Thus $\mathcal{A}=0$, that is $L_{t_n}=[(L^n)_{\geq 0},L]$.

Next let us assume
\begin{align*}
\mathcal{B}=L^*+L=\sum_{j=2}^\infty b_j\pa^{-j}.
\end{align*}
Thus
\begin{align*}
(L^*)^2=L^2-L\mathcal{B}-\mathcal{B}L+\mathcal{B}^2.
\end{align*}
By $\mathfrak{L}^*=\mathfrak{L}$, that is $(L^*)^2=L^2$, we can obtain
\begin{align}\label{LB-BL}
L\mathcal{B}+\mathcal{B}L=\mathcal{B}^2,
\end{align}
which tells that
\begin{align*}
2b_2\pa^{-1}+\mathcal{O}(\pa^{-2})=b_2^2\pa^{-2}+\mathcal{O}(\pa^{-3})\Rightarrow b_2=0.
\end{align*}
If we have proved $b_l=0(2\leq l \leq k-1)$, the \eqref{LB-BL} will be
\begin{align*}
2b_k\pa^{-k+1}+\mathcal{O}(\pa^{-k})=b_k^2\pa^{-2k}+\mathcal{O}(\pa^{-2k-1}),\ \ k\geq2.
\end{align*}
Note that $-k+1\geq 2k$, thus $b_k=0$, which implies that $\mathcal{B}=0$, that is  $L^*=-L$.

By now, we have proved the statement that KdV hierarchy is the 2--reduction of CKP hierarchy \cite{chang2013,Xie-arXiv} from the Lax representation, which is just the first part of Theorem \ref{Th:SEP-vac-2}. Note that the condition $\mathfrak{L}^*=\mathfrak{L}$ for KdV Lax operator $\mathfrak{L}$ is the key in the proof.

As a sub--hierarchy of KP hierarchy. KdV tau function
\begin{align*}
\tau_{{\rm KdV}}(t_o)=\tau_{{\rm KP}}(\dot{t}),
\end{align*}
where $\tau_{{\rm KP}}$ does not depend on $t_e=(t_2,t_4,\cdots)$. Note that the adjoint KdV wave function
\begin{align*}
\psi^*_{{\rm KdV}}(t_o,\lambda)=\frac{\tau_{{\rm KdV}}(\dot{t}+[\lambda^{-1}])}{\tau_{{\rm KdV}}(\dot{t})}e^{-\xi(t_o,\lambda)}=\frac{\tau_{{\rm KP}}(\dot{t}+[-\lambda^{-1}])}{\tau_{{\rm KP}}(\dot{t})}e^{-\xi(t_o,\lambda)}
=\psi_{{\rm KdV}}(t_o,-\lambda).
\end{align*}
Thus the bilinear equation of KdV hierarchy
\begin{align*}
&{\rm Res}_\lambda
\psi_{{\rm KdV}}(t_o,\lambda)\psi^*_{{\rm KdV}}(t'_o,\lambda)=0,\\
&{\rm Res}_\lambda \lambda^2
\psi_{{\rm KdV}}(t_o,\lambda)\psi^*_{{\rm KdV}}(t'_o,\lambda)=0,
\end{align*}
one can written into \cite{chang2013,vandeleur-CKP-2023}
\begin{align*}
&{\rm Res}_\lambda
\psi_{{\rm KdV}}(t_o,\lambda)\psi_{{\rm KdV}}(t'_o,-\lambda)=0,\\
&{\rm Res}_\lambda \lambda^2
\psi_{{\rm KdV}}(t_o,\lambda)\psi_{{\rm KdV}}(t'_o,-\lambda)=0,
\end{align*}
which shows again that the KdV hierarchy is the 2--reduction of the CKP hierarchy.

\subsection{KdV solutions by free boson}
Let us construct the solutions of KdV hierarchy from the aspect of CKP hierarchy. For this, introduce the following operator $S_2$ on $\mathcal{F}\otimes \mathcal{F}$,
\begin{align*}
S_2={\rm Res}_\lambda \lambda^2
\phi(\lambda)\otimes\phi(-\lambda)
=\sum_{k\in \mathbb{Z}+\frac{1}{2}}(-1)^{k+\frac{1}{2}}\phi_k\otimes\phi_{-k-2}.
\end{align*}
Then for $g\in Sp_{\infty}$, if
\begin{align}\label{s2-g}
[S_2,g\otimes g]=0,
\end{align}
then $\tau=g|0\rangle$ satisfies
\begin{align*}
S(\tau\otimes \tau)=S_2(\tau\otimes\tau)=0.
\end{align*}
Further if apply $\langle 1| e^{H(t_o)}\otimes \langle 1|e^{H(t_o)}$ on above relations and use $\psi(t_o,\lambda)=\frac{2\langle 1|e^{H(t_o)}\phi(\lambda) g|0\rangle}{\langle 0|e^{H(t_o)}g|0\rangle}$, then we can obtain
\begin{align*}
&{\rm Res}_\lambda
\psi(t_o,\lambda)\psi(t'_o,-\lambda)=0,\\
&{\rm Res}_\lambda \lambda^2
\psi(t_o,\lambda)\psi(t'_o,-\lambda)=0,
\end{align*}
which is just the 2--reduction of CKP hierarchy \cite{chang2013,vandeleur-CKP-2023}. Therefore for $g\in Sp'_\infty$ satisfying \eqref{s2-g},
\begin{align*}
\tau_{{\rm CW}}(t_o)=\langle 0|e^{H(t_o)}g|0\rangle^{-1},
\end{align*}
will be the solution of the KdV hierarchy as the CKP hierarchy. If assume $g=\sum_{i\in \mathbb{Z}+1/2} a_{ij}\phi_i\phi_j$ satisfying \begin{align*}
[S_2,1\otimes g+g\otimes 1]=0,
\end{align*}
which implies that
\begin{align}\label{aij-aji}
a_{ij}+a_{ji}=a_{i+2,j-2}+a_{j-2,i+2},
\end{align}
then $g=\sum_{i\in \mathbb{Z}+1/2} a_{ij}\phi_i\phi_j$ will satisfy \eqref{s2-g}, thus we complete the proof of Theorem \ref{Th:SEP-vac-2}.\\

\noindent{\bf Example:}\\
If we choose $a_{ij}=\frac{1}{2}p^{i+j-1}$, then $a_{ij}$ satisfies \eqref{aij-aji} and $\sum_{i,j\in \mathbb{Z}+1/2} a_{ij}\phi_i\phi_j=\frac{1}{2}\phi^2(p)$. Therefore
\begin{align*}
\tau_{{\rm CW}}(t_o)=\langle 0|e^{H(t_o)}e^{\frac{1}{2}\phi^2(p)}|0\rangle^{-1}
=\left(1-\frac{1}{2p}e^{2\xi(t_o,p)}\right)^{\frac{1}{2}},
\end{align*}
is the solution of KdV hierarchy as CKP hierarchy. The more general case can be considered by Theorem \ref{Them:k-darboux}
\begin{align*}
\tau_{{\rm CW}}(t_o)&=
\langle 0| e^{H(t_o)}e^{\frac{\phi^2(p_n)}{2}}
e^{\frac{\phi^2(p_{n-1})}{2}}\cdots e^{\frac{\phi^2(p_1)}{2}} |0\rangle^{-1}\\
&=(-1)^{-\frac{n}{2}}\begin{vmatrix}
A_{11}-1 &A_{12}&\cdots&A_{1n}\\
A_{21}&A_{22}-1&\cdots&A_{2n}\\
\cdots&\cdots&\cdots&\cdots\\
A_{n1}&A_{n2}&\cdots&A_{nn}-1
\end{vmatrix}^{-\frac{1}{2}},
\end{align*}
where
\begin{align*}
A_{ij}=\langle 0|e^{H(t_o)}\phi(p_i)\phi(p_j)|0\rangle
=\frac{2e^{\xi(t_o,p_i)+\xi(t_o,p_j)}}{p_i+p_j}.
\end{align*}

\section{Conclusions and Discussion}
The major topic of this paper is the CKP tau function. Firstly, the transformed CKP tau function under the CKP Darboux transformation is showed to still satisfy the CKP bilinear equation via wave functions. Based upon this, the inverse of vacuum expectation value of the exponential of certain bosonic fields, is showed to be the CKP tau function given by Chang and Wu in \cite{chang2013}, which is the major theorem of this paper. Next by using the result of successive applications of CKP Darboux transformation, we give a formula to compute the vacuum expectation value of some bosonic fields. Finally, KdV hierarchy is proved to be the 2--reduction of CKP hierarchy and the corresponding solutions are constructed by vacuum expectation value of bosonic fields.

We believe the results here are very basic in the study of CKP hierarchy, which provide the theoretical support that we can investigate CKP hierarchy by free bosons.  For further discussion, we can use free bosons to investigate bilinear equations in successive applications of CKP Darboux transformations, and study the symmetries of CKP hierarchy. What's more, other bosonizations \cite{Anguelova2017,vandeleur2023} of free bosons are also very interesting, for example the mutli--component generalization of CKP \cite{vandeleur2023,Zabrodin2023}. Also some symmetric functions and other integrable models related with CKP hierarchy may be considered by using free bosons.\\

\section{Declarations}
\noindent{\bf Acknowledgements}: \\
We thank Profs. Jingsong He (SZU), Zhiwei Wu (SYSU), Shihao Li (SCU) and Dr. Yuancheng Xie (PKU) for their comments and assistance.
This work is supported by National Natural Science Foundation of China
(Grant Nos. 12171472 and 12261072) and ``Qinglan Project" of Jiangsu Universities.\\

\noindent{\bf Data availability}: \\
Date sharing is not applicable to this article as no new data were created or analyzed in this study.\\

\noindent{\bf Conflicts of interests/Competing interests}: \\
The authors have no conflicts of interests/competing interests to declare that are relevant to the content of this article.


\begin{thebibliography}{99}


\bibitem{Alexandrov2017}
A. Alexandrov,
KdV solves BKP,
Proc. Natl. Acad. Sci. USA 118 (2021) 2101917118.


\bibitem{Alexandrov2023}
A. Alexandrov,
Generalized Brezin--Gross--Witten tau--function as a hypergeometric solution of the BKP hierarchy,
Adv. Math. 412 (2023) 108809.


\bibitem{Anguelova2017}
I. I. Anguelova,
The second bosonization of the CKP hierarchy,
J. Math. Phys. 58 (2017) 071707.


\bibitem{aratyn1998}
H. Aratyn, E. Nissimov and S. Pacheva,
Method of squared eigenfunction potentials in integrable hierarchies of KP type,
Comm. Math. Phys. 193 (1998) 493--525.

\bibitem{Arthamonov2023}
S. Arthamonov, J. Harnad and J. Hurtubise,
Lagrangian Grassmannians, CKP hierarchy and hyperdeterminantal relations,
Comm. Math. Phys. 401 (2023) 1337--1381.

\bibitem{chang2013}
L. Chang and C. Z. Wu,
Tau function of the CKP hierarchy and nonlinearizable Virasoro symmetries,
Nonlinearity 26 (2013) 2577--2596.

\bibitem{chang-hu-li2018}
X. K. Chang, X. B. Hu and S. H. Li,
Degasperis--Procesi peakon dynamical system and finite Toda lattice of CKP type,
Nonlinearity 31 (2018) 4746--4775.

\bibitem{chau1992}
L. L. Chau, J. C. Shaw and H. C. Yen,
Solving the KP hierarchy by gauge transformations, Comm. Math. Phys. 149 (1992) 263--278.

\bibitem{cheng2014}
J. P. Cheng and J. S. He,
The ``ghost" symmetry in the CKP hierarchy,
J. Geom. Phys. 80 (2014) 49--57.



\bibitem{Date1981JPAJ}
E. Date, M. Jimbo, M. Kashiwara and T. Miwa,
Transformation groups for soliton equations. VI. KP hierarchies of orthogonal and symplectic type,
J. Phys. Soc. Japan. 50 (1981) 3813--3818.


\bibitem{Date1983WorldSci}
E. Date, M. Jimbo, M. Kashiwara and T. Miwa,
Transformation groups for soliton equations, In: {\it Nonlinear integrable systems--classical theory and quantum theory},  World Scientific, Singapore, 1983, pp. 39--119.



\bibitem{Harnad2021}
J. Harnad and F. Balogh,
Tau functions and their applications,
Cambridge University Press, Cambridge, 2021.

\bibitem{he2002}
J. S. He, Y. S. Li and Y. Cheng,
The determinant representation of the gauge transformation operators,
Chinese Ann. Math. Ser. B 23 (2002) 475--486.

\bibitem{HeJMP2007}
J. S. He, Z. W. Wu and Y. Cheng,
Gauge transformations for the constrained CKP and BKP hierarchies,
J. Math. Phys. 48 (2007) 113519.


\bibitem{hirai}
H. Hirai and  H. Namba,
Shortest (A+B)--path packing via Hafnian,
Algorithmica 80 (2018) 2478--2491.


\bibitem{kac1989}
V. G. Kac and J. van de Leur,
Super boson--fermion correspondence of type B, in \emph{Infinite--dimensional Lie algebras and groups}, World Scientific, Teaneck, NJ, 1989, pp. 369--406.



\bibitem{vandeleur-CKP-2023}
V. G. Kac and  J. van de Leur,
The generalized Giambelli formula and polynomial KP and CKP tau--functions,
J. Phys. A 56 (2023) 185203.


\bibitem{vandeleur2023}
V. G. Kac and  J. van de Leur,
Multicomponent KP type hierarchies and their reductions, associated to conjugacy classes of Weyl groups of classical Lie algebras,
J. Math. Phys. 64 (2023) 091702.


\bibitem{Krichever-CMP-2021}
I. Krichever and A. Zabrodin,
Kadomtsev--Petviashvili turning points and CKP hierarchy,
Comm. Math. Phys. 386 (2021) 1643--1683.

\bibitem{Krichever-Zabrodin-LMP-2021}
I. Krichever and A. Zabrodin,
Constrained Toda hierarchy and turning points of the Ruijsenaars--Schneider model,
Lett. Math. Phys. 112 (2022) Paper No.23.

\bibitem{lili2019}
C. X. Li and S. H. Li,
The Cauchy two--matrix model, C--Toda lattice and CKP hierarchy,
J. Nonlinear Sci. 29 (2019), 3--27.

\bibitem{Liuxiaobo2022}
X. B. Liu and C. L. Yang,
Q--polynomial expansion for Brezin--Gross--Witten tau--function,
Adv. Math. 404 (2022) 108456.


\bibitem{loriswillox-1997}
I. Loris and R. Willox,
Bilinear form and solutions of the $k$--constrained Kadomtsev--Petviashvili hierarchy,
Inverse Problems 13 (1997) 411--420.


\bibitem{luque}
J. G. Luque and J. Y. Thibon,
Pfaffian and Hafnian identities in shuffle algebras,
Adv. in Appl. Math. 29 (2002) 620--646.


\bibitem{nimmojpa}
J. J. C. Nimmo, Hall--Littlewood symmetric functions and the BKP equation, J. Phys. A: Math. Gen. 23 (1990) 751--760.


\bibitem{Oevelpa1993}
W. Oevel,
Darboux theorems and Wronskian formulas for integrable system I: constrained KP flows,
Phys. A 195 (1993) 533--576.


\bibitem{satorims}
M. Sato,
Soliton equations as dynamical systems on a infinite
dimensional Grassmann manifolds,
RIMS Kokyuroku 439 (1981) 30--46.



\bibitem{vandeleur2012}
J. van de Leur, A. Y. Orlov and T. Shiota,
CKP hierarchy, bosonic tau function and bosonization formulae,
SIGMA Symmetry Integrability Geom. Methods Appl. 8 (2012) Paper No.036.


\bibitem{willox2004}
R. Willox and J. Satsuma,
Sato theory and transformation groups. A unified approach to integrable systems,
In: {\it Discrete integrable systems},
Springer--Verlag, Berlin, 2004, pp. 17--55.

\bibitem{willox1998}
R. Willox, T. Tokihiro, I. Loris and J. Satsuma,
The fermionic approach to Darboux transformations,
Inverse Problems 14 (1998) 745--762.



\bibitem{Xie-arXiv}
Y. C. Xie,
From the B--Toda to the BKP hierarchy,
arXiv: 2210.03307.


\bibitem{yang2021jmp}
Y. Yang, L. M. Geng and J. P. Cheng,
CKP hierarchy and free bosons,
J. Math. Phys. 62 (2021) 083506.

\bibitem{yang2022PhyD}
Y. Yang and J. P. Cheng,
Bilinear equations in Darboux transformations by boson--fermion correspondence,
Phys. D 433 (2022) 133198.

\bibitem{youbkp}
Y. C. You,
Polynomial solutions of the BKP hierarchy and projective representations of symmetric groups, in {\it Infinite--dimensional Lie algebras and groups},
World Scientific, Teaneck, NJ, 1989, pp. 449--464.


\bibitem{Zabrodin2021}
A. V. Zabrodin,
Kadomtsev--Petviashvili hierarchies of types B and C,
Theoret. Math. Phys. 208 (2021) 865--885.


\bibitem{Zabrodin2023}
A. V. Zabrodin,
Tau--function of the multi--component CKP hierarchy,
Math. Phys. Anal. Geom. 27 (2024) Paper No.1.


\end{thebibliography}
\end{document}